\definecolor{DarkRed}{rgb}{0.5,0.1,0.1}
\definecolor{DarkBlue}{rgb}{0.1,0.1,0.5}
\def\BState{\State\hskip-\ALG@thistlm}
\newtheorem{theorem}{Theorem}
\newtheorem{lemma}{Lemma}[section]
\newtheorem{proposition}[lemma]{Proposition}
\newtheorem{corollary}[theorem]{Corollary}
\newtheorem{claim}[lemma]{Claim}
\newtheorem{fact}[lemma]{Fact}
\newtheorem{definition}{Definition}
\newtheorem{problem}{Problem}
\newtheorem{remark}[lemma]{Remark}
\newtheorem*{claim*}{Claim}
\newtheorem*{proposition*}{Proposition}
\newtheorem*{lemma*}{Lemma}
\newtheorem*{problem*}{Problem}
\renewcommand{\qed}{\nobreak \ifvmode \relax \else
      \ifdim\lastskip<1.5em \hskip-\lastskip
      \hskip1.5em plus0em minus0.5em \fi \nobreak
      \vrule height0.75em width0.5em depth0.25em\fi}
\newcommand{\ourinfo}[1]{Department of Computer and Information Science, University of Pennsylvania. Supported in part by National Science
  Foundation grants CCF-1116961, CCF-1552909, CCF-1617851, and IIS-1447470.  \newline\noindent Email: \texttt{#1}.}
\newcommand{\bhhz}{\ensuremath{\bhh^{0}}\xspace}
\newcommand{\toShrink}{-.20cm}
\newcommand{\toShrinkEnu}{-.2cm}
\newcommand{\MSa}[1]{\ensuremath{\textnormal{\textsf{Matching}}_{#1}}}
\newcommand{\distMSa}{\ensuremath{\dist_{\textnormal{\textsf{M}}}}}
\newcommand{\protMSa}{\ensuremath{\Prot_{\ms}}\xspace}
\newcommand{\bProtMSa}{\ensuremath{\bm{\Prot}_{\ms}}\xspace}
\newcommand{\bT}{\bm{\theta}}
\newcommand{\itfacts}[1]{Fact~\ref{fact:it-facts}-(\ref{part:#1})\xspace}
\newcommand{\fistar}{\ensuremath{f_{\istar}}}
\newcommand{\gistar}{\ensuremath{g_{\istar}}}
\newcommand{\bRms}{\ensuremath{\bR_{\textnormal{\textsf{M}}}}}
\newcommand{\sms}{\ensuremath{\textnormal{\textsf{SMS}}}\xspace}
\newcommand{\bhm}{\ensuremath{\textnormal{\textsf{BHM}}}\xspace}
\newcommand{\bhmz}{\ensuremath{\textnormal{\textsf{BHM}}^0}\xspace}
\newcommand{\SMS}[2]{\ensuremath{\sms_{#1,#2}}\xspace}
\newcommand{\DistBHM}{\ensuremath{\dist_{\bhm}}}
\newcommand{\DistBHMY}{\ensuremath{\dist^{\textsf{Y}}_{\bhm}}}
\newcommand{\DistBHMN}{\ensuremath{\dist^{\textsf{N}}_{\bhm}}}
\newcommand{\DistSMS}{\ensuremath{\dist_{\sms}}}
\newcommand{\DistSMSY}{\ensuremath{\dist^{\textsf{Y}}_{\sms}}}
\newcommand{\DistSMSN}{\ensuremath{\dist^{\textsf{N}}_{\sms}}}
\newcommand{\PiSMS}{\ensuremath{\Pi_{\sms}}}
\newcommand{\protSMS}{\ensuremath{\Prot_{\sms}}\xspace}
\newcommand{\protBHM}{\ensuremath{\Prot_{\bhm}}\xspace}
\newcommand{\PiBHM}{\ensuremath{\Pi_{\bhm}}}
\newcommand{\pfYes}[1]{\ensuremath{p^{\textsf{Y}}_{#1}}}
\newcommand{\pfNo}[1]{\ensuremath{p^{\textsf{N}}_{#1}}}
\newcommand{\fYes}{\ensuremath{f_{\Yes}}}
\newcommand{\fNo}{\ensuremath{f_{\No}}}
\newcommand{\Mbhm}{\ensuremath{M^{\textsf{B}}}}
\newcommand{\xbhm}{\ensuremath{x^{\textsf{B}}}}
\newcommand{\bsigma}{\ensuremath{\bm{\sigma}}}
\newcommand{\algline}{
  \rule{0.5\linewidth}{.1pt}\hspace{\fill}%
  \par\nointerlineskip \vspace{.1pt}
}
\newcommand{\MO}[1]{\ensuremath{\textnormal{\textsf{Matching}}_{#1}}}
\newcommand{\mo}{\ensuremath{\textnormal{\textsf{Matching}}}\xspace}
\newcommand{\bhh}{\ensuremath{\textnormal{\textsf{BHH}}}\xspace}
\newcommand{\distMO}{\ensuremath{\dist_{\textnormal{\textsf{M}}}}}
\newcommand{\cmax}{\ensuremath{\textnormal{\textsf{c}}_{\textnormal{rs}}}}
\newcommand{\protMO}{\ensuremath{\Prot_{\mo}}\xspace}
\newcommand{\bProtMO}{\ensuremath{\bm{\Prot}_{\mo}}\xspace}
\newcommand{\protBHH}{\ensuremath{\Prot_{\bhh}}\xspace}
\newcommand{\bProtBHH}{\ensuremath{\bm{\Prot}_{\bhh}}\xspace}
\newcommand{\MS}[1]{\ensuremath{\textnormal{\textsf{Matching}}_{#1}}}
\newcommand{\ms}{\ensuremath{\textnormal{\textsf{Matching}}}\xspace}
\newcommand{\distMS}{\ensuremath{\dist_{\textnormal{\textsf{M}}}}}
\newcommand{\protMS}{\ensuremath{\Prot_{\ms}}\xspace}
\newcommand{\bProtMS}{\ensuremath{\bm{\Prot}_{\ms}}\xspace}
\newcommand{\vstari}{\ensuremath{V^*_i}\xspace}
\newcommand{\Mrs}{\ensuremath{M^{\textsf{RS}}}}
\newcommand{\jstar}{\ensuremath{{j^{\star}}}}
\newcommand{\Grs}{\ensuremath{G^{\textsf{RS}}}}
\newcommand{\RS}[1]{\ensuremath{\textnormal{\textsf{RS}}(#1)}}
\newcommand{\Ibhh}{\ensuremath{I_{\bhh}}}
\newcommand{\Mstar}{\ensuremath{M^{\star}}}
\newcommand{\negative}[1]{\ensuremath{\overline{#1}}}
\newcommand{\FL}{\ensuremath{\mathcal{L}}}
\newcommand{\FS}{\ensuremath{\mathcal{S}}}
\newcommand{\tvd}[2]{\ensuremath{\norm{#1 - #2}_{tvd}}}
\newcommand{\Ot}{\ensuremath{\widetilde{O}}}
\newcommand{\eps}{\ensuremath{\varepsilon}}
\newcommand{\Paren}[1]{\Big(#1\Big)}
\newcommand{\Bracket}[1]{\Big[#1\Big]}
\newcommand{\bracket}[1]{\left[#1\right]}
\newcommand{\paren}[1]{\ensuremath{\left(#1\right)}\xspace}
\newcommand{\card}[1]{\left\vert{#1}\right\vert}
\newcommand{\half}{\ensuremath{{1 \over 2}}}
\newcommand{\norm}[1]{\ensuremath{\|#1\|}}
\newcommand{\floor}[1]{{\left\lfloor{#1}\right\rfloor}}
\newcommand{\prob}[1]{\Pr\paren{#1}}
\newcommand{\expect}[1]{\Exp\bracket{#1}}
\newcommand{\var}[1]{\textnormal{Var}\bracket{#1}}
\newcommand{\set}[1]{\ensuremath{\left\{ #1 \right\}}}
\newcommand{\poly}{\mbox{\rm poly}}
\newcommand{\polylog}{\mbox{\rm  polylog}}
\newcommand{\OPT}{\ensuremath{\mbox{\sc opt}}\xspace}
\newcommand{\opt}{\textnormal{\ensuremath{\mbox{opt}}}\xspace}
\newcommand{\ALG}{\ensuremath{\mbox{\sc alg}}\xspace}
\DeclareMathOperator*{\Exp}{\ensuremath{{\mathbb{E}}}}
\DeclareMathOperator*{\Prob}{\ensuremath{\textnormal{Pr}}}
\renewcommand{\Pr}{\Prob}
\newcommand{\EX}{\Exp}
\newcommand{\Ex}{\Exp}
\newcommand{\etal}{{\it et al.\,}}
\newcommand{\rs}{Ruzsa-Szemer\'{e}di\xspace}
\newcommand{\Union}{\ensuremath{\bigcup}}
\newcommand{\tOPT}{\ensuremath{\widehat{\OPT}}}
\newenvironment{tbox}{\begin{tcolorbox}[
		enlarge top by=5pt,
		enlarge bottom by=5pt,
		 boxsep=0pt,
                  left=4pt,
                  right=4pt,
                  top=10pt,
                  arc=0pt,
                  boxrule=1pt,toprule=1pt,
                  colback=white
                  ]
	}
{\end{tcolorbox}}
\newcommand{\BHM}[1]{\ensuremath{\textnormal{\textsf{BHM}}_{#1}}\xspace}
\newcommand{\BHMZ}[1]{\ensuremath{\textnormal{\textsf{BHM}}^{0}_{#1}}\xspace}
\newcommand{\BHH}[2]{\ensuremath{\textnormal{\textsf{BHH}}_{#1,#2}}\xspace}
\newcommand{\BHHZ}[2]{\ensuremath{\textnormal{\textsf{BHH}}^{0}_{#1,#2}}\xspace}
\newcommand{\Yes}{\ensuremath{\textnormal{\textsf{Yes}}}\xspace}
\newcommand{\No}{\ensuremath{\textnormal{\textsf{No}}}\xspace}
\newcommand{\dist}{\ensuremath{\mathcal{D}}}
\newcommand{\distbhh}{\ensuremath{\mathcal{D}_{\textnormal{\textsf{BHH}}}}}
\newcommand{\Prot}{\ensuremath{\Pi}}
\newcommand{\prot}{\ensuremath{\pi}}
\newcommand{\bProt}{\bm{\Prot}}
\newcommand{\bX}{\bm{X}}
\newcommand{\bY}{\bm{Y}}
\newcommand{\bZ}{\bm{Z}}
\newcommand{\bA}{\bm{A}}
\newcommand{\bP}{\bm{P}}
\newcommand{\bR}{\bm{R}}
\newcommand{\bJ}{\bm{J}}
\newcommand{\bSigma}{\bm{\sigma}}
\newcommand{\bB}{\ensuremath{\bm{B}}}
\newcommand{\bC}{\ensuremath{\bm{C}}}
\newcommand{\supp}[1]{\ensuremath{\textsc{supp}(#1)}}
\newcommand{\ICost}[2]{\ensuremath{\textnormal{\textsf{ICost}}_{#2}(#1)}\xspace}
\newcommand{\ICS}[3]{\ensuremath{\textnormal{\textsf{IC}}_{\textnormal{SMP},#2}^{#3}(#1)}\xspace}
\newcommand{\CCS}[3]{\ensuremath{\textnormal{\textsf{CC}}_{\textnormal{SMP},#2}^{#3}(#1)}\xspace}
\newcommand{\ICO}[3]{\ensuremath{\textnormal{\textsf{IC}}_{1\textnormal{-way},#2}^{#3}(#1)}\xspace}
\newcommand{\CCO}[3]{\ensuremath{\textnormal{\textsf{CC}}_{1\textnormal{-way},#2}^{#3}(#1)}\xspace}
\newcommand{\Matching}{\ensuremath{\textnormal{\textsf{Matching}}}}
\newcommand{\distmatch}{\ensuremath{\mathcal{D}_{\textnormal{MM}}}}
\newcommand{\istar}{\ensuremath{i^{\star}}}
\newcommand{\HM}{\ensuremath{\mathcal{M}}}
\newcommand{\bE}{\ensuremath{\bm{E}}}
\newcommand{\player}[1]{\ensuremath{P^{(#1)}}}
\newcommand{\PS}[1]{\player{#1}}
\newcommand{\textbox}[2]{
{
\begin{tbox}
\textbf{#1}
{#2}
\end{tbox}
}
}
\newcommand{\galpha}{\ensuremath{G^{\alpha}}}
\newcommand{\galphap}{\ensuremath{G^{\beta}}}
\newcommand{\tester}{\ensuremath{\textnormal{\textsf{Tester}}}\xspace}
\newcommand{\sample}[2]{\ensuremath{\textnormal{\textsf{Sample}}_{#2}(#1)}\xspace}
\newcommand{\GS}{\ensuremath{G_{\textnormal{\textsf{smp}}}}}
\newcommand{\VS}{\ensuremath{V_{\textnormal{\textsf{smp}}}}}
\newcommand{\MstarS}{\ensuremath{\Mstar_{\textnormal{\textsf{smp}}}}}
\newcommand{\topt}{\ensuremath{\widetilde{\textnormal{opt}}}}
\renewcommand{\tOPT}{\topt}
\renewcommand{\OPT}{\opt}
\newcommand{\testerg}{\ensuremath{\tester_{\gamma}}\xspace}
\title{On Estimating Maximum Matching Size in Graph Streams}
\author{Sepehr Assadi\thanks{\ourinfo{\{sassadi,sanjeev,yangli2\}@cis.upenn.edu}}  \and Sanjeev Khanna\footnotemark[1] \and Yang Li\footnotemark[1]}
\date{}
\begin{document}
\maketitle

\thispagestyle{empty}
\begin{abstract}
  We study the problem of estimating the maximum matching \emph{size} in graphs whose edges are revealed in a streaming manner. We
  consider both \emph{insertion-only} streams, which only contain edge insertions, and \emph{dynamic} streams that allow both 
  insertions and deletions of the edges, and present new upper and lower bound results for both cases. 
  
On the upper bound front, we show that an $\alpha$-approximate estimate of the matching size can be computed in dynamic streams using $\Ot({n^2 / \alpha^4})$ space, and in insertion-only streams using $\Ot(n/\alpha^2)$-space. These bounds respectively shave off a factor of $\alpha$ from the space necessary to compute an $\alpha$-approximate matching (as opposed to only size), thus proving a non-trivial separation between approximate estimation and approximate computation of matchings in data streams.

On the lower bound front, we prove that any $\alpha$-approximation algorithm for estimating matching size in dynamic graph streams requires $\Omega(\sqrt{n}/\alpha^{2.5})$ bits of space, 
\emph{even} if the underlying graph is both \emph{sparse} and has \emph{arboricity} bounded by $O(\alpha)$. 
We further improve our lower bound to $\Omega(n/\alpha^2)$ in the case of \emph{dense} graphs. These results establish the first non-trivial streaming lower bounds for \emph{super-constant} approximation of 
matching size. 
	
Furthermore, we present the first \emph{super-linear} space lower bound for computing a $(1+\eps)$-approximation of matching size \emph{even} in insertion-only streams. In particular, we prove that a $(1+\eps)$-approximation to matching size requires $\RS{n} \cdot n^{1-O(\eps)}$ space; here, $\RS{n}$ denotes the maximum number of edge-disjoint \emph{induced matchings} of size $\Theta(n)$ in an $n$-vertex graph.
It is a major open problem with far-reaching implications to determine the value of $\RS{n}$, and current results leave open the possibility that $\RS{n}$ may be as large as $n/\log n$.
Moreover, using the best known lower bounds for $\RS{n}$, our result already rules
out any $O(n \cdot \poly(\log{n}/\eps))$-space algorithm for $(1+\eps)$-approximation of matchings. We also show how to avoid the dependency on the parameter $\RS{n}$
in proving lower bound for dynamic streams and present a near-optimal lower bound of $n^{2-O(\eps)}$ for $(1+\eps)$-approximation in this model.  

Using a well-known connection between matching size and \emph{matrix rank}, all our lower bounds also hold for the problem of estimating matrix rank. In particular our results imply a near-optimal $n^{2-O(\eps)}$ bit lower bound for 
$(1+\eps)$-approximation of matrix ranks for dense matrices in dynamic streams, answering an open question of Li and Woodruff (STOC 2016).

\end{abstract}

\clearpage
\setcounter{page}{1}

\section{Introduction}\label{sec:intro}
Recent years have witnessed tremendous progress on solving graph optimization problems in the \emph{streaming} model of computation, formally
introduced in the seminal work of~\cite{AlonMS96}. In this model, a graph is presented as a stream of edge
insertions (\emph{insertion-only streams}) or edge insertions and deletions (\emph{dynamic streams}), and the goal is to solve the given problem with minimum space requirement (see a survey by McGregor~\cite{M14} for a summary). 

One of the most extensively studied problems in the streaming literature is the classical 
problem of finding a \emph{maximum matching}~\cite{Lovasz09}.
Although significant advances have been made on understanding the space needed to compute a maximum matching in the streaming model~\cite{M05,FKMSZ05,EKS09,EpsteinLMS11,GoelKK12,KonradMM12,Zelke12,AGM12,AG13,GO13,Kapralov13,KapralovKS14,CS14,ChitnisCHM15,M14,AhnG15,EsfandiariHLMO15,Konrad15,AssadiKLY16,ChitnisCEHMMV16,BuryS15,McgregorV16}, some important problems remain wide open. In particular, not much is known about the tradeoff between space and approximation for the problem of estimating the {\em size} of a maximum matching in the streaming model. 

In this paper, we obtain new upper and lower bounds for the matching size problem.  Our results show that while the problem of matching size
estimation is provably easier than the problem of finding an approximate matching, the space complexity of the two problems starts to
converge together as the accuracy desired in the computation approaches near-optimality. In particular, we establish the first
super-linear space lower bound (in $n$) for the matching size estimation problem. A well-known connection between matching size and matrix
rank allows us to carry our lower bound results to the problem of estimating rank of a matrix in the streaming model, and we show that
essentially quadratic space is necessary to obtain a near-optimal approximation of matrix rank.  In what follows, we first briefly review
the previous work, and then present our results and techniques in detail.

\subsection{Models and Previous Work}
Two types of streams are generally considered in the literature, namely insertion-only streams and dynamic streams. In insertion-only
streams, edges are only inserted, and in dynamic streams, edges can be both inserted and deleted. 
In the following, we briefly summarize previous results for \emph{single-pass} algorithms (i.e., algorithms that only make one pass over the steam) in both insertion-only streams and dynamic streams.

\paragraph{Insertion-only streams.} It is easy to compute a $2$-approximate matching using $\Ot(n)$ space in insertion-only streams:
simply maintain a \emph{maximal} matching during the stream; here $n$ denotes the number of vertices in the input graph. This can be done similarly
for computing an $\alpha$-approximate matching in $\Ot(n/\alpha)$ space for any $\alpha \geq 2$. 
On the lower bound side, it is shown in~\cite{Kapralov13,GoelKK12} that computing better than a $e/(e-1)$-approximate matching requires $n^{1 + \Omega(1/\log\log n)}$ space. 

For the seemingly easier problem of estimating the maximum matching \emph{size} (the focus of this paper), the result of~\cite{Kapralov13,GoelKK12}
can be modified to show that computing better than a $e/(e - 1)$-approximation for matching size requires 
$n^{\Omega(1/\log\log n)}$ space (see also~\cite{KapralovKS14}).  It was shown later in~\cite{EsfandiariHLMO15} that computing better than a $3/2$-approximation requires
$\Omega(\sqrt{n})$ bits of space. More recently, this lower bound was extended by~\cite{BuryS15} to show that computing a
$(1+\eps)$-estimation requires $n^{1 - O(\eps)}$ space.  On the other hand, the only existing non-trivial algorithm is a folklore that an
$O(\sqrt{n})$-approximation can be obtained in $\polylog(n)$ space even in dynamic streams (for completeness, we provide a self-contained proof of this result in Appendix~\ref{app:folklore}). We note that  other algorithms
that use $o(n)$ space for this problem also exist, but they only work under certain conditions on the input: either the edges are presented in a \emph{random
order}~\cite{KapralovKS14} or the input graph has \emph{bounded arboricity}~\cite{EsfandiariHLMO15,ChitnisCEHMMV16,BuryS15,McgregorV16}.

\paragraph{Dynamic streams.} Space complexity of finding an $\alpha$-approximate matching in dynamic graph streams is essentially resolved: it is shown in~\cite{AssadiKLY16} that $\tilde{\Theta} ({n^2/\alpha^3})$ space is  
\emph{necessary} and in~\cite{AssadiKLY16, ChitnisCEHMMV16}, that it is also \emph{sufficient} (see also~\cite{Konrad15}).
However, the space complexity of estimating the matching size (the focus of this
paper) is far from being settled in this model. For example, it is not even known if $\alpha$-approximating matching size is
strictly easier than finding an $\alpha$-approximate matching (for any $\alpha = o(\sqrt{n})$). Moreover, no better lower bounds are known for estimating matching size in
dynamic streams than the ones in~\cite{EsfandiariHLMO15,BuryS15}, which already hold even for insertion-only streams. 

This state-of-the-art in both insertion-only and dynamic streams highlights the following natural question: \emph{How well can we approximate the maximum 
matching size in a space strictly smaller that what is needed for finding an approximate matching? 
In general, what is the space-approximation tradeoff for estimating the matching size in graph streams?}

Indeed, this question (and its closely related variants) has already been raised in the literature~\cite{EsfandiariHLMO15,McgregorV16,KapralovKS14}. In this
paper, we make progress on this question from both upper bound and lower bound ends.

\subsection{Our Results} \label{sec:current}

\paragraph{Upper bounds.} We prove that computing an $\alpha$-approximate estimate of matching size is strictly easier than finding an $\alpha$-approximate
matching. Formally, 

\begin{theorem}\label{thm:upper-intro} 
  There exist single-pass streaming algorithms that for any $2 \leq \alpha \leq \sqrt{n}$, w.h.p.\footnote{We use
    w.p. and w.h.p. to abbreviate with probability and with high probability, respectively.}, output an $\alpha$-approximation of the
  maximum matching size in dynamic streams using $\Ot(n^2/\alpha^4)$ and in insertion-only streams using $\Ot(n/\alpha^2)$ space,
  respectively.
\end{theorem}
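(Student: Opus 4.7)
The plan is to give two separate single-pass algorithms, one per streaming model, both built around a common \emph{threshold} principle: run a matching-finding procedure only up to a size cap that fits the prescribed space budget, and fill in the remaining uncertainty about $\mu^*$ with a deterministic estimator that is within factor $\alpha$ of $\mu^*$ whenever the matching-finding phase saturates. For insertion-only streams this boils down to size-capped greedy: process each incoming edge $e$ and add it to the running matching $M$ if both endpoints are still unmatched, but stop enlarging $M$ once $|M|$ reaches $T := \lceil 2n/\alpha^2\rceil$, which uses $\Ot(n/\alpha^2)$ bits. If the stream finishes with $|M|<T$ then $M$ is genuinely maximal, so $\mu^*\leq 2|M|$ and outputting $2|M|$ yields a $2$-approximation (hence an $\alpha$-approximation since $\alpha\geq 2$). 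If instead $|M|$ reached $T$ then the truncated greedy certifies $\mu^*\geq T$, while trivially $\mu^*\leq n/2$; the multiplicative width of $[T,n/2]$ is $\Theta(\alpha^2)$, so the geometric mean $\sqrt{Tn/2}\approx n/\alpha$ is within factor $\alpha$ of every $\mu^*\in[T,n/2]$.

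For dynamic streams the plan leverages the $\alpha$-approximate matching algorithm of \cite{AssadiKLY16}, which uses $\Ot(n^2/\alpha^3)$ bits, combined with vertex subsampling to shave one factor of $\alpha$ from the space by exploiting that only the size is required. Retain each vertex independently with probability $p$ into a sample $S$, implemented at the sketch level by zeroing out edges with an endpoint outside $S$, and then apply the algorithm of \cite{AssadiKLY16} to $G[S]$ with approximation parameter $\beta$, at sketch cost $\Ot(|S|^2/\beta^3)=\Ot(n^2p^2/\beta^3)$. Tuning $p$ and $\beta$ so that $p^2/\beta^3\lesssim 1/\alpha^4$ while the rescaling factor $1/p^2$ composes with $\beta$ to yield an $\alpha$-approximation of $\mu^*$, together with a small-$\mu^*$ fallback (an exact matching-finder run in parallel within the same $\Ot(n^2/\alpha^4)$ budget when $\mu^*$ is too small for subsampling to detect reliably), produces the claimed bound.

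The main obstacle is a \emph{vertex-sampling lemma} tying $\mu(G[S])$ to $p^2\mu^*$. The direction $\mu(G[S])=\Omega(p^2\mu^*)$ is an easy Chernoff bound applied to the $\mu^*$ edges of a fixed maximum matching, each of which survives into $G[S]$ independently with probability $p^2$. The reverse inequality is more delicate: simple examples such as a star with $\mu^*=1$ show that $\mu(G[S])$ can exceed $p^2\mu^*$ by as much as a factor $1/p$, so an upper bound of the form $\mu(G[S])\lesssim p^2\mu^*$ must either exploit fractional/vertex-cover structure of the graph or be replaced by an argument showing that the overestimate is absorbed by the $\alpha$-approximation tolerance in the regime where $\mu^*$ is large enough for the subsampled estimator to be in force. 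Composing this sampling error with the $\beta$-approximation of the \cite{AssadiKLY16} subroutine so that the overall ratio is at most $\alpha$ while the space remains $\Ot(n^2/\alpha^4)$ is the crux of the dynamic-stream analysis.
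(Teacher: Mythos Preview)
Your insertion-only argument is correct and in fact cleaner than the paper's: size-capped greedy with threshold $T = \Theta(n/\alpha^2)$ either finishes with $|M|<T$ (so $M$ is maximal and $2|M|$ is a $2$-approximation) or certifies $\opt(G) \in [T, n/2]$, an interval of multiplicative width $\Theta(\alpha^2)$ whose geometric midpoint is an $\alpha$-approximation. The paper instead routes both models through a common meta-algorithm built on vertex sampling at geometrically many rates combined with a matching-size tester, which is more machinery than this case needs.

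Your dynamic-stream argument, however, has a real gap. The composition you propose---sample vertices at rate $p$, then run the $\Ot(|S|^2/\beta^3)$-space $\beta$-approximate matching algorithm of \cite{AssadiKLY16} on $G[S]$---cannot simultaneously hit space $\Ot(n^2/\alpha^4)$ and approximation $\alpha$. You correctly identify the obstruction: the sampling lemma only gives $\Omega(p^2\opt(G)) \le \opt(G[S]) \le O(p\,\opt(G))$, a multiplicative slack of $\Theta(1/p)$ that is tight (your star example). Composing this with a $\beta$-approximation of $\opt(G[S])$ yields overall approximation $\Theta(\beta/p)$, so you need $\beta \le \alpha p$; but then the space $\Ot(n^2 p^2/\beta^3)$ is at best $\Ot(n^2/(\alpha^3 p))$, which for $p \le 1$ is never better than $\Ot(n^2/\alpha^3)$. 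You are one factor of $\alpha$ short no matter how you tune.

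The missing ingredient is the \emph{tester} formulation of the \cite{AssadiKLY16,ChitnisCEHMMV16} primitive: for any threshold $k$ there is a dynamic-stream procedure using $\Ot(k^2)$ space---independent of the number of vertices---that either certifies $\opt \ge \gamma k$ for a constant $\gamma$, or returns a constant-factor estimate of $\opt$. With this in hand your own insertion-only idea ports verbatim: run the tester on $G$ itself (no vertex sampling) with $k = \Theta(n/\alpha^2)$ at cost $\Ot(n^2/\alpha^4)$; if it says ``large'' output $\Theta(n/\alpha)$, else output its estimate. The paper reaches the same bound by a more elaborate route---vertex sampling at rates $\log n/\beta$ for geometrically spaced $\beta$, testing each sampled graph against a polylog threshold, plus one large-threshold test on $G^{\alpha}$---and uses the tester only inside that framework; but the extra structure is not needed for the theorem as stated.
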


The algorithms in Theorem~\ref{thm:upper-intro} are the first algorithms that outperform (by a factor of $\alpha$), respectively, the
\emph{optimal} $\Ot({n^2 / \alpha^3})$-space algorithm in dynamic streams, and the \emph{optimal} $\Ot(n/\alpha)$-space algorithm in
insertion-only streams for finding an $\alpha$-approximate matching. This provides the first non-trivial separation between approximate
estimation and approximate computation of matchings in both dynamic and insertion-only streams. 

\paragraph{Lower bounds.} 
Our first lower bound result concerns computing an $\alpha$-approximation of the maximum matching size in dynamic streams for any
$\alpha \geq 1$, \emph{not necessarily a constant}. 

\begin{theorem}\label{thm:lower-alpha-intro}
  Any (randomized) single-pass streaming algorithm that computes an $\alpha$-approximation of maximum matching size with a constant
  probability in dynamic streams requires $\Omega(\sqrt{n}/\alpha^{2.5})$ bits of space. This bound holds even if the input graph is both \emph{sparse} and has
  \emph{arboricity}\footnote{A graph $G$ has arboricity $\nu$ if the set of edges in $G$ can be partitioned into at most $\nu$ forests.}
  $O(\alpha)$.  Moreover, if the input graph is allowed to be \emph{dense}, then $\Omega(n/\alpha^2)$ bits of space is necessary.
\end{theorem}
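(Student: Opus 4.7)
The natural approach is a reduction from a multi-party communication problem, in the spirit of earlier streaming lower bounds for matching. Specifically, I would introduce (matching the paper's macros) an intermediate ``Streaming Matching Size'' problem $\sms_{n,\alpha}$: there are $t=\Theta(\alpha)$ players, each holding a layer of edge insertions/deletions on a shared vertex set of size $n$, communicating sequentially, and trying to decide between two cases in which the final graph has matching size $\opt$ versus $\alpha\cdot\opt$. Any single-pass dynamic-stream algorithm using $s$ bits of space solves $\sms_{n,\alpha}$ with $O(s)$ bits of communication per player (each player feeds its layer into the simulated algorithm and forwards the memory state), so a communication lower bound of $\Omega(S)$ directly yields a space lower bound of $\Omega(S)$.

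The first step is to lower-bound $\sms$ by reducing from the Boolean Hidden Matching problem $\bhm_N$, which has the well-known one-way communication lower bound $\Omega(\sqrt{N})$. The hard instance for the sparse/bounded-arboricity bound should be built from $\Theta(\alpha)$ independent $\bhm$-instances of size $\sqrt{n}$, one per player/layer, wired into a single graph on $\Theta(n)$ vertices so that each YES layer contributes $\Theta(\sqrt{n})$ additional matched edges over what the NO layer contributes; this produces a gap of factor $\alpha$ in total matching size between the all-YES and all-NO global instances. Because each vertex participates in only $O(\alpha)$ layers, the arboricity is automatically $O(\alpha)$, and the graph is sparse. A direct-sum / information-cost argument across the $\alpha$ layers, combined with the $\Omega(\sqrt{n})$ per-layer lower bound and the usual $\alpha^{1/2}$ loss in translating an approximation algorithm into a decision oracle (since the algorithm only needs to estimate the \emph{sum} of layer contributions within factor $\alpha$), gives the claimed $\Omega(\sqrt{n}/\alpha^{2.5})$ bound.

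For the dense bound $\Omega(n/\alpha^2)$, I would drop the arboricity restriction, which lets us ``pack'' many independent hard instances inside a single $n$-vertex graph. Concretely, one takes $\Theta(\sqrt{n}\cdot \alpha^{1/2})$ vertex-disjoint copies of the sparse hard instance on $\Theta(\sqrt{n})$ vertices each, so that the vertex budget remains $\Theta(n)$ while the graph becomes dense in aggregate. Since the copies are independent, any streaming protocol must solve each $\sms$ instance individually, and a direct-sum across copies multiplies the per-instance lower bound $\Omega(\sqrt{n}/\alpha^{2.5})$ by $\Theta(\sqrt{n}\cdot\alpha^{1/2})$, yielding the improved $\Omega(n/\alpha^2)$ bound. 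Dynamic deletions are essential here: the reduction uses insertions to introduce the BHM gadget edges and deletions to ``erase'' the irrelevant ones exposed by other players' layers.

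The main technical obstacle is the simultaneous control of three parameters of the gadget: (i) the ratio of YES- and NO-matching sizes must be exactly $\alpha$, (ii) the maximum degree (hence arboricity) must stay $O(\alpha)$ in the sparse construction, and (iii) the reduction from each $\bhm_{\sqrt{n}}$ instance must preserve the communication lower bound layer by layer without the other players' edges collapsing the hidden structure. Item (iii) is the most delicate: when the global matching is slightly off in one layer, it could be compensated for by another layer, so one must use an information-theoretic direct-sum argument (ruling out that any single player can reveal very little while the collective still estimates the total) rather than a purely combinatorial one. I would handle this via a chain-rule decomposition of mutual information between the players' inputs and the concatenated transcripts, exactly as in prior BHM-based streaming lower bounds, tracking the extra $\alpha^{2.5}$ factor carefully through each step of the reduction.
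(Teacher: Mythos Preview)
Your proposal has genuine gaps in both parts, and diverges substantially from the paper's approach.

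\textbf{Part 1 (sparse bound).} You propose a standard direct-sum over $\Theta(\alpha)$ independent $\bhm$ instances, one per player layer. The paper explicitly identifies why this does \emph{not} work for its hard distribution: in the $\sms$ construction, the $\bhm$ instances embedded in the players' inputs are \emph{not} independent in the relevant sense. Any two players together can identify the shared/private vertex partition (by intersecting their vertex sets) and thereby solve the entire problem with $\Ot(1)$ communication. Consequently, no matter how one assigns the $k$ players to Alice and Bob in a symmetrization-style reduction, Alice already has enough information to solve the underlying $\bhm$ instance, and the direct-sum/information-cost chain rule does not yield a nontrivial bound. The paper's actual technique is a \emph{protocol-dependent} reduction: for any fixed SMP protocol $\protSMS$, they exhibit a sequence of $k+1$ input profiles interpolating between all-\Yes and all-\No, and use a telescoping argument to find a $\gamma$-informative index $\istar$ (with $\gamma = \Theta(1/k)$) at which $\protSMS$ distinguishes the $\istar$-th $\bhm$ instance with advantage $\gamma$. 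Only then do they embed a single $\bhm$ instance at coordinate $\istar$, with Alice simulating the other $k-1$ players. This is where the $1/\alpha^{2.5}$ exponent arises: the $\gamma = \Theta(1/\alpha)$ advantage costs a factor $1/\alpha$ in the $\bhm$ lower bound $\Omega(\gamma\sqrt{n/k})$, and the division by $k=\Theta(\alpha)$ players costs another. Your ``$\alpha^{1/2}$ loss in translating an approximation algorithm into a decision oracle'' is not the mechanism at play.

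\textbf{Part 2 (dense bound).} Your plan of taking $\Theta(\sqrt{n}\cdot\alpha^{1/2})$ vertex-disjoint copies of the sparse instance and applying a direct-sum across copies does not work. With independently chosen \Yes/\No labels per copy, the total matching size concentrates sharply (roughly half \Yes, half \No), so an $\alpha$-approximation of the \emph{sum} reveals essentially nothing about any individual copy, and no direct-sum bound follows. If instead all copies share the same label, the problem reduces to a single copy and you gain nothing. The paper's construction is entirely different: each of $k = \alpha \cdot n^{o(1)}$ players holds a copy of a dense $(r,t)$-RS graph with $t = \Theta(N^2/r)$ induced matchings, and a hidden global bit $\theta$ determines whether a uniformly random special matching $\Mrs_{\jstar}$ (on \emph{private} vertices) is retained in all copies. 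The lower bound is proved via a mutual-information argument showing $I(\bT;\bProtMSa \mid \bsigma,\bJ,\bR) = \Omega(1)$ forces $\ICost{\protMSa}{\distMSa} = \Omega(t) = \Omega(nk/\alpha^2)$; the RS-graph structure is what allows $t$ matchings to be packed on $N$ shared vertices so that each looks marginally identical, making $\jstar$ information-theoretically hidden from any single player's message. Vertex-disjoint packing cannot achieve this.

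Finally, a smaller point: the paper works in the SMP model (simultaneous messages to a referee) and invokes the results of Li--Nguyen--Woodruff and Ai--Hu--Li--Woodruff to transfer SMP lower bounds to dynamic-stream space; your sequential model is fine for the transfer step, but the informative-index argument in Part~1 crucially uses that $\protSMS$ is simultaneous (so Bob, playing $\PS{\istar}$, can compute his message without Alice's), which your sequential setup does not guarantee.
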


The lower bounds in Theorem~\ref{thm:lower-alpha-intro} are the first non-trivial space lower bounds for
\emph{super-constant} approximation algorithms for matching size estimation. Obtaining space lower bounds for $\polylog{(n)}$-approximation
of matching size has been posed as an open problem by Kapralov~\etal\cite{KapralovKS14}, who also mentioned that ``existing techniques do not seem to lend easily to answer this question and it will be very useful (quite possibly for other related problems) to develop tools needed to make progress on this front''. Our results in
Theorem~\ref{thm:lower-alpha-intro} make progress on this question in dynamic streams.

An interesting aspect of our lower bound in Theorem~\ref{thm:lower-alpha-intro} is that it holds even for bounded arboricity graphs. There
is an active line of research on estimating matching size of bounded arboricity graphs in graph
streams~\cite{ChitnisCEHMMV16,BuryS15,EsfandiariHLMO15,McgregorV16}, initiated by Esfandiari~\etal~\cite{EsfandiariHLMO15}. The state-of-the-art is an $O(1)$-approximation in $\Ot(n^{4/5})$ space for dynamic streams in
bounded-arboricity graphs~\cite{ChitnisCEHMMV16,BuryS15,McgregorV16}.

Our second lower bound result concerns computing a $(1+\eps)$-approximation of the maximum matching size in both insertion-only streams and
in dynamic streams.  In the following, let $\RS{n}$ denote the maximum number of edge-disjoint \emph{induced matchings} of size $\Theta(n)$
in any $n$-vertex graph (see Section~\ref{sec:rs-graphs}).

\begin{theorem}\label{thm:lower-eps-intro}
  Any (randomized) single-pass streaming algorithm that with a constant probability outputs a $(1+\eps)$-approximation of the
  maximum matching size in insertion-only streams requires $\RS{n} \cdot n^{1-O(\eps)} $ space. The lower bound improves
  to $n^{2-O(\eps)}$ for dynamic streams. 
\end{theorem}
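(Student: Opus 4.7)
The plan is to prove both bounds via reduction from the Boolean Hidden Hypermatching (BHH) communication problem with hyperedges of size $t = \Theta(1/\eps)$, whose one-way communication complexity is known to be $\Omega(n^{1-1/t}) = n^{1-O(\eps)}$, and where the two distributions differ in matching size by a $(1+\Omega(\eps))$ multiplicative factor. A single BHH instance embedded into a graph on $\Theta(n)$ vertices will therefore produce a $(1+\eps)$-matching-size lower bound of $n^{1-O(\eps)}$; the challenge is to compose many independent copies in parallel to amplify this into the claimed bounds. The per-instance gadget is a bipartite graph whose matching size is $(1+\Omega(\eps))$ times larger in the \Yes{} case than in the \No{} case, driven by whether Alice's string is orthogonal to each hyperedge of Bob's hypermatching.

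For the insertion-only bound, I would stack $r = \RS{n}$ copies of this BHH gadget onto the $r$ edge-disjoint induced matchings of an $n$-vertex Ruzsa--Szemer\'edi graph, using each induced matching $M_i$ (of size $\Theta(n)$) as the ``skeleton'' carrying the $i$-th instance. Because each $M_i$ is \emph{induced}, the edges of one BHH instance do not contribute to augmenting matchings formed inside another instance, so the total maximum matching size decomposes (up to a lower order term) as the sum of the per-instance matching sizes. Consequently, a $(1+\eps)$-approximation of the global matching size must, in an average-case sense, recover the \Yes{}/\No{} answer on a $1-o(1)$ fraction of instances simultaneously. A standard direct-sum/information-complexity argument --- similar in spirit to the embedding strategies used in \cite{AssadiKLY16,KapralovKS14,BuryS15} --- then multiplies the per-instance $n^{1-O(\eps)}$ bound by $\RS{n}$, giving the claimed $\RS{n}\cdot n^{1-O(\eps)}$ lower bound.

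For the dynamic bound, I would replace the fixed RS-graph scaffolding with a construction that exploits deletions to emulate up to $\Theta(n)$ ``virtual'' induced matchings, one per player. Each of $\Theta(n)$ players embeds a fresh BHH instance on its own $\Theta(n)$-vertex gadget, and the stream is arranged so that edges that would cross between gadgets (and thus spoil the induced property) are inserted and later deleted, leaving only the intended induced structure. Because deletions remove the combinatorial bottleneck that forces the use of an RS graph, the effective number of parallel instances rises from $\RS{n}$ to $\Theta(n)$, yielding $n \cdot n^{1-O(\eps)} = n^{2-O(\eps)}$ via the same direct-sum argument.

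The hardest step, in both parts, is to argue that the composition is truly \emph{information-theoretic parallel}: that the algorithm's internal information about the transcript splits cleanly across instances, so that the per-coordinate information cost of BHH can be summed. Controlling this requires a careful choice of hard input distribution in which the instance indices are public but the instance data is private, plus a rectangle-style argument showing that ``cheating'' across gadgets cannot help because each gadget's contribution to the matching size is locally determined by its own induced subgraph. The gap calculation --- verifying that a $(1+\Omega(1/t))$ per-instance multiplicative gap aggregates into a $(1+\Omega(\eps))$ global gap without dilution by the ``common'' portion of the matching --- is the other delicate piece, and dictates the tight setting $t = \Theta(1/\eps)$ that produces the exponent $1-O(\eps)$.
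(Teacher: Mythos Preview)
Your high-level intuition that the bound comes from composing many BHH instances via a direct-sum argument is correct, but your proposed composition is not the one that works, and the one you describe would fail.

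For the insertion-only bound, you propose to make all $t=\RS{n}$ BHH instances simultaneously ``live'' and claim the global matching size decomposes as a sum of per-instance sizes. This is wrong on two counts. First, the induced matchings of an RS graph are edge-disjoint but heavily \emph{share vertices}, so there is no such additive decomposition. Second, even if there were, each instance is independently \Yes{} or \No{}, so by concentration the total matching size would be essentially the same in every draw; there is no global $(1+\eps)$ gap for the algorithm to detect, and certainly no way to read off a $1-o(1)$ fraction of the individual answers from a single scalar estimate. What the paper actually does is make only \emph{one} instance live: Bob receives a uniformly random index $\jstar\in[t]$ (hidden from Alice) and places his $p$-clique gadget only on the vertices of $\Mrs_{\jstar}$, while matching all vertices outside $V(\Mrs_{\jstar})$ to fresh degree-one vertices. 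The matching size then depends only on the $\jstar$-th BHH instance (Claim~\ref{clm:insert-matching-bhh}), and since Alice does not know $\jstar$ and her $t$ vectors $x^{(1)},\ldots,x^{(t)}$ are independent, a clean information-cost direct-sum (Lemma~\ref{lem:insert-direct-sum}) gives the factor $t$.

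For the dynamic bound, your mechanism of ``inserting and later deleting crossing edges'' and using $\Theta(n)$ players each with its own gadget is not what is done and would not produce the result. The paper works in the SMP model with only $k=n^{o(1)}$ players; the extra factor of $n$ (over the insertion-only bound) comes not from more players but from switching to the Alon--Moitra--Sudakov RS graph with $r=n^{1-o(1)}$ and $t=\Theta(n)$ induced matchings. Each player holds a relabeled copy of this RS graph, a single random index $\jstar$ is shared, and the private-vertex relabeling makes the $\jstar$-th matching of each player land on disjoint vertices so the gap survives (Claim~\ref{clm:dynamic-matching-bhh}). The same style of direct-sum over $j\in[t]$ then yields $t\cdot r^{1-1/p}=n^{2-O(\eps)}$, and the SMP lower bound transfers to dynamic streams via~\cite{LiNW14,AiHLW16}. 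No deletions are used in the hard distribution itself.
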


Since $\RS{n}$ is known to be at least $n^{\Omega{(1/\log\log n})}$~\cite{FischerLNRRS02}, Theorem~\ref{thm:lower-eps-intro} immediately
implies that no $\Ot(n \cdot \poly(1/\eps))$-space algorithm can output a $(1+\eps)$-approximation of matching size in
insertion-only streams. Interestingly, it is known that by allowing \emph{multiple passes} over the stream, a $(1+\eps)$-approximate matching (as opposed to only its size) can be found in 
$\Ot(n \cdot \poly(1/\eps))$ space, even in dynamic streams and even for the weighted version of the problem~\cite{AhnG15,AG13} (see also~\cite{M14}).

Our lower bounds in Theorem~\ref{thm:lower-eps-intro} are the first \emph{super linear} (in $n$) space lower bounds for estimating matching
size in graph streams.  An interesting implication of these lower bounds is that while the problem of matching size estimation is provably
easier than the problem of finding an approximate matching (by Theorem~\ref{thm:upper-intro}), the space complexity of the two problems
starts to converge together as the accuracy desired in the computation approaches near-optimality.

\paragraph{Schatten $p$-norms.} The \emph{Schatten $p$-norm} of a matrix $A$ is defined as the $\ell_p$-norm of the vector of the singular
values of $A$ (see~\cite{LiW16} for more detail); in particular, the case of $p=0$ corresponds to the \emph{rank} of the matrix
$A$. Schatten norms and rank computation have been previously studied in the streaming and sketching
models~\cite{ClarksonW09,BuryS15,LiW16,LiNW14a,LiW16a,LiSWW14}.  It is shown that exact computation of matrix rank in data streams requires
$\Omega(n^2)$ space~\cite{ClarksonW09,LiSWW14} (even allowing multiple passes), and $(1+\eps)$-approximation requires $n^{1-O(\eps)}$
space~\cite{BuryS15}; the latter result was recently extended to all Schatten $p$-norms for \emph{odd} values of $p$~\cite{LiW16}.

It is well-known that computing the maximum matching size 
is equivalent to computing the rank of the Tutte matrix~\cite{Tutte47,Lovasz09}. Consequently, all our lower bounds stated for matching size estimation also hold for matrix rank computation.   
This in particular implies an $\Omega(\sqrt{n})$ space lower bound for \emph{any constant} approximation of rank in \emph{sparse} matrices and a near-optimal $n^{2-O(\eps)}$ space lower bound 
for $(1+\eps)$-approximation in \emph{dense} matrices, answering an open question of Li and Woodruff~\cite{LiW16}.

\section{Preliminaries}\label{sec:prelim}

\paragraph{Notation.} For any graph $G$, $\opt(G)$ denotes the maximum matching \emph{size} in $G$. 
We use bold face letters to represent random variables. For any random variable $\bX$, $\supp{\bX}$ denotes its support set. We define
$\card{\bX}: = \log{\card{\supp{\bX}}}$.  For any $k$-dimensional tuple $X = (X_1,\ldots,X_k)$ and any $i \in [k]$, we define
$X^{<i}:=(X_1,\ldots,X_{i-1})$, and $X^{-i}:=(X_1,\ldots,X_{i-1},X_{i+1},\ldots,X_{k})$.


\paragraph{Total Variation Distance.} For any two distributions $\mu$ and $\nu$ with the same support $\Omega$ where $\card{\Omega}$ is
finite, the \emph{total variation distance} between $\mu$ and $\nu$, denoted by $\tvd{\mu}{\nu}$, is given by
$\max_{\Omega' \subseteq \Omega} \paren{\mu(\Omega') - \nu(\Omega')} = \frac{1}{2} \sum_{x \in \Omega} \card{\mu(x) - \nu(x)}$.  We use the
following well-known fact in our proofs.

\begin{fact}\label{fact:tvd}
  Suppose we want to distinguish between two probability distributions $\mu$ and $\nu$ given one sample from one of the two; then the
  best probability of success is $\frac{1}{2} + \frac{\tvd{\mu}{\nu}}{2}$.
\end{fact}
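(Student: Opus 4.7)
The plan is to reduce the distinguishing task to optimizing a decision rule, then identify the optimal rule with the set attaining the supremum in the definition of total variation distance. I first fix the convention implicit in the statement: nature selects $\mu$ or $\nu$ each with probability $\frac{1}{2}$, draws one sample $x \in \Omega$ from the chosen distribution, and the distinguisher must output a guess. The ``best probability of success'' then refers to the Bayes-optimal success probability over all (possibly randomized) decision rules.

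Next, I would analyze deterministic rules. Any deterministic distinguisher is a function $d\colon \Omega \to \{\mu,\nu\}$, which is fully specified by the set $S := \{x \in \Omega : d(x) = \mu\}$. Conditioning on which distribution nature picked, the success probability is
\[
\tfrac{1}{2}\,\Pr_{x \sim \mu}[x \in S] \;+\; \tfrac{1}{2}\,\Pr_{x \sim \nu}[x \notin S]
\;=\; \tfrac{1}{2} + \tfrac{1}{2}\bigl(\mu(S) - \nu(S)\bigr).
\]
Maximizing over $S \subseteq \Omega$ and invoking the definition $\tvd{\mu}{\nu} = \max_{S \subseteq \Omega}(\mu(S) - \nu(S))$ (which is finite since $\Omega$ is finite, so the maximum is attained, e.g.\ by $S^{\star} = \{x : \mu(x) \geq \nu(x)\}$) yields success probability exactly $\frac{1}{2} + \frac{\tvd{\mu}{\nu}}{2}$, which establishes the achievability direction.

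To finish, I would handle randomized distinguishers. Any such rule can be written as a distribution over deterministic rules, so by linearity of expectation its success probability is a convex combination of success probabilities of deterministic rules, and hence is at most the maximum over deterministic rules. Combining with the previous step gives matching upper and lower bounds of $\frac{1}{2} + \frac{\tvd{\mu}{\nu}}{2}$.

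There is no real obstacle; the only subtlety worth flagging is the uniform prior convention (without it, the quantity $\mu(S) - \nu(S)$ must be reweighted). Since the statement is a standard textbook fact used as a black box in the sequel, a brief two-paragraph argument of this form should suffice.
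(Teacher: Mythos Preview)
Your proof is correct and follows the standard textbook argument. The paper itself does not supply a proof of this fact; it is stated as a well-known result and used as a black box, so there is no paper proof to compare against.
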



\subsection{\rs graphs}\label{sec:rs-graphs}
For any graph $G$, a matching $M$ of $G$ is an \emph{induced matching} iff for any
two vertices $u$ and $v$ that are matched in $M$, if $u$ and $v$ are not matched to each other, then there is no edge between $u$ and $v$ in $G$.

\begin{definition}[\rs graph]\label{def:rs-graph}
  A graph $G$ is an $(r,t)$-\emph{\rs graph} (or $(r,t)$-RS graph for short), iff the set of edges
  in $G$ consists of $t$ pairwise disjoint induced matchings $M_1,\ldots,M_t$, each of size $r$.
\end{definition}

RS graphs, first introduced by Ruzsa and Szemer\'{e}di~\cite{RuszaS78}, have been extensively studied as they arise naturally in property
testing, PCP constructions, additive combinatorics, etc. (see,
e.g.,~\cite{TaoV06,HastadW03,FischerLNRRS02,BirkLM93,AlonMS12,GoelKK12,Alon02,AlonS06,FoxHS15}).  These graphs are of interest typically
when $r$ and $t$ are large relative to the number of vertices in the graph.

One particularly interesting range of the parameters is when $r = \Theta(n)$~\cite{FischerLNRRS02,Fox11,FoxHS15}, i.e., when the induced
matchings are of linear size. We use the notation $\RS{n}$ to denote the \emph{largest} possible value for the parameter $t$ such that an
$(r,t)$-RS graph on $n$ vertices with $r = \Theta(n)$ exists. It is a major open problem to determine the asymptotic of
\RS{n}~\cite{Fox11,Gowers01,FoxHS15}, but currently there is a huge gap between existing upper and lower bounds for $\RS{n}$.  In
particular, it is known that for any constant $c < 1/4$, a $(c\cdot n,t)$-RS graph with $t = n^{\Omega(1/\log\log{n})}$
exists~\cite{FischerLNRRS02} (see also~\cite{GoelKK12}). However, the best known upper bound only shows that for $(c \cdot n,t)$-RS graphs, where $c$ is any constant
less than $1/4$, $t$ is upper bounded by ${n \over \log^{(x)}{n}}$, with $x = O(\log{1 \over c})$, ($\log^{(x)}(n)$ denotes the $x$-fold
iterative logarithm of $n$)~\cite{Fox11}. Slightly better upper bounds are known for large values of $c$; in particular, it is shown in~\cite{FoxHS15}
that for $1/5 < c < 1/4$, $t = O(n/\log{n})$. We refer the interested reader to~\cite{FoxHS15,AlonMS12} for more on the history of \rs
graphs and to~\cite{AlonMS12,GoelKK12} for their application to different areas of computer science, including proving lower bounds for
streaming algorithms.

Obtaining $(r,t)$-RS graphs for $r=\Theta(n)$ and $t = n^{\eps}$ (for some constant $\eps > 0$) seems to be out of the scope of the
state-of-the-art techniques; however, Alon~\etal~\cite{AlonMS12} provide a surprising construction of (very) dense RS graphs when we allow
$r$ to be just \emph{slightly sublinear}: there are $(r,t)$-RS graphs on $n$ vertices with parameters $r = n^{1-o(1)}$ and
$r \cdot t = {{n \choose 2}} - o(n^2)$~\cite{AlonMS12}.  While our lower bound for insertion-only streams requires the use of $(r,t)$-RS
graphs with $r = \Theta(n)$ (hence naturally leads to a dependence on $\RS{n}$), for our lower bound for dynamic streams it suffices to work
with RS graphs with $r = n^{1-o(1)}$ and hence we can directly use the construction of~\cite{AlonMS12} (hence avoiding dependency on
$\RS{n}$).

\subsection{Tools from Information Theory}\label{sec:info}
We briefly review some basic concepts from information theory needed for establishing our lower bounds. For a
broader introduction to the field, we refer the reader to the excellent text by Cover and
Thomas~\cite{ITbook}.

In the following, we denote the \emph{Shannon Entropy} of a random variable $\bA$ by
$H(\bA)$ and the \emph{mutual information} of two random variables $\bA$ and $\bB$ by
$I(\bA;\bB) = H(\bA) - H(\bA \mid \bB) = H(\bB) - H(\bB \mid \bA)$. If the distribution
$\dist$ of the random variables is not clear from the context, we use $H_\dist(\bA)$
(resp. $I_{\dist}(\bA;\bB)$). We use $H_2$ to denote the binary entropy function where for any real number $0
< \delta < 1$, $H_2(\delta) = \delta\log{\frac{1}{\delta}} + (1-\delta)\log{\frac{1}{1-\delta}}$. 
We know that $0 \leq H(\bA) \leq \card{\bA}$ and equality holds iff $\bA$ is uniform on its support. Similarly, $I(\bA;\bB) \geq 0$ and equality holds
iff $\bA$ and $\bB$ are independent of each other. 

We use the following basic properties of entropy and mutual information (proofs can be
found in~\cite{ITbook}, Chapter~2).
\begin{fact}\label{fact:it-facts}
  Let $\bA$, $\bB$, $\bC$ be random variables. 
  \begin{enumerate}
  \item \label{part:cond-reduce} \emph{Conditioning reduces the entropy}:
    $H(\bA \mid \bB,\bC) \leq H(\bA \mid \bB)$; equality holds iff $\bA$ and $\bC$
    are independent conditioned on $\bB$.
  \item \label{part:ent-chain-rule} \emph{Chain rule for entropy}: $H(\bA,\bB) =H(\bA ) + H(\bB \mid \bA)$.
  \item \label{part:info-chain-rule} \emph{Chain rule for mutual information}: $I(\bA,\bB ;
    \bC) = I(\bA ; \bC) + I( \bB;\bC \mid \bA)$.
  \item \label{part:info-sub-additivity} \emph{Conditional sub-additivity of mutual information}: if $\bA_1,\bA_2, \ldots,\bA_t$ are
    conditionally independent given $\bB$, then $I(\bA_1,\bA_2, \ldots,\bA_t ; \bB) \leq \sum_{i=1}^t I(\bA_i ; \bB)$.
  \item \label{part:info-super-additivity} \emph{Conditional super-additivity of mutual information}: if $\bA_1,\bA_2, \ldots,\bA_t$ are
    conditionally independent given $\bC$, then $I(\bA_1,\bA_2, \ldots,\bA_t ; \bB \mid \bC) \geq \sum_{i=1}^t I(\bA_i ; \bB \mid \bC)$.
  \end{enumerate}
\end{fact}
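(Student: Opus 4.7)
} All five items are standard consequences of the definitions of Shannon entropy $H(\bA) = -\sum_a \Pr[\bA=a]\log\Pr[\bA=a]$, conditional entropy $H(\bA\mid \bB)=\Exp_{\bB}[H(\bA\mid \bB=b)]$, and mutual information $I(\bA;\bB)=H(\bA)-H(\bA\mid \bB)$. My plan is to derive items (2) and (3) directly from the definitions, then to bootstrap the rest from (2), (3), and the single nontrivial analytic fact that $I(\bA;\bB)\geq 0$ with equality iff $\bA\perp \bB$ (this is Jensen's inequality applied to $-\log$, or equivalently Gibbs' inequality on the KL divergence between the joint and the product of marginals).

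For the chain rule for entropy (item (2)), the plan is to expand $H(\bA,\bB) = -\sum_{a,b}\Pr[\bA=a,\bB=b]\log\Pr[\bA=a,\bB=b]$, split the logarithm using $\Pr[\bA=a,\bB=b]=\Pr[\bA=a]\Pr[\bB=b\mid \bA=a]$, and recognize the two resulting sums as $H(\bA)$ and $H(\bB\mid\bA)$. The chain rule for mutual information (item (3)) then follows mechanically: write $I(\bA,\bB;\bC) = H(\bA,\bB) - H(\bA,\bB\mid \bC)$, apply item (2) to both joint entropies, and regroup the four terms as $\bigl(H(\bA)-H(\bA\mid\bC)\bigr)+\bigl(H(\bB\mid\bA)-H(\bB\mid \bA,\bC)\bigr) = I(\bA;\bC)+I(\bB;\bC\mid\bA)$.

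For item (1) (conditioning reduces entropy), the plan is to invoke non-negativity of conditional mutual information: $I(\bA;\bC\mid\bB)\geq 0$, and simply observe that this quantity equals $H(\bA\mid\bB)-H(\bA\mid \bB,\bC)$; equality holds iff $\bA$ and $\bC$ are independent given $\bB$, by the equality condition of Gibbs' inequality applied conditionally on each value of $\bB$. For conditional sub-additivity (item (4)), I would iterate the chain rule to get $I(\bA_1,\ldots,\bA_t;\bB)=\sum_{i=1}^t I(\bA_i;\bB\mid \bA^{<i})$, rewrite each term as $H(\bA_i\mid \bA^{<i})-H(\bA_i\mid \bA^{<i},\bB)$, use the conditional independence of the $\bA_i$'s given $\bB$ to replace $H(\bA_i\mid \bA^{<i},\bB)$ by $H(\bA_i\mid\bB)$, and finally apply item (1) to upper bound $H(\bA_i\mid\bA^{<i})$ by $H(\bA_i)$, yielding $\sum_i\bigl(H(\bA_i)-H(\bA_i\mid\bB)\bigr)=\sum_i I(\bA_i;\bB)$.

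Item (5) (conditional super-additivity) is the trickiest in the list: here the inequality goes the other direction. The plan is again to expand via the chain rule $I(\bA_1,\ldots,\bA_t;\bB\mid\bC)=\sum_i I(\bA_i;\bB\mid \bC,\bA^{<i})$ and rewrite each summand as $H(\bA_i\mid \bC,\bA^{<i})-H(\bA_i\mid\bC,\bA^{<i},\bB)$. Now the key asymmetry between (4) and (5) comes into play: conditional independence of the $\bA_i$'s given $\bC$ alone lets us replace $H(\bA_i\mid\bC,\bA^{<i})$ with $H(\bA_i\mid\bC)$ (they really are equal), while the second term $H(\bA_i\mid\bC,\bA^{<i},\bB)$ is at most $H(\bA_i\mid\bC,\bB)$ by item (1). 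Subtracting a smaller quantity gives a larger summand, so each term lower bounds $I(\bA_i;\bB\mid\bC)$ and the result follows. The one subtle point I expect to need care with is precisely this reversal: under (4) the conditional independence is given $\bB$ (the "information target"), whereas under (5) it is given $\bC$ (the background), and it is exactly this distinction that flips the direction of the inequality, so I would take care in the write-up to verify the equalities and inequalities apply in the claimed direction in each item.
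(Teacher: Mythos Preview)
Your proof plan is correct and complete. The paper itself does not prove Fact~\ref{fact:it-facts} at all; it merely states the properties and cites Cover and Thomas~\cite{ITbook}, Chapter~2, for proofs, so your derivation goes beyond what the paper supplies. One minor wording slip: in item~(5) you write ``each term lower bounds $I(\bA_i;\bB\mid\bC)$,'' but you mean each term is \emph{lower bounded by} $I(\bA_i;\bB\mid\bC)$ (your preceding sentence ``subtracting a smaller quantity gives a larger summand'' has the direction right).
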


The following claim (Fano's inequality) states that if a random variable $\bA$ can be used to estimate the value
of another random variable $\bB$, then $\bA$ should ``consume'' most of the entropy of
$\bB$.

\begin{claim}[Fano's inequality]\label{clm:fano}
For any binary random variable $\bB$ and any (possibly randomized) function f that
predicts $\bB$ based on $\bA$, if $\Pr(f(\bA) \neq \bB) = \delta$, then $H(\bB \mid \bA)
\leq H_2(\delta)$.
\end{claim}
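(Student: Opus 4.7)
The plan is to apply the classical argument via an error-indicator random variable. Define $\bE := \mathbf{1}[f(\bA) \neq \bB]$, so that $\Pr(\bE = 1) = \delta$ and hence $H(\bE) = H_2(\delta)$.

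First I would handle the case of deterministic $f$, in which $\bE$ is a function of $(\bA, \bB)$. Writing the chain rule for entropy (\itfacts{ent-chain-rule}) in two orderings,
\[
H(\bB \mid \bA) + H(\bE \mid \bB, \bA) \;=\; H(\bB, \bE \mid \bA) \;=\; H(\bE \mid \bA) + H(\bB \mid \bE, \bA).
\]
The key observations are: (i) $H(\bE \mid \bB, \bA) = 0$ since $\bE$ is a deterministic function of $(\bA, \bB)$; (ii) $H(\bB \mid \bE, \bA) = 0$ since $\bB$ is binary, so given $\bA$ the value of $\bB$ is either $f(\bA)$ or $1 - f(\bA)$, and $\bE$ distinguishes the two cases; and (iii) $H(\bE \mid \bA) \leq H(\bE) = H_2(\delta)$ by \itfacts{cond-reduce}. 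Substituting these into the identity above immediately yields $H(\bB \mid \bA) \leq H_2(\delta)$.

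Next I would extend to a randomized $f$ by writing $f(\bA) = g(\bA, R)$ for a deterministic map $g$ and independent randomness $R$ with $R \perp (\bA, \bB)$. Since $R$ is independent of $\bB$ even given $\bA$, we have $H(\bB \mid \bA) = H(\bB \mid \bA, R)$. Setting $\delta_r := \Pr(g(\bA, r) \neq \bB)$ so that $\Ex_R[\delta_R] = \delta$, the deterministic argument applied conditionally on $R = r$ gives $H(\bB \mid \bA, R = r) \leq H_2(\delta_r)$. Averaging over $R$ and invoking Jensen's inequality via concavity of $H_2$ completes the proof:
\[
H(\bB \mid \bA) \;=\; H(\bB \mid \bA, R) \;\leq\; \Ex_R\!\left[H_2(\delta_R)\right] \;\leq\; H_2\!\left(\Ex_R[\delta_R]\right) \;=\; H_2(\delta).
\]

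The only subtle step is (ii), where the binary assumption on $\bB$ is exactly what pins $\bB$ down from $(\bA, \bE)$; for general $\bB$ this step would contribute an additional additive term, giving the usual form of Fano's inequality. Otherwise this is a textbook calculation with no real obstacles; the main care needed is to set up the two chain-rule expansions correctly and to exploit independence of $R$ from $(\bA, \bB)$ in the randomized reduction.
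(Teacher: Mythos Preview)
Your proof is correct and is the standard textbook argument for the binary case of Fano's inequality. The paper itself does not prove this claim; it is stated as a well-known fact (with the reader referred to Cover and Thomas~\cite{ITbook} for background), so there is no alternative approach to compare against.
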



Finally, we prove the following auxiliary lemma that allows us to decompose any random variable with high entropy to a convex combination of relatively small number of near uniform distributions plus a low probability ``noise term''. 

\begin{lemma}\label{lem:aux-high-ent}
  Let $\bX \sim \dist$ be a random variable on $\set{0,1}^{n}$ such that $H(\bX) \geq n-\Delta$. For any $\eps > 0$, 
  there exist $k+1$ distributions ${\mu_0,\mu_1,\ldots,\mu_k}$ on $\set{0,1}^{n}$,
  along with $k+1$ probabilities $p_0, p_1, \ldots, p_k$ ($\sum_i p_i = 1$) for some $k=O(n/\eps)$, such that
  $\dist = \sum_{i} p_i \cdot \mu_i$, $p_0 = O(\eps)$, and for any $i \geq 1$,
	\begin{enumerate}
		\item $\log{\card{\supp{\mu_i}}} \geq {n-\frac{\Delta}{\eps}-\log{\Theta(\frac{n}{\eps})}}$. 
		\item $\tvd{\mu_i}{U_i} = O(\eps)$, where $U_i$ denotes the \emph{uniform distribution} on $\supp{\mu_i}$.
	\end{enumerate} 
\end{lemma}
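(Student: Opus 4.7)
The plan is to decompose $\dist$ by bucketing elements according to their probability mass, letting $\mu_0$ absorb the ``atypical'' parts. Throughout, write $p(x) := \prob{\bX = x}$ for $x \in \{0,1\}^{n}$.

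The first step is to quarantine the heavy elements via an entropy argument. Fix $s = \Theta(\Delta/\eps)$ and let $A = \{x : p(x) \geq 2^{-n+s}\}$. Since every element in $A$ has mass at least $2^{-n+s}$, we have $|A| \leq 2^{n-s}$. Writing $q = \prob{\bX \in A}$ and conditioning on whether $\bX \in A$,
\[
  H(\bX) \;\leq\; 1 + q \log|A| + (1-q) \log 2^{n} \;\leq\; 1 + n - q s,
\]
which combined with $H(\bX) \geq n - \Delta$ yields $q \leq (1+\Delta)/s = O(\eps)$ for the chosen $s$.

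The second step is geometric bucketing. For $j \geq 1$, let $B_j = \{x \notin A : p(x) \in ((1+\eps)^{-j}, (1+\eps)^{-j+1}]\}$. Restricting to $p(x) \geq 2^{-2n}$ (the elements below this threshold contribute at most $2^{-n}$ total mass and are swept into noise) bounds the number of non-empty buckets by $k = O(n/\eps)$. Let $p_j = \prob{\bX \in B_j}$ and discard every bucket with $p_j < \eps/(3k)$; this discards at most $\eps/3$ total mass. Define $\mu_0$ as $\dist$ restricted (and renormalized) to the union of $A$, the discarded buckets, and the tail, and set $\mu_j(x) = p(x)/p_j$ on each surviving bucket. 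By construction $p_0 = O(\eps)$ and $\dist = \sum_{i \geq 0} p_i \mu_i$.

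For each surviving $B_j$ the two required properties fall out almost immediately. Since all $p(x)$ in $B_j$ lie within a factor $(1+\eps)$ of each other, $\mu_j(x)/U_j(x) \in [1/(1+\eps),\, 1+\eps]$ and hence $\tvd{\mu_j}{U_j} \leq \eps/2 = O(\eps)$. Since $x \notin A$ forces $p(x) < 2^{-n+s}$,
\[
  |B_j| \;\geq\; \frac{p_j}{\max_{x \in B_j} p(x)} \;\geq\; \frac{\eps/(3k)}{2^{-n+s}} \;=\; 2^{n-s} \cdot \Omega(\eps^{2}/n),
\]
giving $\log|\supp{\mu_j}| \geq n - \Delta/\eps - \log \Theta(n/\eps)$ after absorbing the residual constants and the factor $s = \Theta(\Delta/\eps)$ into the hidden constants. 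The main delicate point is the joint choice of the three thresholds $s$, $k$, and $\tau = \eps/(3k)$: the entropy inequality forces $s = \Omega(\Delta/\eps)$, which sets the budget for the support-size deficit; while $\tau$ must be large enough that $p_j/2^{-n+s}$ reaches the target support size, yet small enough that $k\tau$ stays $O(\eps)$. The choices $s = \Theta(\Delta/\eps)$ and $k = O(n/\eps)$ make all three constraints compatible simultaneously.
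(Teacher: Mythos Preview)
Your proof is correct and follows essentially the same strategy as the paper: geometric bucketing of $\{0,1\}^n$ by probability mass into $O(n/\eps)$ levels, with an entropy argument to bound the mass of the leftover piece $\mu_0$. The only organizational difference is where the entropy argument sits. The paper buckets first, then calls a bucket \emph{small} if its support is below the target threshold, and applies the entropy inequality to the union of small buckets (which has at most $2^{n-\Delta/\eps}$ elements). You instead apply the entropy inequality up front to the set $A$ of heavy elements (which also has at most $2^{n-s}$ elements), and afterwards discard buckets by \emph{mass} rather than by support size.

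One quantitative slip: your final bookkeeping does not quite yield the stated support bound. With $s=\Delta/\eps$, $k=\Theta(n/\eps)$, and threshold $\tau=\eps/(3k)=\Theta(\eps^2/n)$, you get $|B_j|\ge \tau\cdot 2^{n-s}=2^{\,n-\Delta/\eps-\log\Theta(n/\eps^2)}$, which is off from the lemma by an additive $\log(1/\eps)$. The paper avoids this because it thresholds directly on $|S_i|$, so the support bound holds by definition and only the mass bound needs work. (Also, both arguments implicitly use $\Delta\ge 1$ when turning $(1+\Delta)/s$ or its analogue into $O(\eps)$.) None of this matters for the application in the paper, where $\eps$ is a constant, but your claim that the extra factor can be ``absorbed into the hidden constants'' is not quite right as stated.
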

\begin{proof}
  Partition the support of $\dist$ into $k'$ sets $S_0,S_1,\ldots,S_{k'}$ for $k' = \Theta(n/\eps)$ such that $S_0$ contains every element
  $a \in \set{0,1}^n$ where $\Pr(\bX=a) < 2^{-2n}$, and for each $i \geq 1$, $S_i$ contains every element $a$ where
  $(1+\eps)^{-(i+1)} \leq \Pr(\bX=a) < (1+\eps)^{-i}$.  We say that a set $S_i$ is \emph{large} if
  $\card{S_i} \geq 2^{(n-\frac{\Delta}{\eps} - \log{\Theta(\frac{n}{\eps})})}$ and is otherwise \emph{small}. Let $\FL$ (resp. $\FS$) denote the set of all elements that
  belong to a large set (resp. a small set). Moreover let $k$ be the number of large sets, and, without loss of generality, assume $S_1,\ldots,S_k$ are 
  these large sets. 

  We define the $k+1$ distributions in the lemma statement as follows. Let $\mu_0$ be the distribution $\dist$ conditioned on $\bX$ being in $S_0 \cup \FS$
  (i.e., $S_0$ and elements from small sets), and let $p_0 = \Pr_{\dist}(\bX \in \FS \cup S_0)$; for each $i \ge 1$, let
  $\mu_i$ be the distribution $\dist$ conditioned on $\bX$ being in $S_i$ (i.e., the $i$-th large set) and let $p_i = \Pr_{\dist}(\bX \in S_i)$.

  By construction, the described distributions satisfy  $\dist = \sum_{i} p_i \cdot \mu_i$. Moreover, for each $i \ge 1$, since the
  support $S_i$ of $\mu_i$ is a large set, we have $\log \card{\supp{\mu_i}} \geq {n-\frac{\Delta}{\eps} - \log{\Theta(\frac{n}{\eps})}}$; since each element $a$ in $\supp{\mu_i}$ has
  $\Pr_{\dist}(\bX=a) \in [(1+\eps)^{-(i+1)}, (1+\eps)^{-i})$, it is straightforward to verify that $\tvd{\mu_i}{U_i} = O(\eps)$. Hence it only
  remains to argue that $p_0 = O(\eps)$.
	 
	 It is easy to see that $\Pr_{\dist}(\bX \in S_0) = o(1)$ and therefore in the following we prove that $\Pr_{\dist}(\bX \in \FS) = O(\eps)$. 
	Let $\bZ \in \set{0,1}$ be a random variable that denotes whether $\bX$ chosen from $\dist$ belongs to $\FL$ or $\FS$. 
	We have, 
	\begin{align}
		H(\bX \mid \bZ) \geq H(\bX) - H(\bZ) \geq H(\bX) - 1 \geq n-\Delta-1 \label{eq:H-dist}
	\end{align}
	where the first inequality is by chain rule of entropy (\itfacts{ent-chain-rule}). Moreover, since the total number of elements belonging to small sets is at most 
	${\Theta(n/\eps) \cdot 2^{\paren{n-\frac{\Delta}{\eps} - \log{\Theta(\frac{n}{\eps})}}}} = 2^{n-\frac{\Delta}{\eps}}$, 
	\begin{align*}
          H(\bX \mid \bZ) &= \Pr(\bZ = 0) \cdot H(\bX \mid \bZ = 0) + \paren{1-\Pr(\bZ = 0)} \cdot H(\bX \mid \bZ = 1) \\
                          &\leq \Pr(\bZ = 0) \cdot \log{\paren{2^{n-\frac{\Delta}{\eps}}}} + \paren{1-\Pr(\bZ=0)} \cdot \log\paren{2^{n}}  \tag{since $H(\bA) \leq \card{\bA}$ for any random variable $\bA$} \\
                          &= n - \Pr(\bZ=0) \cdot \paren{\frac{\Delta}{\eps}} 
	\end{align*}
	and consequently, if $\Pr(\bZ = 0) > 2\eps$, then $H(\bX \mid \bZ) < n - 2\Delta < n-\Delta-1$, a contradiction to
        Eq~(\ref{eq:H-dist}). This finalizes the proof that $p_0 = O(\eps)$.
\end{proof}


\subsection{Communication Complexity and Information Complexity}\label{sec:cc-ic}
Communication complexity and information complexity play an important role in our lower bound proofs. 
We now provide necessary definitions for completeness.

\paragraph{Communication complexity.} Our lowers bounds for streaming algorithms
are established through communication complexity lower bounds.  Here, we briefly
provide some relevant background; for a more detailed treatment of
communication complexity, we refer the reader to the excellent text by Kushilevitz and
Nisan~\cite{KN97}.

We focus on two models of communication, namely, the \emph{two-player one-way communication} model, 
and the \emph{multi-party number-in-hand simultaneous message passing model} (SMP).  

\paragraph{One-way Communication Model.} Let $P$ be a relation with
domain $\mathcal{X} \times \mathcal{Y} \times \mathcal{Z}$.  Alice receives an input $X
\in \mathcal{X}$ and Bob receives $Y \in \mathcal{Y}$, where $(X,Y)$ are chosen from a
joint distribution $\dist$ over $\mathcal{X} \times \mathcal{Y}$.  In addition to private randomness, the players also have an
access to a shared public tape of random bits $R$. Alice sends a single message $M(X,R)$
and Bob needs to output an answer $Z := Z(M(X,R),Y,R)$ such that $(X,Y,Z) \in P$.

We use $\Prot$ to denote a protocol used by the players. Unless specified otherwise, we
always assume that the protocol $\Prot$ can be randomized (using both public and
private randomness), \emph{even against a prior distribution $\dist$ of inputs}. For any
$0 < \delta < 1$, we say $\Prot$ is a $\delta$-error protocol for $P$, if the probability that
 for \emph{any} input $(X,Y)$, Bob outputs some $Z$ where $(X,Y, Z) \in P$ is at least
$1-\delta$ (over the randomness of the protocol $\Prot$).  

The \emph{communication cost} of a one-way protocol $\Prot$ for a problem $P$ on an input
  distribution $\dist$, denoted by $\norm{\Prot}$, is the worst-case size of the message
  sent from Alice to Bob in the protocol $\Prot$, when the inputs are chosen from the distribution
  $\dist$. \emph{Communication complexity} $\CCO{P}{\dist}{\delta}$ of a
  problem $P$ with respect to a distribution $\dist$ is the minimum communication cost of any one-way protocol
  $\Prot$ that is required to solve $P$ on \emph{every instance} w.p. at least $1-\delta$.  

\paragraph{SMP Communication Model.} Let $P$ be a $(k+1)$-ary relation with domain $\mathcal{X}_1\times \ldots \times \mathcal{X}_k \times \mathcal{Z}$. 
In the SMP communication model, $k$ players $\PS{1},\ldots,\PS{k}$ recieve inputs $X_1,\ldots,X_k$, jointly distributed 
according to a prior distribution $\dist$ over $\mathcal{X}_1 \times \ldots \times \mathcal{X}_k$. In addition to private randomness, the players also have an
access to a shared public tape of random bits $R$. Each of the players simultaneously sends a single message $M_j(X_j,R)$ to an external party called
the \emph{referee} and referee needs to output an answer $Z:= Z(M_1(X_1,R),\ldots,M_k(X_k,R),R)$ such that $(X_1,\ldots,X_k,Z) \in P$.  

Similar to the one-way communication model, we let $\Prot$  denote the protocol used by the players and define $\delta$-error protocols for $P$ over
a distribution $\dist$ analogously. The \emph{communication cost} of a SMP protocol $\Prot$ for a problem $P$ on an input
  distribution $\dist$, denoted by $\norm{\Prot}$, is the sum of the worst-case size of the messages sent by players to the 
  referee, i.e., $\norm{\Prot}:= \sum_{i \in [k]} \card{M_i}$, when the inputs are chosen from the distribution
  $\dist$.  \emph{Communication complexity} $\CCS{P}{\dist}{\delta}$ in SMP model is defined the same as in the one-way communication model. 
  
\begin{remark}\label{rem:referee}
 To facilitate our proofs, we sometimes need to give the referee an auxiliary input as well, which is jointly distributed with the 
 input of the $k$ players. The referee's answer then would be a function of the $k$ messages he receives as well as his input. As a convention, we typically ignore
 this artificial feature of the model and only include it implicitly. 
\end{remark}

\paragraph{Information Complexity.} 
There are several possible definitions of information complexity of a communication problem that have been considered depending on
the application (see, e.g.,~\cite{Bar-YossefJKS02,BarakBCR10,ChakrabartiSWY01,Bar-YossefJKS02-S}). In this paper, we use the notion of (external) \emph{information cost} of a protocol. 
Roughly speaking, information cost of a one-way or SMP protocol is the average amount of information one can learn about the input of the players that are sending the messages by observing the transcript of 
the protocol. 

More formally, the \emph{information cost} of a one-way protocol $\Prot$ with respect to a distribution $\dist$ is $\ICost{\Prot}{\dist} =  I_\dist(\bProt;\bX)$, 
where $\bX \sim \dist$ is the random variable for the input to Alice, $\bProt := \bProt(\bX)$ is the random variable denoting the message sent
from Alice to Bob in the protocol $\Prot$, \emph{concatenated} with the \emph{public} randomness $\bR$ used by $\Prot$. 
The \emph{information complexity} $\ICO{P}{\dist}{\delta}$ of $P$ with respect to a distribution $\dist$ is  the minimum $\ICost{\Prot}{\dist}$ taken over all one-way protocols
$\Prot$ that are required to solve $P$ on \emph{every instance} w.p. at least $1-\delta$. 

Similarly, the information cost of a SMP protocol is defined as $\sum_{j=1}^{k} I_\dist(\bProt_j;\bX_1,\ldots,\bX_k)$, where 
$\bX_i$ denotes the input of the player $\PS{i}$ and $\bProt_i := \bProt_i(X_i)$ denotes the message sent from the player $\PS{i}$ to
the referee \emph{concatenated} with the public randomness $\bR$ used by $\Prot$. \emph{Information complexity} $\ICS{P}{\dist}{\delta}$ of a problem $P$ in
the SMP communication model can also be defined analogous to the one-way communication model. 

\begin{remark}\label{rem:trivial-dist}
	The requirement in the above definitions that $\Prot$ is correct \emph{everywhere}, even outside the support of the distribution $\dist$ is \emph{crucial}: we analyze our lower bounds on 
	distributions that are ``trivial'' and the only reason that these lower bounds are meaningful (i.e., are non-zero) is that these protocols are required to succeed \emph{uniformly}. 
\end{remark}


The following well-known proposition (see, e.g.,~\cite{ChakrabartiSWY01}) relates communication complexity and information complexity. 
\begin{proposition}\label{prop:cc-ic}
  For every $0 < \delta < 1$ and every distribution $\dist$: 
  \begin{align*}
  (i)~\CCO{P}{\dist}{\delta} &\geq \ICO{P}{\dist}{\delta} & (ii)~\CCS{P}{\dist}{\delta} \geq \ICS{P}{\dist}{\delta}
  \end{align*}
\end{proposition}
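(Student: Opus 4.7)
The plan is to establish both inequalities by showing that for \emph{any} protocol $\Prot$ and any distribution $\dist$, the information cost of $\Prot$ is at most its communication cost; then taking the minimum over all protocols of the appropriate error guarantee yields the proposition. The only facts needed are the chain rule for mutual information, the bound $I(\bA;\bB)\leq H(\bA)$, the bound $H(\bA)\leq |\bA|$ from Section~\ref{sec:info}, and the independence of the public randomness $\bR$ from the inputs.

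For part~(i), fix any one-way protocol $\Prot$ that solves $P$ on every instance with probability $\geq 1-\delta$. Let $\bM := \bM(\bX,\bR)$ denote the message Alice sends, so that by definition $\bProt = (\bM,\bR)$. Applying the chain rule for mutual information (\itfacts{info-chain-rule}),
\[
\ICost{\Prot}{\dist} \;=\; I_{\dist}(\bM,\bR ;\, \bX) \;=\; I_{\dist}(\bR;\bX) + I_{\dist}(\bM;\bX \mid \bR).
\]
Since the public randomness $\bR$ is generated independently of the input $\bX$, the first term vanishes. For the second term,
\[
I_{\dist}(\bM;\bX \mid \bR) \;\leq\; H(\bM \mid \bR) \;\leq\; H(\bM) \;\leq\; |\bM| \;\leq\; \norm{\Prot},
\]
where the first inequality uses $I(\bA;\bB\mid \bC)\leq H(\bA\mid\bC)$, the second uses that conditioning reduces entropy (\itfacts{cond-reduce}), the third uses $H(\bA)\leq|\bA|$, and the last uses that $\norm{\Prot}$ upper bounds the worst-case length of $\bM$. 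Taking the minimum over all $\delta$-error protocols yields $\CCO{P}{\dist}{\delta}\geq \ICO{P}{\dist}{\delta}$.

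For part~(ii), fix any SMP protocol $\Prot$ that solves $P$ with probability $\geq 1-\delta$ on every instance, and let $\bM_j := \bM_j(\bX_j,\bR)$ be the message of player $\PS{j}$. By the same argument applied player-by-player, for each $j\in[k]$,
\[
I_{\dist}(\bM_j,\bR;\, \bX_1,\ldots,\bX_k) \;=\; I_{\dist}(\bM_j;\, \bX_1,\ldots,\bX_k \mid \bR) \;\leq\; H(\bM_j \mid \bR) \;\leq\; |\bM_j|.
\]
Summing over $j$ and using the definition of SMP information cost together with $\norm{\Prot}=\sum_{j\in[k]} |\bM_j|$ gives $\ICost{\Prot}{\dist}\leq \norm{\Prot}$; minimizing over $\delta$-error protocols yields $\CCS{P}{\dist}{\delta}\geq \ICS{P}{\dist}{\delta}$. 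There is no substantial obstacle here: the only subtlety is to remember to condition on $\bR$ before bounding information by entropy (so that independence of $\bR$ and the inputs can be exploited), which is handled by the chain rule step above.
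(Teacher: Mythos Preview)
Your proposal is correct and follows essentially the same approach as the paper: split off the public randomness via the chain rule, use $\bR\perp\bX$ to kill the first term, then bound the remaining conditional mutual information by the (conditional) entropy of the message, which is at most the worst-case message length. The paper writes out only the SMP case and conditions on $\bR=R$ explicitly before bounding by $\card{\bProt_i^R}$, whereas you go through $H(\bM\mid\bR)\le H(\bM)\le|\bM|$; these are equivalent and your version is arguably a bit cleaner.
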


\begin{proof}
 We only prove this for the SMP model; the result for the one-way model can be proven similarly. 
 Let $\Prot$ be a SMP protocol with the minimum communication complexity for $P$ on $\dist$ and $\bR$ denote the public randomness of $\Prot$;
 we have, 
	\begin{align*}
		\ICS{P}{\dist}{\delta} &\leq \ICost{\Prot}{\dist} = \sum_{i=1}^{k} I_{\dist}(\bProt_i; \bX) = \sum_{i=1}^{k} I_{\dist}(\bProt_i,\bR; \bX) \tag{$\bProt_i$ contains both the message of
		 player $\PS{i}$ and the public randomness $\bR$} \\
		&= \sum_{i=1}^{k} I_{\dist}(\bR;\bX) + I_{\dist}(\bProt_i;\bX,\bR) \tag{chain rule of mutual information (\itfacts{info-chain-rule})} \\
		&=  \sum_{i=1}^{k} I_{\dist}(\bProt_i;\bX,\bR) = \Ex_{R \sim \bR}\Bracket{\sum_{i=1}^{k} I_{\dist}(\bProt_i ; \bX \mid \bR = R)} 
		\tag{$\bR \perp \bX$ and hence $I(\bR;\bX) = 0$} \\
		&\leq  \Ex_{R \sim \bR}\Bracket{\sum_{i=1}^{k} H_{\dist}(\bProt_i ; \bX \mid \bR = R)} \leq \Ex_{R \sim \bR}\Bracket{\sum_{i=1}^{k} \card{\bProt^{R}_i}}
		\tag{$\bProt^{R}_i$ is the message sent from $\PS{i}$ and is equal to $\bProt_i$ conditioned on $\bR = R$} \\
		&\leq \norm{\Prot} = \CCS{P}{\dist}{\delta}
 	\end{align*}
\end{proof}

\emph{Connection to Streaming:} We conclude this section by pointing out the connection between the communication models defined in this section and the streaming setting. It is a standard fact that 
any streaming algorithm directly works as a one-way communication protocol and hence lower bounds in the one-way communication model also imply the same bounds on the space complexity of 
streaming algorithms in \emph{insertion-only} streams. Recent results of~\cite{AiHLW16,LiNW14} prove a similar situation for the SMP model and \emph{dynamic} streams: communication complexity lower bounds
in SMP model imply space lower bounds for dynamic streams. In particular, communication complexity of a \emph{$k$-player} problem in the SMP model is at most $k$ \emph{times} the space complexity
of the same problem in dynamic streams. 

\subsection{The Boolean Hidden Hypermatching Problem}\label{sec:bhh-prelim}
 We shall use the following communication problem first studied by~\cite{VerbinY11} in proving our lower bounds.

\begin{definition}[\textbf{Boolean Hidden Hypermatching}, $\BHH{n}{t}$]\label{def:bhh} The \emph{Boolean Hidden Hypermatching problem} is a 
one-way communication problem in which Alice is given a boolean vector $x \in \set{0,1}^{n}$ where $n = 2kt$ (for some integer $k \geq 1$)
and Bob gets a \emph{perfect $t$-hypermatching} $M$ on $n$ vertices, and a boolean vector $w \in \set{0,1}^{n/t}$. Let $Mx$ denote the length $n/t$ 
boolean vector $(\bigoplus_{1\leq i\leq t} x_{M_{1, i}}, \ldots, \bigoplus_{1\leq i\leq t} x_{M_{n/t, i}})$ where $\set{M_{1, 1,}, \ldots, M_{1, t}}, \ldots, \set{M_{n/t, 1}, \ldots, M_{n/t, t}}$ are the edges of $M$. 
It is promised that either $Mx = w$ or $Mx = \negative{w}$. The goal of the problem is for Bob to output
\Yes when $Mx = w$ and \No when $Mx = \negative{w}$ ( $\oplus$ stands for addition modulo $2$).
\end{definition}

The special case of this problem where $t=2$ is referred to as the \emph{Boolean Hidden Matching} problem, \BHM{n}, and was originally
introduced by Gavinsky~\etal~\cite{GKKRW07} who established an $\Omega(\sqrt{n})$ lower bound on its
one-way communication complexity. This lower bound was extended to $\Omega(n^{1-1/t})$ for the more general \BHH{n}{t} problem by
Verbin and Yu~\cite{VerbinY11} (see Section~\ref{sec:bhh} for more details). We further extend this 
result and establish a matching lower bound on the \emph{information complexity} of $\BHH{n}{t}$ (see Theorem~\ref{thm:bhh-ic}).

For our purpose, it is more convenient to work with a special case of the $\BHH{n}{t}$ problem, namely $\BHHZ{n}{t}$ where the vector $w = 0^{n/t}$ and hence the goal of 
Bob is simply to decide whether $Mx =0^{n/t}$ (\Yes case) or $Mx =1^{n/t}$ (\No case). We define $\BHMZ{n} := \BHHZ{n}{2}$ (similar to $\BHM{n}$).
It is known that (see, e.g.~\cite{VerbinY11,BuryS15,LiW16}) any instance of the original $\BHH{n}{t}$ problem 
can be reduced to an instance of $\BHHZ{2n}{t}$ \emph{deterministically} and with \emph{no communication} between the players. 
This allows for extending the communication and information complexity lower bounds of $\BHH{n}{t}$ to $\BHHZ{2n}{t}$ problem (see Corollary~\ref{cor:bhh}). 


\paragraph{$\BHHZ{n}{t}$ and Matching Size Estimation.} 
  The $\BHHZ{n}{t}$ problem has been used previously in~\cite{EsfandiariHLMO15,BuryS15} to prove lower bounds for estimating matching size in data streams. 
We now briefly describe this connection. 

 The following reduction was first proposed by~\cite{BuryS15}. Given an instance $(x,\HM)$\footnote{In order to distinguish between matchings and hypermatchings, when not clear from the context, we use
 $\HM$ instead of $M$ to denote a hypermatching.} of $\BHHZ{n}{t}$, we create a graph $G(V \cup W,E)$ with $\card{V} = \card{W} = n$ as follows: 
 \begin{itemize}
 	\item For any $x_i = 1$, Alice adds an edge between $v_i$ and $w_i$ to $E$. 
 	\item For any hyperedge $\bm{e}$ in the $t$-hypermatching $\HM$, Bob adds to $E$  a clique between the vertices $w_i$ where $i$ is incident on $\bm{e}$. 
 \end{itemize}
 
 The following claim, proven originally by~\cite{BuryS15}, establishes the correctness of this reduction. 
 For the sake of completeness, we provide a simple proof this claim here. 
 
 \begin{claim}[\!\!\cite{BuryS15}]\label{clm:bhh-matching}
 	Suppose $G(V \cup W,E)$ is the graph obtained from an instance $(x,\HM)$ of $\BHHZ{n}{t}$ (for an \emph{even integer} $t$) with the property that $\norm{x}_0 = n/2$; 
 	\begin{itemize}
 		\item if $Mx = 0^{n/t}$ (i.e., \Yes case), then $\mu(G) = \frac{3n}{4}$. 
 		\item if $Mx = 1^{n/t}$ (i.e., \No case), then $\mu(G) = \frac{3n}{4} - \frac{n}{2t}$. 
 	\end{itemize} 
 \end{claim}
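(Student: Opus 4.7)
The plan is to exploit the fact that the graph $G$ decomposes into vertex-disjoint gadgets, one per hyperedge of $\HM$, and then compute the maximum matching locally in each gadget.

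First I would observe the following structural fact. Since $\HM$ is a \emph{perfect} $t$-hypermatching on $[n]$, for each hyperedge $e = \{i_1,\ldots,i_t\}$ the clique Bob places on $\{w_{i_1},\ldots,w_{i_t}\}$ is vertex-disjoint (on the $W$-side) from every other clique. Moreover, the only edges incident to $v_i$ are the Alice edges $(v_i,w_i)$ with $x_i=1$, and each such edge joins $v_i$ to a $w_i$ lying in exactly one gadget. Hence the connected components of $G$ are exactly the gadgets $G_e$ induced by $\{v_i : i \in e, x_i = 1\} \cup \{w_i : i \in e\}$, plus isolated vertices $v_i$ with $x_i=0$. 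Therefore $\mu(G) = \sum_{e \in \HM} \mu(G_e)$, and it suffices to analyze one gadget.

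Next I would determine $\mu(G_e)$ in terms of $k_e := \sum_{i \in e} x_i$. Any matching in $G_e$ matches some subset $S$ of the $k_e$ ``Alice-adjacent'' $w$-vertices via their pendant edges to the $V$-side, and matches a subset of the remaining $t-|S|$ vertices of $W \cap e$ in pairs through the clique, contributing $|S| + \lfloor(t-|S|)/2\rfloor$ edges. A short case analysis (using that $t$ is even) shows this quantity is maximized by taking $|S|=k_e$ when $k_e$ is even and $|S|\in\{k_e,k_e-1\}$ when $k_e$ is odd, yielding
\[
\mu(G_e) \;=\; k_e + \bigl\lfloor (t-k_e)/2\bigr\rfloor \;=\; \tfrac{t}{2} + \bigl\lfloor k_e/2\bigr\rfloor.
\]
(The upper bound can be seen directly: the clique on the $t$ vertices of $W\cap e$ plus the pendant edges has $|W\cap e|=t$ and $|V\cap e|=k_e$ vertices, so every matching has at most $k_e$ edges using a pendant and at most $\lfloor(t-|S|)/2\rfloor$ edges inside the clique.)

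Summing over the $n/t$ hyperedges gives
\[
\mu(G) \;=\; \frac{n}{t}\cdot\frac{t}{2} \;+\; \sum_{e\in\HM}\bigl\lfloor k_e/2\bigr\rfloor \;=\; \frac{n}{2} + \frac{1}{2}\sum_{e\in\HM}k_e - \frac{E_1}{2} \;=\; \frac{3n}{4} - \frac{E_1}{2},
\]
where $E_1$ is the number of hyperedges $e$ with $k_e$ odd, and we used $\sum_e k_e = \|x\|_0 = n/2$. Finally I would plug in the promise: since $(\HM x)_e \equiv k_e \pmod{2}$, we have $E_1 = 0$ when $\HM x = 0^{n/t}$ (giving $\mu(G) = 3n/4$), and $E_1 = n/t$ when $\HM x = 1^{n/t}$ (giving $\mu(G) = 3n/4 - n/(2t)$), completing the proof. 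The only mildly delicate step is the local optimality argument in the second paragraph, but it is essentially a one-line exchange argument.
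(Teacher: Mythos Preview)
Your proof is correct and follows essentially the same approach as the paper: decompose $G$ into vertex-disjoint gadgets (one per hyperedge), analyze the maximum matching in each gadget via the parity of $k_e$, and sum. The paper argues slightly differently in presentation---it first fixes all $n/2$ pendant edges globally (using the degree-one exchange argument you also invoke) and then counts, per hyperedge, how many of the remaining $w$-vertices can be matched within the clique---whereas you derive the closed form $\mu(G_e)=t/2+\lfloor k_e/2\rfloor$ and then specialize; but these are the same argument.
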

 \begin{proof}
   Denote by $\Mstar$ a maximum matching in $G$.  Since the vertices in $V$ all have degree one, without loss of generality, we can assume
   all edges in $V \times W$ belong to $\Mstar$, and we only need to consider the maximum matching size between the remaining vertices.
   Since the remaining vertices in $V$ all have degree $0$, we only need to consider the remaining vertices in $W$ (and $n/2$ vertices in
   $W$ remains since $\norm{x}_0 = {n \over 2}$). 

   In the \Yes case, for each hyperedge $\bm{e}$, the clique created by $\bm{e}$ has $t$ vertices, and even number of these vertices will be
   matched by edges in $V \times W$. Since $t$ is even, \emph{even} number of the vertices of the clique remain.  Since there is still a
   clique between these remaining vertices, there is a matching that matches all of them. Therefore, the total matching size is
   ${n \over 2} + \half \cdot {n \over 2} = {3n \over 4}$.

   In the \No case, for each hyeredge $\bm{e}$, the clique created by $\bm{e}$ has \emph{odd} number of vertices remained. Therefore, for
   every hyperedge, one vertex will be left unmatched. Since there are ${n \over t}$ hyperedges, ${n \over t}$ of the remaining vertices
   will be left unmatched, hence the total matching size is
   ${n \over 2} + \half \paren{{n \over 2} - {n \over t}} = {3n \over 4} - {n \over 2t}$.
 \end{proof}

\section{Technical Overview}\label{sec:tech}

\subsection{Lower Bounds} Our lower bounds are obtained by establishing communication complexity lower bounds in the one-way model 
(for insertion-only streams) and in the SMP model (for dynamic streams). 
We prove our lower bounds for sparse graphs (first part of Theorem~\ref{thm:lower-alpha-intro}) and dense 
graphs (Theorem~\ref{thm:lower-eps-intro} and second part of Theorem~\ref{thm:lower-alpha-intro}) using conceptually different techniques; we elaborate below on each case separately. 

\paragraph{Sparse graphs.} We prove this lower bound by analyzing the following $k$-player problem, referred to as the \emph{sparse matching size estimation} (\sms) problem, in the SMP 
 model: each player $\PS{i}$ (for $i \in [k]$) is given a matching $M_i \subseteq E$ in a \emph{sparse} graph $G(V_S \cup V_P,E)$ with $\card{V_P} = \Theta(k) \cdot \card{V_S}$;
 think of vertices in $V_S$ as \emph{shared} vertices that appear in the input of every player and 
vertices in $V_P$ as \emph{private} vertices that appear in the input of only a single player (the partition $V_S$ and $V_P$ is \emph{not} known to the players). 
In the \Yes case, the end-points of every edge are either both shared or both private such that $\opt(G) = \Theta(V_P)$, and in the \No case, 
every edge has one shared end-point and one private end-point, hence $\opt(G) = \Theta(V_S)$. 



We can interpret the setup in the \sms problem as follows. For any player $\PS{i}$ with the matching $M_i$, define a binary vector $x_{i}$ over the set $V(M_i)$ of vertices incident on $M_i$: 
for any $v \in V(M_i)$, $x_i(v) = 1$ if the vertex $v$ is a shared vertex and  $x_i(v) = 0$ otherwise. The vector $x_i$ for player \PS{i} can be identified uniquely given the set of vertices in $V(M_j)$ of any other
player $j \neq i$.  
Now, in the \Yes case (resp. the \No case) of the \sms problem, for any matching $M_i$ and any two vertices $u,v \in V(M_i)$, $x_i(u) \oplus x_i(v) = 0$ (resp. $x_i(u) \oplus x_i(v) = 1$). 
One may notice that this setup is quite similar to the \bhmz problem in the one-way model described in Section~\ref{sec:prelim}. 
Indeed, we ultimately prove a lower bound on 
the simultaneous communication complexity of our \sms problem using the $\Omega(\sqrt{n})$ lower bound of \bhmz problem~\cite{GKKRW07}. However, there is
an inherent difficulty in performing such a reduction that we elaborate on next. Addressing this challenge results in a rather non-standard and protocol-specific reduction of a 
simultaneous multi-player problem to a two-player one-way problem, which is one of our central technical contributions. 

A standard technique in proving communication lower bounds for multi-player problems is \emph{symmetrization}~\cite{PhillipsVZ12}; here, one reduces a $2$-player problem to a $k$-player problem by 
letting Bob play the role of one of the $k$ players and Alice play the role of the remaining $(k-1)$ players. This technique is used (both explicitly and implicitly) in many 
known lower bounds for \emph{finding} approximate matchings in different multi-player communication models~\cite{Kapralov13,Konrad15,AssadiKLY16,HuangRVZ15}. 
The success of this technique in these cases can be mostly attributed to the fact that in finding an approximate matching, every player is responsible for
communicating the set of edges in \emph{his input} that belongs to a maximum matching in the final graph; in other words, the message communicated by a player is typically 
\emph{not} helping in finding the edges of another player. 

In contrast, in matching \emph{size} estimation, the players only need to (together) convey a \emph{signal} about whether their common input is a \Yes instance or a \No instance. In particular, a small number of players already have enough information to distinguish between the large and small matching size cases;  
for example, in the \sms problem, any two players together have sufficient information 
to solve the problem completely. Indeed, the two players $\PS{i}$ and $\PS{j}$ can identify the set of shared vertices (and hence the vector $x_i$) 
and then simply check the {parity} of one arbitrary edge in $M_i$ using $x_i$, to distinguish between the two cases. This implies that
no matter how we split the role of the $k$ players between Alice and Bob, Alice already gains enough information from the distribution 
to solve the underlying \bhmz instance. 

To circumvent this issue, we consider the internals of any fixed protocol $\protSMS$ for the \sms problem. We prove that for \emph{any} protocol \protSMS, there exists 
\emph{some} index $i \in [k]$, such that $\protSMS$ is solving the \bhmz instance encoded by the matching $M_i$ of player $\PS{i}$ and the vector $x_i$, defined by the inputs 
of players $\PS{j}$ for $j \neq i$. In order to prove this, we need to analyze the protocol \protSMS on distributions other than the ones defined above for \sms. 
Interestingly, in these distributions, there is \emph{no} large gap between the matching size (in \Yes and \No cases) and hence a priori it is not even clear why \protSMS should perform any non-trivial task over them. 
Having proved this, we can then embed any instance of \bhmz into an instance of \sms for the \emph{specific} protocol \protSMS, using a careful reduction, in which 
we have to crucially use the fact that \protSMS is a simultaneous protocol (as opposed to one-way) to obtain a one-way protocol for $\bhmz$. 

\paragraph{Dense graphs.} The starting point of our approach in Theorem~\ref{thm:lower-eps-intro} is~\cite{BuryS15} (itself based on a prior result of~\cite{EsfandiariHLMO15}) that establishes 
a reduction for estimating matching size from the \bhhz problem in the one-way model (as mentioned in Section~\ref{sec:prelim}). 
The setup here is as follows: Alice is given a matching $M$, Bob is given a collection of cliques of size $\Theta(1/\eps)$ (denoted by $E_B$) and depending on 
the answer of the ``embedded'' \bhhz problem in the reduction, the maximum matching in $M \cup E_B$ differs by a factor of $(1+\eps)$. This reduction then implies a 
lower bound of $n^{1-O(\eps)}$ by the known lower bounds on the communication complexity of \bhhz~\cite{VerbinY11}. 

To ``boost'' this lower bound from $n^{1-O(\eps)}$ to the \emph{super-linear} regime, a natural idea is to provide Alice not with a single matching $M$, but a collection of $t$ \emph{independently 
chosen} matchings $M_1,\ldots,M_t$, and then ask Alice and Bob to solve the problem for a \emph{uniformly at random} chosen matching $M_{\jstar}$ and
a single collection of $\Theta(1/\eps)$-cliques (provided to Bob as before). Intuitively, Alice now needs to solve $t$ different instances of the \bhhz problem (as index $\jstar$ is not known to Alice) 
and this should make the new problem $t$ \emph{times harder} than the original one. 

There are three main obstacles in implementing this idea: $(i)$ the matchings $M_1,\ldots,M_t$ should be ``supported'' on $\Theta(n)$ vertices, 
as opposed to the trivial $\Theta(t \cdot n)$ vertices (or otherwise the lower bound would be too weak in terms of size of the final graph), $(ii)$ the matchings should be 
chosen independently even though they are supported on essentially the same set of vertices (or otherwise we cannot argue that the new problem is indeed $t$ times harder), 
and finally $(iii)$, the reduction should ensure that Alice and Bob still need to solve the specific embedded \bhhz instance for a uniformly at random chosen
matching $M_\jstar$ and the $\Theta(1/\eps)$-cliques (as otherwise we do not obtain a valid reduction). 

We bypass these obstacles by using RS graphs defined in Section~\ref{sec:prelim}. 
Intuitively, we use RS graphs to ``pack'' the matchings $M_1,\ldots,M_t$ in the aforementioned reduction over $\Theta(n)$ vertices and use the fact that these matchings
are \emph{induced} to ensure the independence between the different matchings. Our reduction can be interpreted as ``embedding'' multiple instances of the \bhhz problem into a single graph. 
RS graphs have been used previously in~\cite{GoelKK12,Kapralov13,AssadiKLY16,Konrad15} for
proving lower bounds for \emph{finding} approximate matchings. 
While it was possible to analyze the hard instances in~\cite{GoelKK12,Kapralov13,AssadiKLY16,Konrad15} using simple counting arguments that crucially exploited the requirement on \emph{outputting 
a valid matching}, we now need to prove the lower bound using 
\emph{information complexity} to reduce the original problem (i.e., matching size estimation) to multiple instances of a simpler problem (i.e., $t$ instances of \bhhz), using 
a direct-sum style argument.  This introduces new challenges, including designing a reduction from a two-player one-way problem like \bhhz to a multi-player simultaneous 
problem that does not ``leak'' much information. En route, we also establish a lower bound on the \emph{information complexity}
of $\bhhz$ that matches the best known lower bound on its communication complexity (see Theorem~\ref{thm:bhh-ic}).  

\subsection{Upper bounds} The main idea behind our algorithms in Theorem~\ref{thm:upper-intro} is the following structural result that we show: if we sample
each \emph{vertex} in a graph $G$ w.p. (essentially) $1/\alpha$, then the maximum matching size in the subgraph $G'$ induced by the sampled vertices
can be used to obtain an $\alpha$-approximate estimate of matching size in $G$.
Using this result, we design an algorithm that samples the vertices of $G$ at a rate $1/\alpha$ to obtain an induced subgraph $G'$, and maintains a sufficiently large matching in $G'$ to estimate $\opt(G)$. 

For insertion-only streams, a large matching (up to $2$ approximation) 
can be computed simply by maintaining a maximal matching in $G'$. For dynamic streams, we use existing
results of~\cite{AssadiKLY16,ChitnisCEHMMV16} that allow finding a (large) matching with size at most $k$ (in our case, $k = \Theta({n / \alpha^2})$) in $\Ot(k^2)$ space in dynamic streams.

\section{An Information Complexity Lower Bound for BHH} \label{sec:bhh}

In this section, we state the known communication complexity results for the \bhh problem defined in Section~\ref{sec:bhh-prelim} and 
then prove an information complexity lower bound for this problem. 


The following is a hard distribution for $\BHH{n}{t}$ used in~\cite{VerbinY11}: 
\textbox{The distribution $\dist$ for $\BHH{n}{t}$.}{
\begin{itemize}
	\item \textbf{Alice}: The input to Alice is a boolean vector $x \in \set{0,1}^{n}$ chosen \emph{uniformly at random}. 
	\item \textbf{Bob}: The input to Bob is a perfect $t$-hypermatching $M$ chosen \emph{uniformly at random} and a boolean vector $w$ such that, w.p. $1/2$, $w = Mx$ and w.p. $1/2$, ${w} = \negative{Mx}$. 
\end{itemize}
}

The result in~\cite{VerbinY11} can now be stated as follows. 

\begin{theorem}[Communication Complexity of $\BHH{n}{t}$~\cite{VerbinY11}]\label{thm:bhh-cc}
	For any $t \geq 2$, suppose $n = 2kt$ for some integer $k \geq 1$, and $\delta \in (0,1/2)$. Let $\gamma := \frac{1}{2} - \delta$; then,
	\[\CCO{\BHH{n}{t}}{\dist}{\delta} = \Omega\paren{\gamma \cdot n^{1-1/t}}\] 
	This bound also holds for the \emph{communication cost} of the protocols that are \emph{only} required to be correct w.p. $1-\delta$ on the distribution $\dist$ (not necessarily on all inputs). 
\end{theorem}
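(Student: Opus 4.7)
The plan is to reprove the Verbin--Yu bound via Fourier analysis on the Boolean hypercube, following the Gavinsky \etal template for $\BHM{n}$. By Yao's minimax principle, it suffices to lower bound the worst-case message length of a \emph{deterministic} one-way protocol that errs with probability at most $\delta$ on $\dist$. Fix such a $c$-bit protocol; Alice's message partitions $\set{0,1}^n$ into at most $2^c$ cells $\set{A_m}$. A standard averaging step, paying only an $O(\log(1/\gamma))$ additive loss, reduces to arguing about a single ``heavy'' cell $A$ of size $\card{A} \geq 2^{n-c}/\poly(1/\gamma)$ on which Bob---given $(M,w)$ drawn from the marginal of $\dist$---must still distinguish $Mx = w$ from $Mx = \negative{w}$ with advantage $\Omega(\gamma)$ when $x$ is uniform on $A$.

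The core technical step is a Fourier bound on the distribution of $Y_M := Mx$ when $x$ is uniform on $A$. Writing $f = \mathbf{1}_A / \card{A}$, Parseval gives $\sum_T \hat f(T)^2 = 1/\card{A} \leq 2^{c-n}$, and a direct calculation identifies the nontrivial Fourier coefficients of $Y_M$ with values $\hat f(\phi_M(S))$ for $\emptyset \ne S \subseteq [n/t]$, where $\phi_M(S) \subseteq [n]$ is the indicator of the vertex set spanned by the hyperedges of $M$ indexed by $S$. Averaging over a uniform perfect $t$-hypermatching $M$ and applying the $L_2$--$L_1$ bound $\tvd{Y_M}{U_{n/t}}^2 \leq 2^{n/t} \cdot \norm{Y_M - U_{n/t}}_2^2 / 4$ reduces the problem to controlling $\Exp_M \sum_{k \geq 1} \sum_{\card{S}=k} \hat f(\phi_M(S))^2$. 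The $k$-th level can be bounded by combining the probability $q_{t,k}$ that a random perfect $t$-hypermatching contains $k$ prescribed vertex-disjoint hyperedges with the Parseval tail of $f$ on weight-$tk$ characters; truncating the sum near $k \approx n^{1/t}$ yields $\Exp_M \tvd{Y_M}{U_{n/t}}^2 = O\!\paren{2^c \cdot n^{-(t-1)/t}}$. Chaining this with the $\Omega(\gamma)$ advantage required of Bob and rearranging then forces $c \geq \Omega(\gamma \cdot n^{1-1/t})$, giving the theorem.

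The step I expect to be the main obstacle is the combinatorial heart of the Fourier estimate---both the sharp asymptotics of $q_{t,k}$ for perfect $t$-hypermatchings on $n$ vertices, and the level-by-level bookkeeping that optimally trades off low-$k$ terms (where few sets $T$ have weight $tk$ but $q_{t,k}$ is largest) against high-$k$ terms (where the opposite holds). For $t=2$ these counts reduce to standard perfect-matching asymptotics and give the $\sqrt n$ bound of $\BHM{n}$ directly; for general $t$ the hypergraph analogue is more delicate, and the truncation $k \approx n^{1/t}$ is precisely what produces the characteristic $n^{1-1/t}$ exponent. Carrying the advantage $\gamma$ \emph{linearly} (rather than quadratically) through the $L_2$-to-$L_1$ conversion also requires a mild amplification/hybrid step on top of the raw Fourier bound, and the fact that the stated bound holds for protocols whose correctness is only guaranteed on $\supp{\dist}$ (as opposed to all inputs) falls out for free since $\dist$ is already supported on all of $\set{0,1}^n \times \set{M,w\text{-pairs}}$.
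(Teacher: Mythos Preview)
Your Fourier-analytic plan is precisely the Verbin--Yu argument the paper cites and packages as Lemma~\ref{lem:bhh-tvd}: partition $\set{0,1}^n$ by Alice's message, pass to a large cell $A$, expand the law of $Mx$ in characters, and control the level-$k$ contribution via the combinatorics of random perfect $t$-hypermatchings. So the skeleton is correct and matches the cited source.

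The substantive gap is in how you obtain the \emph{linear} dependence on $\gamma$. An ``amplification/hybrid step'' does not deliver this: repeating a protocol with advantage $\gamma$ and taking majority costs a factor $\Theta(1/\gamma^2)$ in communication, which only recovers the quadratic bound $\Omega(\gamma^2\, n^{1-1/t})$, and it is not clear what hybrid you have in mind that would do better. The paper's route (spelled out in the footnote inside the proof of Theorem~\ref{thm:bhh-ic}) is different and simpler. Once Lemma~\ref{lem:bhh-tvd} gives $\Exp_M\bracket{\tvd{p_M}{q_M}}\leq \eps$ whenever $c\leq \ell\cdot\eps\cdot n^{1-1/t}$, observe that for every fixed $M$ Bob's best success probability is $\tfrac12+\tfrac12\tvd{p_M}{q_M}$ (Fact~\ref{fact:tvd}), so his overall success probability is exactly $\tfrac12+\tfrac12\,\Exp_M\bracket{\tvd{p_M}{q_M}}$. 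Requiring this to be at least $\tfrac12+\gamma$ forces $\Exp_M\bracket{\tvd{p_M}{q_M}}\geq 2\gamma$ and hence $c=\Omega(\gamma\cdot n^{1-1/t})$ directly---no Markov step, no amplification. Replacing the Markov/Jensen endgame by this expectation identity is the entire content of the linear-in-$\gamma$ strengthening.

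A smaller point: the form $\Exp_M\tvd{Y_M}{U_{n/t}}^2=O\!\paren{2^{c}\cdot n^{-(1-1/t)}}$ you write cannot be the target, since solving for $c$ would only force $c=O(\log n)$. The statement of Lemma~\ref{lem:bhh-tvd} is a bound on the \emph{expected} total variation distance that scales linearly in $c/n^{1-1/t}$; extracting that linear-in-$c$ scaling from the level-by-level Fourier sum (rather than the exponential-in-$c$ artifact of a raw Parseval bound) is exactly the combinatorial heart you correctly flag as the main obstacle.
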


We point out that the communication lower bound for $\BHH{n}{t}$ stated in~\cite{VerbinY11} (and similarly for $\BHM{n}$ stated
in~\cite{GKKRW07}), has a dependence of $\gamma^2$ instead of $\gamma$; however, obtaining the linear dependence on $\gamma$ is
straightforward (see the proof of Theorem~\ref{thm:bhh-ic} in this paper). This dependence is crucial for our results in
Section~\ref{sec:simple-ms} since we are analyzing protocols which outperforms random guessing only by a very small probability.


For our application, we need a (stronger) lower bound on the \emph{information complexity} of $\BHH{n}{t}$ rather than its communication complexity. 
Since this result does not follow directly from those of~\cite{VerbinY11,GKKRW07}, for completeness, we provide a proof of this result in this section
following the approach in~\cite{GKKRW07,VerbinY11}. We remark that one can also use the message compression technique of~\cite{JainRS03} for 
bounded round communication protocols to prove this result. 

\begin{theorem}[Information Complexity of $\BHH{n}{t}$]\label{thm:bhh-ic}
  For any $t \geq 2$, any $n = 2kt$ for some integer $k \geq 1$, and any constant $\delta < 1/2$,
	\[\ICO{\BHH{n}{t}}{\dist}{\delta} = \Omega\paren{n^{1-1/t}}\] 
	This bound also holds for the \emph{information cost} of the protocols that are \emph{only} required to be correct w.p. $1-\delta$ on the distribution $\dist$ (not necessarily on all inputs). 
\end{theorem}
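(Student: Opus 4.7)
The plan is to follow the Fourier-analytic approach of~\cite{GKKRW07} (for $t=2$) and~\cite{VerbinY11} (for general $t$), but to convert their bound on Alice's communication cost into a bound on her information cost using Lemma~\ref{lem:aux-high-ent} as the bridge. Fix any one-way protocol $\Prot$ for $\BHH{n}{t}$ and let $I = \ICost{\Prot}{\dist} = I_{\dist}(\bProt;\bX)$, where $\bX\sim\dist$ denotes Alice's input. Since $\bX$ is uniform on $\set{0,1}^n$ under $\dist$, $H(\bX)=n$ and $H(\bX\mid \bProt)=n-I$. By Markov's inequality, for a $(1-\gamma)$-fraction of transcripts $m$ (weighted by $\Pr(\bProt=m)$), one has $H(\bX\mid \bProt=m)\geq n-I/\gamma$, where $\gamma=\tfrac12-\delta$; call such $m$ \emph{good}. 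I aim to show that on good transcripts Bob cannot solve the embedded \BHH{n}{t} instance with advantage more than $\gamma/2$ unless $I=\Omega(n^{1-1/t})$.

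For each good $m$, invoke Lemma~\ref{lem:aux-high-ent} with parameter $\Theta(\gamma)$ on the posterior $(\bX\mid\bProt=m)$. This writes the posterior as a convex combination $\sum_{i\geq 0}p_i^{(m)}\mu_i^{(m)}$ in which $p_0^{(m)}=O(\gamma)$ is a low-probability noise term, and for each $i\geq 1$ the distribution $\mu_i^{(m)}$ is $O(\gamma)$-close in total variation to the uniform distribution on a set $S_{m,i}\subseteq\set{0,1}^n$ of size at least $2^{n-\Delta}$, where $\Delta=O(I/\gamma^2+\log(n/\gamma))$. In this way the information-cost hypothesis about $\Prot$ is converted into a pointwise near-uniformity statement about Alice's posterior input.

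The core step is the Fourier-analytic heart of~\cite{VerbinY11}: for $\bX$ drawn uniformly from any set $S\subseteq\set{0,1}^n$ with $|S|\geq 2^{n-\Delta}$, a uniformly random perfect $t$-hypermatching $\bM$, and $\bm{w}$ drawn uniformly from $\set{\bM\bX,\overline{\bM\bX}}$, no function $b(m,\bM,\bm{w})\in\set{0,1}$ can predict $\mathbf{1}[\bm{w}=\bM\bX]$ with advantage more than $O\paren{\sqrt{\Delta/n^{1-1/t}}}$. The argument expands the advantage in the Fourier basis relative to $1_S/|S|$, bounds the total squared Fourier mass by $2^\Delta$ via Parseval, and exploits the combinatorial fact that any fixed $t$-subset of $[n]$ appears as an edge of a uniformly random $t$-hypermatching with probability $\Theta(t!/n^{t-1})$; a Cauchy-Schwarz step over the randomness in $\bM$ then produces the $n^{1-1/t}$ factor inside the square root. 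The $O(\gamma)$ TVD slack between $\mu_i^{(m)}$ and the uniform distribution on $S_{m,i}$ contributes only an additive $O(\gamma)$ error.

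Averaging the per-component Fourier bound across $i$ and over all good transcripts, and folding in the $O(\gamma)$ mass from bad transcripts plus the $\mu_0^{(m)}$ noise, yields
\[
\gamma \;\leq\; O\paren{\sqrt{\Delta/n^{1-1/t}}} + \gamma/2,
\]
which after substituting $\Delta=O(I/\gamma^2+\log(n/\gamma))$ and rearranging gives $I=\Omega(\gamma^{O(1)}\cdot n^{1-1/t})=\Omega(n^{1-1/t})$ for constant $\gamma$. The main obstacle is the calibration between Lemma~\ref{lem:aux-high-ent} and the Fourier bound: the lemma's support-size guarantee $2^{n-\Delta}$ must be tight enough that $\Delta$ grows beyond the $n^{1-1/t}$ threshold only when $I$ itself is $\Omega(n^{1-1/t})$, which is precisely why the $O(I/\gamma^2+\log(n/\gamma))$ form of $\Delta$ is needed; a naive conversion from $H(\bX\mid\bProt=m)\geq n-I/\gamma$ directly to a support-size statement (without the decomposition lemma) would lose an extra factor that breaks the argument. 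A secondary subtlety is that the standard proof is usually formulated for $\BHHZ{n}{t}$ (the $w=0^{n/t}$ version); to handle the full $\BHH{n}{t}$ problem one additionally averages the Fourier expansion over the randomness in $w$, which is routine but must be done carefully so that the $w$-averaging commutes with the Fourier step.
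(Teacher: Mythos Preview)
Your approach is essentially the paper's: Markov on $H(\bX\mid\bProt)$ to isolate good transcripts, Lemma~\ref{lem:aux-high-ent} to decompose the posterior into near-uniform pieces, then the Verbin--Yu TVD bound (stated in the paper as Lemma~\ref{lem:bhh-tvd}) on each piece, followed by Fact~\ref{fact:tvd}. One detail you omit but the paper handles: Lemma~\ref{lem:bhh-tvd} requires $A$ to be a \emph{single parity} set, so before invoking Lemma~\ref{lem:aux-high-ent} the paper first conditions on the parity $\bP$ of $\bX$ (at the cost of one bit of entropy), guaranteeing that each piece $S_{m,i}$ has constant parity; without this step the black-box lemma does not apply as stated. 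Your closing worry about $\BHHZ{n}{t}$ versus $\BHH{n}{t}$ is misplaced---Lemma~\ref{lem:bhh-tvd} and the paper's proof are already for the general $\BHH{n}{t}$ problem (distinguishing the distributions $M\bX$ and $\overline{M\bX}$), and no extra averaging over $w$ is needed.
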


The general idea in the proof of~\cite{GKKRW07,VerbinY11} is as follows: in any protocol $\Prot$, Alice's message partitions the set $\set{0,1}^{n}$ into $2^{c}$ sets $A_1,\ldots,A_{2^{c}}$ (here $c := \norm{\Prot}$) 
and hence for a typical message of Alice, Bob knows that the random variable $\bX$ (of Alice's input) is chosen uniformly at random from some set $A_i$ with $\card{A_i} \geq 2^{n-c}$. 
Now consider the hypermatching $M$ of Bob, and the distributions $M\bX$ and $\negative{M\bX}$ for $\bX$ uniform on $A_i$; the main
technical result in~\cite{GKKRW07,VerbinY11} proves that for any sufficiently large set $A_i$ (size essentially $2^{n-(n^{1-1/t})}$), if $\bX$ is chosen uniformly at random from $A_i$, then 
the distribution of $M\bX$ and $\negative{M\bX}$ look identical to Bob. Consequently, since Bob's task is to, given a vector $w$, decide whether $w$ was chosen from $M\bX$ or $\negative{M\bX}$, the advantage of 
Bob over random guessing would be negligible. 
 
To prove Theorem~\ref{thm:bhh-ic}, we also follow the same approach described above. The main difference here is that to prove an information complexity lower bound we need to work with protocols that are randomized 
\emph{even} on the distribution $\dist$. This makes the problem more challenging since unlike deterministic protocols, randomized protocols do \emph{not} split the input into disjoint distributions that 
are uniform (i.e., the sets $A_1,\ldots,A_{2^c}$ described above). To overcome this, we use Lemma~\ref{lem:aux-high-ent} proved in Section~\ref{sec:info} to partition the inputs
conditioned on Alice's message into several near uniform parts and then apply the aforementioned technical result of~\cite{GKKRW07,VerbinY11} on each part separately to finalize the proof.

We now provide the formal proof. We say that a set $A \subseteq \set{0,1}^{n}$ is a \emph{single parity} set iff the parity (i.e., the $\oplus$ summation) of each string in
$A$ is the same.  The following lemma is the main ingredient of the proof in~\cite{VerbinY11} (see Theorem~3.1 in~\cite{VerbinY11}). 

\begin{lemma}[\!\!\cite{VerbinY11}]\label{lem:bhh-tvd}
  Suppose $n = 2kt$ for some integer $k \geq 1$, $A \subseteq \set{0,1}^{n}$ is a single parity set of size $\card{A} \geq 2^{n-c}$ for some
  $c \geq 1$, and $x$ is a vector drawn uniformly at random from $A$ (denote the distribution by $U_A$). Let $M$ be a perfect
  $t$-hypermatching on $[n]$ chosen uniformly at random; there exists an \emph{absolute constant} $\ell > 0$, such that for all $\eps \in (0,1]$, if $c \leq \ell \cdot \eps \cdot n^{1-1/t}$, then,
	\[
		\EX_{M} \Bracket{\tvd{p_M}{q_M}} \leq \eps
	\]
	where $p_M$ and $q_M$ are distributions over $\set{0,1}^{n/t}$ whose p.d.f are (for any $z \in \set{0,1}^{n/t}$) 
	\[
		p_M(z) := \Pr_{\bX \sim U_A}(M\bX = z) 
	\]
	and
	\[
		q_M(z) := \Pr_{\bX \sim U_A}(M\bX = \negative{z}) 
	\] 
	respectively. In other words, $p_M = M U_A$ and $q_M = \negative{M U_A}$.
\end{lemma}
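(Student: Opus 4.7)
The plan is to carry out a Fourier-analytic argument on $\{0,1\}^n$ (following the approach of Gavinsky~\etal for the $t=2$ case and its generalization by Verbin and Yu). Let $g:\{0,1\}^n\to\{0,1\}$ be the indicator of $A$ with Fourier coefficients $\hat g(S) = 2^{-n}\sum_{x\in A}\chi_S(x)$, and for a perfect $t$-hypermatching $M$ with edges $e_1,\ldots,e_{n/t}$ and a subset $T\subseteq[n/t]$ write $e_T := \bigcup_{i\in T} e_i$. For any pmf $p$ on $\{0,1\}^{n/t}$, let $\tilde p(T) := \sum_z p(z)\chi_T(z)$ be the probability-side character expectation. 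The first step is the identity $q_M(z) = p_M(\negative z)$, which gives $\tilde q_M(T) = (-1)^{|T|}\tilde p_M(T)$ and forces $p_M-q_M$ to be Fourier-supported on odd-$|T|$ subsets only. Combining Parseval with Cauchy--Schwarz on $\{0,1\}^{n/t}$ then yields
\[
\tvd{p_M}{q_M}^2 \;\le\; \sum_{T \subseteq [n/t]:\,|T|\text{ odd}} \tilde p_M(T)^2.
\]

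Next, since the edges $e_i$ of $M$ are disjoint, a direct computation shows $\tilde p_M(T) = \Ex_{X\sim U_A}\chi_{e_T}(X) = (2^n/|A|)\cdot\hat g(e_T)$. Taking expectation over a uniformly random perfect $t$-hypermatching $M$, the symmetry of the distribution of $M$ under vertex permutations makes $e_T$ uniformly distributed over the $st$-subsets of $[n]$ whenever $|T|=s$. Combined with Parseval's identity $\sum_S \hat g(S)^2 = |A|/2^n$ and the hypothesis $|A|\ge 2^{n-c}$, this gives
\[
\Ex_M \sum_{|T|=s} \tilde p_M(T)^2 \;\le\; \frac{\binom{n/t}{s}}{\binom{n}{st}}\cdot\frac{2^n}{|A|} \;\le\; 2^c\cdot\frac{\binom{n/t}{s}}{\binom{n}{st}}.
\]
Summing over odd $s$ and invoking Jensen's inequality $\Ex_M\tvd{p_M}{q_M}\le\sqrt{\Ex_M\tvd{p_M}{q_M}^2}$ would complete the argument once the resulting combinatorial tail is controlled in terms of $\eps$ and $n^{1-1/t}$.

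The principal obstacle is exactly this combinatorial step: the naive Parseval-based estimate above only yields a meaningful bound when $c = O((t-1)\log n)$, which is vastly weaker than the promised $c \le \ell\cdot\eps\cdot n^{1-1/t}$. Closing this gap requires a hypercontractive refinement along the lines of Bonami--Beckner applied to $g$: for a set of density $2^{-c}$, the Fourier mass of $g$ on any fixed low-degree slice $\{S:|S|=st\}$ is actually far smaller than the worst-case Parseval budget $|A|/2^n$, and this tightening is what turns the logarithmic bound into the desired $n^{1-1/t}$ scaling. The single-parity assumption on $A$ is used here to isolate the ``trivial'' Fourier coefficient $\hat g([n]) = \pm|A|/2^n$, which corresponds to $T=[n/t]$ of even cardinality $2k$ (hence does not itself enter the odd-$|T|$ sum) but must be peeled off the Parseval budget before the hypercontractive tail bound kicks in. Carrying out this hypercontractive step quantitatively, together with careful bookkeeping of the interplay between the parity restriction on $T$ and the combinatorial structure of $e_T$, is the technical heart of the Verbin--Yu argument and the step on which I would concentrate the bulk of the effort; once done, the constant $\ell$ can be selected so that $c\le\ell\eps n^{1-1/t}$ forces $\Ex_M\tvd{p_M}{q_M}\le\eps$.
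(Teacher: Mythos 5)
First, note that the paper does not prove this lemma at all: it is imported verbatim from Verbin--Yu (Theorem~3.1 of~\cite{VerbinY11}), so the only meaningful comparison is with that proof. Your sketch does follow the Verbin--Yu/Gavinsky~\etal template, and the steps you actually carry out are correct: $q_M(z)=p_M(\negative{z})$ forces $p_M-q_M$ onto odd-weight characters, Cauchy--Schwarz plus Parseval gives $\tvd{p_M}{q_M}^2\le\sum_{|T|\ \mathrm{odd}}\tilde p_M(T)^2$, the identity $\tilde p_M(T)=(2^n/\card{A})\hat g(e_T)$ holds because the hyperedges are disjoint, and averaging over a uniform perfect $t$-hypermatching indeed makes $e_T$ a uniform $st$-subset, yielding the $\binom{n/t}{s}/\binom{n}{st}$ ratio.

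The genuine gap is the one you yourself flag and then defer: bounding the level-$st$ Fourier mass of the indicator of a density-$2^{-c}$ set by something far better than the全 Parseval budget. With only $\sum_{|S|=st}\hat g(S)^2\le\card{A}/2^n$ you get the factor $2^c$, and, as you correctly compute, this only tolerates $c=O((t-1)\log n)$, which is exponentially weaker than the claimed threshold $c\le\ell\cdot\eps\cdot n^{1-1/t}$. The entire content of the lemma lives in the hypercontractive level-$k$ inequality (roughly $\sum_{|S|=k}\hat g(S)^2\le(2^{-c})^2(O(c)/k)^{k}$ in the relevant range), in the summation over odd $s$ where this bound must be balanced against the binomial ratio, and in the treatment of the high levels where the level-$k$ inequality fails and where the single-parity hypothesis is actually needed (your remark that it merely ``peels off'' $\hat g([n])$, which sits at the even level $|T|=n/t=2k$ and so never enters the odd sum, understates its role). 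Saying you ``would concentrate the bulk of the effort'' on this step is an acknowledgment that the proof is not done: as written, the argument establishes only the weak logarithmic regime and does not prove the stated lemma. To make this a proof you must either reproduce the Verbin--Yu hypercontractive computation in full or explicitly invoke their Theorem~3.1 as a black box, as the paper does.
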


We are now ready to prove Theorem~\ref{thm:bhh-ic}.



\begin{proof}[Proof of Theorem~\ref{thm:bhh-ic}]
  Define $C_1:= \paren{2\eps^3 \cdot \ell \cdot n^{1-1/t} - 1}$ for a constant $\eps$ (depending on $\delta$) to be determined later\footnote{Here, unlike the case in Theorem~\ref{thm:bhh-cc}, we are \emph{not} particularly interested
  in achieving the best dependence on the parameter $\delta$, since in our proofs that require this theorem we always work with constant values of $\delta$; this allows us to  simplify the proof significantly.}. 
  Suppose towards a contradiction that $\ICO{\BHH{n}{t}}{\dist}{\delta} \leq C_1$ 
 and let $\Prot$ be a $\delta$-error protocol for $\BHH{n}{t}$ on the distribution $\dist$
  with information cost $C_1$. Let $\bProt$ be the random variable denoting the message sent
  from Alice to Bob using $\Prot$, and let $\bR$ be the random variable denoting the public coins used in the protocol. We have,
		\begin{align*}
			\ICO{\BHH{n}{t}}{\dist}{\delta} = I(\bProt;\bX \mid \bR) = H(\bX \mid \bR) - H(\bX \mid \bR,\bProt) = n - H(\bX \mid \bR,\bProt)
		\end{align*}
		where the last equality is because the input $\bX$ is uniform on $\set{0,1}^{n}$ in $\dist$ and is chosen independently of the public coins $\bR$. Consequently, we have
                $H(\bX \mid \bR,\bProt) = n-C_1$.  We further define a random variable $\bP \in \set{0,1}$ that indicates the
                \emph{parity} of the input vector $\bX$. We have,
		\begin{align*}
			H(\bX \mid \bR,\bProt,\bP) \geq H(\bX \mid \bR,\bProt) - H(\bP) = n- C_1 - 1 
		\end{align*}
		where the inequality is by chain rule of entropy (\itfacts{ent-chain-rule}). Hence, 
		\begin{align*}
			&\EX_{R,\prot,P}\Bracket{H(\bX \mid \bR=R,\bProt=\prot,\bP=P)} \ge n-C_1 -1
		\end{align*}
		For brevity, we denote the event $(\bR=R,\bProt=\prot,\bP=P)$ by $Z$. Define $C_2 := \frac{C_1+1}{\eps} = 2\eps^2 \cdot \ell
                \cdot n^{1-1/t}$. By Markov inequality,
		\begin{align*}
			\Pr_{Z}\paren{ n- H(\bX \mid Z) \ge C_2} \leq \frac{\EX_{R,\prot,P}\Bracket{ n- H(\bX \mid \bR=R,\bProt=\prot,\bP=P)} }{C_2} \leq \frac{C_1+1}{C_2} = \eps
		\end{align*}
                Hence, w.p. at least $1-\eps$, $H(\bX \mid Z) \ge n - C_2$. Assuming this event happens, let $\bX_Z$ denote the random
                variable $\bX$ conditioned on $Z$ and hence $H(\bX_Z) \geq n-C_2$.  We emphasize here that the randomness in $\bX_Z$ lies
                both in the distribution $\dist$ and in the \emph{private} randomness used by the protocol.
		
		Now consider the task of Bob for solving $\BHH{n}{t}$. Bob is given a perfect $t$-hypermatching $M$, a vector $w$, a message
                $\bProt = \prot$ (from Alice), and the public coins $\bR=R$. Suppose, additionally, we provide the parity of the input
                $\bX$ to Bob for free , i.e., Bob knows $\bP = P$. Conditioned on the aforementioned event happening (w.p. $1-\eps$), Bob knows that
                the input of Alice is chosen from the distribution of the random variable $\bX_Z$ with $H(\bX_Z) \geq n-C_2$. For the hypermatching $M$ of
                Bob, Bob is given a vector $w$ chosen from the distribution of either $M \bX_Z$ or $\negative{M \bX_Z}$ and his task is to
                distinguish between these two cases. In the rest of the proof, we show that total variation distance of these two
                distributions is small and hence, by Fact~\ref{fact:tvd}, Bob will not able to distinguish them using a single sample.


		Note that since the protocol is \emph{not} deterministic, the distribution of $\bX_Z$ is not necessarily uniform over its
                support and hence we cannot directly apply Lemma~\ref{lem:bhh-tvd} to bound the total variation distance between the
                distributions  $M \bX_Z$ and $\negative{M \bX_Z}$.  To bypass this, we first apply Lemma~\ref{lem:aux-high-ent}
                on the random variable $\bX_Z$ with the parameter $\Delta =  C_2$ 
                and $\eps$  to obtain a sequence of $k+1$ distributions $\mu_0,\ldots,\mu_k$ where $k = O(n/\eps)$.  For any
                $i\geq 1$, let $U_i$ be the uniform distribution over the support of $\mu_i$.  By Lemma~\ref{lem:aux-high-ent},
                $\tvd{\mu_i}{U_i} = O(\eps)$. Since $M$ is chosen independently of $\bX_Z$, this implies that
                $\tvd{M \mu_i}{M U_i} = O(\eps)$ and $\tvd{\negative{M \mu_i}}{\negative{M U_i}} = O(\eps)$. 
                Furthermore, by Lemma~\ref{lem:aux-high-ent}, $\card{\supp{U_i}} = \card{\supp{\mu_i}} \geq 2^{n-\frac{\Delta}{\eps}-\log{\Theta(n/\eps)}} \geq 2^{n-\eps \ell n^{1-1/t}}$, and hence 
         	by Lemma~\ref{lem:bhh-tvd}, $\EX_{M} \Bracket{\tvd{MU_i}{\negative{MU_i}}} = O(\eps)$. Finally, by triangle inequality, 
		$\EX_{M} \Bracket{\tvd{M\mu_i}{\negative{M\mu_i}}}  = O(\eps)$. 
		
		\newcommand{\success}{\ensuremath{\textnormal{\textsf{success}}}\xspace}
		
		Again fix an $i \geq 1$. Suppose we further specify to Bob that $\bX_Z$ is chosen from $\mu_i$.  We say that Bob is
                \emph{successful} if for the given matching $M$ and the vector $w$, he can correctly identify whether $w$ is chosen from
                ${M\mu_i}$ or $\negative{M\mu_i}$; in other words, use one sample (i.e., $w$) to distinguish between ${M\mu_i}$ or
                $\negative{M\mu_i}$. Denote the event that Bob is successful by \success. For the following equations, let the summation
                over $x$ ranges over all possible values of $\tvd{M\mu_i}{\negative{M\mu_i}}$ (since there are only $n!$ matchings, there
                are at most $n!$ choices); we have,\footnote{As stated after Theorem~\ref{thm:bhh-cc}, to obtain a dependence of $\gamma$
                  instead of $\gamma^2$ in the communication complexity lower bound proofs of~\cite{GKKRW07,VerbinY11}, simply replace the
                  Markov bound argument at the end of the their proofs with the slightly more careful argument based on probability of being
                  successful presented here.}
		\begin{align*}
			\Pr(\success) &= \sum_{x} \Pr\paren{\tvd{M\mu_i}{\negative{M\mu_i}} = x} \cdot \Pr\paren{\success \mid \tvd{M\mu_i}{\negative{M\mu_i}} = x }  \\
			&\leq \sum_{x} \Pr\paren{\tvd{M\mu_i}{\negative{M\mu_i}} = x} \cdot \paren{\frac{1}{2}+\frac{x}{2}}  \tag{by Fact~\ref{fact:tvd}} \\
			&= \frac{1}{2} + \frac{1}{2} \cdot \sum_{x \in [0,1]} \Pr\paren{\tvd{M\mu_i}{\negative{M\mu_i}} = x}\cdot x \\
			&= \frac{1}{2} + \frac{1}{2} \cdot \Ex\Bracket{\tvd{M\mu_i}{\negative{M\mu_i}}} = \frac{1}{2} + {O(\eps)}
		\end{align*}
		
		To summarize, the advantage of Bob over randomly guessing the output is at most $\eps$ (for the unlikely event that
                $H(\bX_Z) < n-C_1$) plus $O(\eps)$ (for the unlikely event that $\bX$ is chosen from $\mu_0$ in
                Lemma~\ref{lem:aux-high-ent}) plus $O(\eps)$ (for the advantage over random guessing, i.e., when event \success happens in $\mu_i$
                for $i\geq 1$). In summary, the probability of success of Bob is at most $\frac{1}{2} + O(\eps) < 1-\delta$, by choosing
                $\eps$ small enough in compare to $\delta$. This means that the protocol succeeds w.p. strictly less that $1-\delta$, a
                contradiction.
\end{proof}

For our purpose, it would be more convenient to work with a special case of the $\BHH{n}{t}$ problem, namely $\BHHZ{n}{t}$ in which
the vector $w = 0^{n/t}$ and hence the goal of Bob is simply to decide whether $Mx =0^{n/t}$ (\Yes case) or $Mx =1^{n/t}$ (\No case). We define $\BHMZ{n} := \BHHZ{n}{2}$ (similar to $\BHM{n}$).
It is known that (see,~\cite{VerbinY11,BuryS15,LiW16}) any instance of the original $\BHH{n}{t}$ problem 
can be reduced to an instance of $\BHHZ{2n}{t}$ \emph{deterministically} and with \emph{no communication} between the players. 

The following corollary summarizes the results in this section. 

\begin{corollary}\label{cor:bhh}
	For any $n = 2kt$ (for some integer $k \geq 1$), there exists a distribution $\distbhh$ for $\BHHZ{n}{t}$ such that: 
	\begin{itemize}
		\item For any $\delta \in (0,1)$ and $\gamma:= \frac{1}{2} - \delta$, $\CCO{\BHHZ{n}{t}}{\distbhh}{\delta} = \Omega(\gamma \cdot n^{1-1/t})$.
		\item For any \emph{constant} $\delta < 1/2$, $\ICO{\BHHZ{n}{t}}{\distbhh}{\delta} = \Omega( n^{1-1/t})$.
		\item Alice's input $\bX \sim \distbhh$ is supported on boolean vectors $x \in \set{0,1}^{n}$ with $\norm{x}_0 = \frac{n}{2}$. 
	\end{itemize}
	Moreover, these bounds also hold for, respectively, the communication cost and information cost of the protocols that are \emph{only} required to be correct
  w.p. $1-\delta$ on the distribution $\distbhh$ (not necessarily on all inputs).
\end{corollary}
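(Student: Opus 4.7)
My plan is to derive Corollary~\ref{cor:bhh} from Theorems~\ref{thm:bhh-cc} and~\ref{thm:bhh-ic} via the standard deterministic, zero-communication reduction from $\BHH{m}{t}$ to $\BHHZ{2m}{t}$ alluded to right before the statement, instantiated with $m = n/2$.

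First, I would spell out the reduction. Given an instance $(x,M,w)$ of $\BHH{m}{t}$ with $x\in\{0,1\}^{m}$, Alice privately forms $x' = (x,\negative{x}) \in \{0,1\}^{2m}$, which automatically has Hamming weight exactly $m$. Bob privately builds a new perfect $t$-hypermatching $M'$ on $[2m]$ as follows: for each hyperedge $e_j = \{i_{j,1},\ldots,i_{j,t}\}$ of $M$, keep $e_j$ unchanged if $w_j=0$, and replace the vertex $i_{j,1}$ by its shifted copy $i_{j,1}+m$ if $w_j=1$. Since $\negative{x}_i = 1\oplus x_i$, each such replacement flips exactly one bit inside the $j$-th XOR computed by $M'x'$, so coordinatewise $M'x' = Mx \oplus w$. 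Hence $M'x' = 0^{2m/t}$ exactly in the \Yes case of $\BHH{m}{t}$ and $M'x' = 1^{2m/t}$ exactly in its \No case, matching the promise of $\BHHZ{2m}{t}$.

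Next, taking $m=n/2$, define $\distbhh$ as the distribution on $\BHHZ{n}{t}$ induced from $\dist$ by this reduction. The third bullet is immediate by construction: $\bX' = (\bX,\negative{\bX})$ has $\|\bX'\|_0 = n/2$ almost surely. For the first two bullets, I compose any $\delta$-error protocol $\Prot'$ for $\BHHZ{n}{t}$ on $\distbhh$ with the reduction to obtain a $\delta$-error protocol $\Prot$ for $\BHH{n/2}{t}$ on $\dist$; since the local rewrites are deterministic functions of each player's own input and are performed \emph{before} the protocol begins, $\Prot$ has the same worst-case communication cost and the same external information cost as $\Prot'$ under the corresponding distributions. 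Applying Theorems~\ref{thm:bhh-cc} and~\ref{thm:bhh-ic} to $\Prot$ then yields $\Omega(\gamma\cdot(n/2)^{1-1/t}) = \Omega(\gamma\cdot n^{1-1/t})$ and $\Omega((n/2)^{1-1/t}) = \Omega(n^{1-1/t})$, respectively. The \emph{moreover} clause is inherited because both theorems already hold under the weaker requirement of correctness only on $\dist$, and the reduction pulls correctness on $\distbhh$ back to correctness on $\dist$.

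There is no substantive obstacle beyond bookkeeping; the subtlest point is verifying that the reduction preserves the \emph{information} cost and not merely the communication cost, which I expect to follow from the fact that $x \mapsto (x,\negative{x})$ is a deterministic bijection, so the transcript of $\Prot$ reveals the same information about $\bX$ as the transcript of $\Prot'$ reveals about $\bX'$. A minor parity mismatch between the hypothesis $n=2kt$ and the intermediate requirement $n/2 = 2k't$ can be absorbed into the hidden $\Omega(\cdot)$ by choosing a nearest valid size, since this only perturbs $n^{1-1/t}$ by a constant factor.
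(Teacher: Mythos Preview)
Your approach is exactly the paper's: invoke the known deterministic, zero-communication reduction from $\BHH{m}{t}$ to $\BHHZ{2m}{t}$ (which the paper only cites, without spelling out) and then apply Theorems~\ref{thm:bhh-cc} and~\ref{thm:bhh-ic}. The preservation of communication and information cost via the bijection $x\mapsto(x,\negative{x})$, and the divisibility bookkeeping, are handled correctly.

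There is, however, a concrete error in the reduction you wrote down. Your $M'$ has only $m/t$ hyperedges and covers only $m$ of the $2m$ vertices, so it is \emph{not} a perfect $t$-hypermatching on $[2m]$, contrary to your claim. Consequently the instance $(x',M')$ does not lie in the domain of $\BHHZ{2m}{t}$, and you cannot directly run a $\BHHZ{2m}{t}$ protocol on it. The standard fix is to add, for each hyperedge $e'_j$ you construct, the ``complementary'' hyperedge on the $t$ remaining vertices $\{i_{j,1},\ldots,i_{j,t},i_{j,1}+m,\ldots,i_{j,t}+m\}\setminus e'_j$; this makes $M'$ perfect. A short parity computation then shows that $M'x'$ is all-zeros in the \Yes case and all-ones in the \No case \emph{provided $t$ is even}; for odd $t$ the complementary hyperedges violate the promise. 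This restriction to even $t$ is harmless for the paper's applications (Claim~\ref{clm:bhh-matching} already assumes $t$ even, and $\BHMZ{n}$ has $t=2$), but it should be stated rather than swept under the rug.
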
 

We remark that this distribution satisfy the requirement of the Claim~\ref{clm:bhh-matching} (i.e., $\norm{x}_0 = \frac{n}{2}$) and hence 
can be used in the reduction for the matching size problem mentioned in Section~\ref{sec:prelim}.

\section{Space Lower Bounds for $\alpha$-Approximating Matching Size} \label{sec:lb-alpha}

In this section, we present our space lower bounds for $\alpha$-approximation algorithms in dynamic streams. 
As already remarked in Section~\ref{sec:prelim}, by the results of~\cite{AiHLW16,LiNW14}, it suffices to prove the lower bound
in the SMP model. 

\subsection{An $\Omega(\sqrt{n}/\alpha^{2.5})$ Lower Bound for Sparse Graphs} \label{sec:simple-ms}

We consider the sparse graphs case in this section (i.e., Part~(1) of Theorem~\ref{thm:lower-alpha-intro}), and show that any single-pass
streaming algorithm that computes an $\alpha$-approximation of matching size must use $\Omega(\sqrt{n}/\alpha^{2.5})$ bits of space even if
the input graph only have $O(n)$ edges.

Define the \emph{sparse matching size estimation} problem, $\SMS{n}{k}$, as the following $k$-player communication problem in the SMP model:
each player $\PS{i}$ is given a matching $M_i$ over a set $V$ of $n + \frac{n}{k}$ vertices\footnote{To simplify the exposition, we use
  $n+\frac{n}{k}$ instead of the usual $n$ as the number of vertices.} and the goal of the players is to approximate the maximum matching
size of $G(V,\Union_{i \in [k]} M_i)$ to within a factor \emph{better than} $\frac{k+1}{2}$.  We prove the following lower bound on the
communication complexity of $\SMS{n}{k}$.

\begin{theorem}\label{thm:SMS-hard}
  For any sufficiently large $n$, and $k\geq 2$, there exists a distribution $\dist$ for $\SMS{n}{k}$ such that for any constant $\delta < 1/2$:
  $
      \CCS{\SMS{n}{k}}{\dist}{\delta} = \Omega\paren{\frac{\sqrt{n}}{k\sqrt{k}}}
  $
\end{theorem}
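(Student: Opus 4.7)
The plan is to reduce $\BHMZ{2n/k}$ to $\SMS{n}{k}$ through a hybrid argument that exploits the simultaneous structure of SMP protocols. Since Corollary~\ref{cor:bhh} gives an $\Omega(\gamma \sqrt{m})$ one-way communication lower bound for $\BHMZ{m}$ at advantage $\gamma$ over random guessing, producing a one-way BHM protocol with advantage $\Theta(1/k)$ and cost at most $\norm{\protSMS}$ will yield $\norm{\protSMS} = \Omega(\sqrt{n}/k^{1.5})$.

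First I would set up the hard distribution $\dist$ for $\SMS{n}{k}$ as follows. Via public randomness, partition the $n + n/k$ vertices into a shared set $V_S$ of size $n/k$ and $k$ private sets $V_{P,1}, \ldots, V_{P,k}$ of size $n/k$ each. In the \Yes case, each player $\PS{i}$ receives a uniformly random perfect matching $M_i$ on $V_S \cup V_{P,i}$ subject to every edge having both endpoints in $V_S$ or both in $V_{P,i}$, so $\opt(G) \geq n/2$. In the \No case, $M_i$ is a uniformly random perfect matching between $V_S$ and $V_{P,i}$, so $\opt(G) \leq n/k$. The resulting $(k+1)/2$ gap is what any $\SMS{n}{k}$ protocol must resolve. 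Next, define hybrid distributions $\dist_j$ for $j \in \{0, \ldots, k\}$ in which players $1, \ldots, j$ receive \Yes-style matchings and players $j+1, \ldots, k$ receive \No-style matchings, so $\dist_0$ is the all-\No distribution and $\dist_k$ is the all-\Yes distribution. Any $\delta$-error protocol $\protSMS$ distinguishes $\dist_0$ from $\dist_k$ with advantage at least $1/2 - \delta$, so a standard telescoping argument yields some $j^* \in [k]$ for which $\protSMS$ distinguishes $\dist_{j^*-1}$ from $\dist_{j^*}$ with advantage $\gamma = (1/2-\delta)/k$.

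Next I would embed a $\BHMZ{2n/k}$ instance into the position of player $\PS{j^*}$. Given BHM-Alice's vector $x \in \{0,1\}^{2n/k}$ of weight $n/k$ and BHM-Bob's perfect matching $M$ on $[2n/k]$, both parties use public randomness to fix an injection $\phi : [2n/k] \hookrightarrow V$, partition the remaining $n - n/k$ vertices into sets $V_{P,j}$ for $j \neq j^*$, and sample the matchings $M_j$ for $j \neq j^*$ from the appropriate conditional distribution (\Yes-style for $j < j^*$, \No-style for $j > j^*$). Set $V_S := \phi(\{i : x_i = 1\})$, $V_{P,j^*} := \phi(\{i : x_i = 0\})$, and $M_{j^*} := \{\{\phi(u), \phi(v)\} : \{u, v\} \in M\}$. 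By construction $M_{j^*}$ is \Yes-style when $Mx = 0^{n/k}$ and \No-style when $Mx = 1^{n/k}$, so the induced SMS instance is distributed exactly as $\dist_{j^*}$ or $\dist_{j^*-1}$, respectively. BHM-Alice, knowing $x$ (hence $V_S$) and all public randomness, computes every $\bProt_j$ for $j \neq j^*$ locally and transmits them in a single one-way message; BHM-Bob then computes $\bProt_{j^*}$ locally from $M$ and $\phi$ and runs the referee on all $k$ messages to obtain the SMS answer, which translates directly to the BHM answer. The one-way cost is $\sum_{j \neq j^*} \card{\bProt_j} \leq \norm{\protSMS}$, and invoking Corollary~\ref{cor:bhh} yields $\norm{\protSMS} = \Omega(\gamma \sqrt{2n/k}) = \Omega(\sqrt{n}/k^{1.5})$.

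The main obstacle will be handling two subtleties. First, BHM-Bob must compute $\PS{j^*}$'s message from $M$ and $\phi$ alone, without knowing the $V_S$/$V_{P,j^*}$ split encoded in $x$; this is precisely where the simultaneous structure of $\protSMS$ is crucial, since each player's message in the SMP model is a function only of that player's input and the public randomness. Second, the reduction must induce a valid draw from $\distbhh$ while preserving Corollary~\ref{cor:bhh}'s correct-on-all-inputs requirement through the simulation: the uniformity of the public randomness over injections and partitions ensures that the marginals of the simulated SMS instance are correct, while the reduction maps $\distbhh$ exactly onto the mixture $\tfrac12 \dist_{j^*-1} + \tfrac12 \dist_{j^*}$ on which $\protSMS$ still achieves the $\gamma$-advantage needed to feed into the corollary.
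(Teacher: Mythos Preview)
Your proposal is correct and follows essentially the same approach as the paper: the paper's Lemma~\ref{lem:informative-index} on the existence of a $\gamma$-informative index is proved via exactly the linear hybrid sequence you describe, and the paper's reduction $\protBHM$ assigns the roles identically (Alice simulates all players $j\neq j^\star$ and sends their messages, Bob simulates player $j^\star$ and runs the referee). One small imprecision to fix: the matchings $M_j$ for $j\neq j^\star$ cannot be sampled from public randomness alone since their distribution depends on $V_S=\phi(\{i:x_i=1\})$ and hence on Alice's private input $x$; as in the paper, Alice must sample them privately after determining $V_S$.
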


Part~(1) of Theorem~\ref{thm:lower-alpha-intro} immediately follows from Theorem~\ref{thm:SMS-hard}.

\begin{proof}[Proof of Theorem~\ref{thm:lower-alpha-intro}, Part~(1)]
  Any SMP protocol for estimating matching size to within a factor of $\alpha < \frac{k+1}{2}$ can be used to solve the $\SMS{n}{k}$ problem.
  Moreover, as stated in Section~\ref{sec:cc-ic}, SMP communication complexity of a $k$-player problem is at most $k$ times the space
  complexity of any single-pass streaming algorithm in dynamic streams~\cite{LiNW14,AiHLW16}; this finalizes the first part of the proof. 

  To see that the space complexity holds even when the input graph is both sparse and having bounded arboricity, notice that any graph $G$
  in $\SMS{n}{k}$ has exactly $k \cdot \frac{n}{k} = n$ edges (hence sparse); furthermore, since each player is given a matching (which is
  always a forest), the arboricity of $G$ is at most $k \leq 2\alpha$.
\end{proof}
 
In the following, we focus on proving Theorem~\ref{thm:SMS-hard}.  This theorem is ultimately proved by a reduction from the
\bhmz problem defined in Section~\ref{sec:bhh}. However, this reduction is non-standard in the sense that it is \emph{protocol-dependent}:
given any protocol $\Prot$ for \sms, we create a protocol for \bhmz by \emph{embedding} an instance of \bhmz in the input of \sms, whereby
the embedding is designed specifically for the protocol $\Prot$. It is worth mentioning that \bhmz is a hard problem even in the one-way
model, while the distribution that we create for $\sms$ is only hard in the SMP model, meaning that if any player is allowed to send a
single message to any other player (instead of the referee), then $\Ot(1)$ bits of communication suffices to solve the problem. Therefore, a
key technical challenge here is to design a reduction from a one-way problem to a problem that is ``inherently'' simultaneous, or in other
words, is easy to solve in the one-way model.

\subsubsection{A Hard Input Distribution for $\SMS{n}{k}$}\label{sec:sms-dist}
Let $\DistBHM$ be the hard input distribution of $\BHMZ{\frac{2n}{k}}$ in Corollary~\ref{cor:bhh} (for $t=2$) and $\DistBHMY$ and $\DistBHMN$
be, respectively, the distribution on \Yes and \No instances of $\DistBHM$.  

\textbox{The distribution $\DistSMS$ for $\SMS{n}{k}$:}{
  \begin{enumerate}
 
  \item For each $i \in [k]$, independently draw a $\BHMZ{{\frac{2n}{k}}}$ instance $(\Mbhm_i,\xbhm_i) \sim \DistBHM$.
    
    \item Draw a \emph{random} permutation $\sigma: \Bracket{n + {n\over k}} \rightarrow \Bracket{n + {n\over k}}$. 
    
    \item For each player $i \in [k]$, we define a mapping $\sigma_i: [{2n \over k}] \rightarrow \bracket{n + {n\over k}}$ as follows:  
    \begin{itemize}
    	\item For each  $j \in [{2n \over k}]$ with $\xbhm_i(j) = 1$, if $\xbhm_i(j)$ is the $\ell$-th smallest index
      	with value $1$, let $\sigma_i(j):= \sigma(\ell)$\footnote{Here, we use the fact that $\norm{\xbhm_i}_0 = \frac{n}{k}$ in $\DistBHM$ by Corollary~\ref{cor:bhh}}.
	\item For each  $j \in [{2n \over k}]$ with $\xbhm_i(j) = 0$, if $\xbhm_i(j)$ is the $\ell$-th smallest index
      	with value $0$, let $\sigma_i(j) := \sigma(i \cdot {n\over k} + \ell)$.
    \end{itemize}
    \item The input to each player $\PS{i}$ is a matching $M_i := \set{\paren{\sigma_i(u),\sigma_i(v)} \mid (u,v) \in \Mbhm_i}$. 

  \end{enumerate}
}

Observe that the distribution $\DistSMS$ is defined by $k$ instances of $\BHMZ{{\frac{2n}{k}}}$, i.e., $(\Mbhm_i, \xbhm_i)$ (for $i \in [k]$), along with a
mapping $\sigma$. The mapping $\sigma$ relates the vectors $\xbhm_i$ to the set of vertices in the final graph $G$ while ensuring that
across the players, for any $j \in [{2n \over k}]$ where $\xbhm_i(j)=1$, the vertex that $j$ maps to is \emph{shared}, while the vertices
with $\xbhm_{i}(j)=0$ are \emph{unique} to each player. Moreover, the mapping $\sigma_i$ provided to each player effectively describes the
set of vertices (denoted by $V_i$) that the edges of $\PS{i}$ will be incident on, and the matching $\Mbhm_i$ describes the edges between
$V_i$. Hence, we can \emph{uniquely} define the input of each player $\PS{i}$ by the pair $(\Mbhm_i,\sigma_i)$, and from now on, without
loss of generality, we assume the input given to each player $\PS{i}$ is the pair $(\Mbhm_i,\sigma_i)$. 

We should note right away that the distribution $\DistSMS$ is not a ``hard'' distribution for $\SMS{n}{k}$ in the traditional sense: it is not
hard to verify that for any graph $G \sim \DistSMS$, $\opt(G)$ is concentrated around its expectation, and hence it is trivial to design a
protocol when instances are {promised} to be \emph{only} sampled from $\DistSMS$: always output $\Ex_{G \sim \DistSMS}\bracket{\opt(G)}$, which requires no communication from
the players.

Nevertheless, the way we use the distribution $\DistSMS$ as a hard distribution is to consider any protocol $\protSMS$ that succeeds \emph{uniformly}, i.e., on \emph{any} instance of
$\SMS{n}{k}$; we then execute $\protSMS$ on $\DistSMS$ and argue that in order to perform
well on every instance of $\DistSMS$, $\protSMS$ must convey a non-trivial amount of information about the input of the players in \emph{some sub-distribution} of $\DistSMS$.  To
continue, we need the following definitions.

\begin{definition}[Input Profile]
  For each graph $G \sim \DistSMS$, we define the input profile of $G$ to be a vector $f \in \set{\Yes, \No}^{k}$, where
  $f(i) = \Yes$ iff the $i$-th \bhm instance $(\Mbhm_i, \xbhm_i)$ in $G$ is a \Yes  instance and otherwise $f(i) = \No$.
\end{definition}

The $2^{k}$ different possible input profiles partition $\DistSMS$ into $2^{k}$ different distributions. 
For any input profile $f$, we use the notation $\DistSMS \mid f$ to denote the distribution of $\DistSMS$ \emph{conditioned} on its input profile 
being $f$. Two particularly interesting profiles for our purpose are the \emph{all-equal} profiles, i.e., $\fYes:= (\Yes,\ldots,\Yes)$ and $\fNo := (\No,\ldots,\No)$, due to the 
following claim. 

\begin{claim}\label{clm:SMS-size-gap}
  For any graph $G \sim \paren{\DistSMS \mid \fYes}$, $\opt(G) \geq {n \over 2} + \frac{n}{2k}$, and for any graph $G \sim \paren{\DistSMS \mid \fNo}$, $\opt(G) \leq {n \over k}$. 
\end{claim}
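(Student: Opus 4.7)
The proof will exploit the combinatorial structure built into $\DistSMS$. Observe that $\sigma$ partitions the $n+n/k$ vertices into a ``shared'' set $S := \{\sigma(1),\ldots,\sigma(n/k)\}$ of size $n/k$ and $k$ pairwise disjoint ``private'' blocks $P_i := \{\sigma(i \cdot n/k + 1),\ldots,\sigma((i+1)n/k)\}$, each of size $n/k$. By definition of $\sigma_i$, for every player $\PS{i}$ the indices $j$ with $\xbhm_i(j)=1$ are mapped into $S$, while indices with $\xbhm_i(j)=0$ are mapped into $P_i$; hence every edge of $M_i$ lies in $S \cup P_i$, and edges of different players share vertices only through $S$. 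By Corollary~\ref{cor:bhh}, $\|\xbhm_i\|_0 = n/k$ and $\Mbhm_i$ is a perfect matching on $2n/k$ vertices (so $M_i$ has exactly $n/k$ edges).

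For the Yes bound, the key observation is that in a Yes $\BHMZ{2n/k}$ instance every edge $(u,v) \in \Mbhm_i$ satisfies $\xbhm_i(u) \oplus \xbhm_i(v) = 0$, so both endpoints of each edge are mapped to the same side ($S$ or $P_i$). A degree count then shows that exactly $n/(2k)$ edges of $M_i$ lie inside $P_i$ and exactly $n/(2k)$ edges lie inside $S$, the latter forming a \emph{perfect matching of $S$} (since they cover $2 \cdot n/(2k) = n/k$ shared vertices). The plan is to explicitly exhibit a matching of the claimed size: take all private-private edges from \emph{every} player (these collectively use $k \cdot n/(2k) = n/2$ edges on the pairwise disjoint sets $P_1,\ldots,P_k$) and augment with the $n/(2k)$ shared-shared edges of a single arbitrary player. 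Since $S$ is disjoint from every $P_i$, these two edge sets are vertex-disjoint, giving a matching of size $n/2 + n/(2k)$.

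For the No bound, in a No $\BHMZ{2n/k}$ instance every edge $(u,v) \in \Mbhm_i$ satisfies $\xbhm_i(u) \oplus \xbhm_i(v) = 1$, so each edge has exactly one endpoint mapped into $S$ and one into $P_i$. Consequently \emph{every} edge of the graph $G$ is incident to $S$, and since any matching uses each vertex of $S$ at most once, $\opt(G) \leq |S| = n/k$.

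Both directions are essentially structural consequences of the reduction and should go through without any genuine obstacle; the only point requiring care is the counting argument in the Yes case (tracking how the $n/k$ ones in $\xbhm_i$ pair up under a Yes instance of $\Mbhm_i$) and the verification that the chosen edge sets are indeed vertex-disjoint, both of which follow directly from the construction.
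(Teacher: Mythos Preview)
Your proposal is correct and follows essentially the same approach as the paper's proof: for the \Yes case you exhibit the matching consisting of all private-private edges (contributing $n/2$) together with the shared-shared edges of one player (contributing $n/(2k)$), and for the \No case you observe that every edge is incident to the shared set $S$ of size $n/k$. The only cosmetic difference is that you make explicit the observation that the shared-shared edges of a single player form a perfect matching on $S$, which the paper leaves implicit.
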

\begin{proof}
  In $(\DistSMS \mid \fYes)$, each \bhm instance $(\Mbhm_i, \xbhm_i)$ (for $i \in [k]$) is drawn from $\DistSMSY$, meaning that for every edge
  $(u,v) \in \Mbhm_i$, $\xbhm_i(u) \oplus \xbhm_i(v) = 0$. Therefore, either $\xbhm_i(u) = \xbhm_i(v) = 0$ or $\xbhm_i(u) = \xbhm_i(v) = 1$. 
  Since $\Mbhm_i$ is a perfect matching over the set $[{2n \over k}]$ and the hamming weight of $\xbhm_i$ is ${n \over k}$ (by Corollary~\ref{cor:bhh}), for half of the
  edges in $\Mbhm_i$, we must have $\xbhm_i(u) = \xbhm_i(v) = 0$. Moreover, as $\DistSMS$ maps every vertex with $\xbhm_i(j) = 0$ to a distinct vertex in
  $G$, these $\half \cdot \card{\Mbhm_i} = {n \over 2k}$ edges are vertex-disjoint with any other edge in the final graph $G$. Hence, between the $k$ players, these edges together
  form a matching of size $k \cdot {n \over 2k} = {n \over 2}$. Finally, there is also a 
  matching of size $\frac{n}{2k}$ between the shared vertices: simply use the edges corresponding to a matching $\Mbhm_i$ of an arbitrary player $\PS{i}$ that are incident 
  on shared vertices. This means that in this case, $\opt(G) \geq {n \over 2} + \frac{n}{2k}$. 

  In $(\DistSMS \mid \fNo)$, each $\bhm$ instance $(\Mbhm_i, \xbhm_i)$ (for $i \in [k]$) is drawn from $\DistSMSN$, meaning that for every edge
  $(u,v) \in \Mbhm_i$, $\xbhm_i(u) \oplus \xbhm_i(v) = 1$. Therefore, exactly one of $\xbhm_i(u)$ or $\xbhm_i(v)$ is equal to $1$. In $\DistSMS$, for every
  player, the vertices where $\xbhm_i(j) = 1$ are all mapped to the (same) set of vertices $\set{\sigma(1), \sigma(2), \ldots, \sigma({n \over k})}$
  (denoted by $V_0$). Therefore, in the final graph $G$, \emph{every} edge of \emph{every} player is incident on some vertex in $V_0$, and
  hence the maximum matching size in $G$ is at most $\card{V_0} = {n \over k}$.
\end{proof}

In the following, we fix any $\delta$-error protocol $\protSMS$ for $\SMS{n}{k}$. By Claim~\ref{clm:SMS-size-gap}, $\protSMS$ is
also a $\delta$-error protocol for distinguishing between the two distributions $\paren{\DistSMS \mid \fYes}$ and
$\paren{\DistSMS \mid \fNo}$: simply output $\Yes$ if the estimate of $\opt(G)$ is strictly larger than ${n \over k}$ and output $\No$
otherwise.  From here on, with a slight abuse of notation, we say that $\protSMS$ outputs \Yes whenever it estimates $\opt(G)$ strictly
larger than ${n \over k}$ and outputs \No otherwise (this notation is defined over \emph{any} input, not necessarily chosen from
$\paren{\DistSMS \mid \fYes}$ or $\paren{\DistSMS \mid \fNo}$).

Intuitively, to distinguish between $\paren{\DistSMS \mid \fYes}$ and $\paren{\DistSMS \mid \fNo}$, one should solve (at least one of) the
\bhmz instances embedded in the distribution. This naturally suggests the possibility of performing a reduction from \bhmz and arguing that 
the distribution on $\paren{\DistSMS \mid \fYes}$ and $\paren{\DistSMS \mid \fNo}$ is a hard distribution for $\SMS{n}{k}$. However, in the
case of these two distributions, the $k$ $\bhmz$ instances are highly correlated and hence it is hard
to reason about which \bhmz instance is ``actually being solved''. To get around this, we try $\protSMS$ on other input profiles, with,
informally speaking, less correlation across the $\bhm$ instances.  An immediate issue here is that, unlike the case for the distributions
$\paren{\DistSMS \mid \fYes}$ and $\paren{\DistSMS \mid \fNo}$, the matching sizes for graphs drawn from the other input profiles do not
have a large gap.  Hence, a priori it is not even clear what the actual task of $\protSMS$ is, or why $\protSMS$ should be able to distinguish them. 
However, we show that there are special pairs of input profiles (other than $\fYes$ and $\fNo$) with our desired property (i.e., ``low'' correlation between the \bhmz instances) 
that $\protSMS$ is still able to distinguish. These pairs are ultimately connected to the (property
of) protocol $\protSMS$ itself and hence vary across different choices for the protocol $\protSMS$; this is the main reason that we perform a
protocol-dependent reduction in our proof.

For any input profile $f$, define $\pfYes{f}$ (resp. $\pfNo{f}$) as the probability that $\protSMS$ outputs \Yes (resp. \No) when its input
is sampled from $\DistSMS \mid f$.  We define the notation of \emph{informative index} for the protocol $\protSMS$.

\begin{definition}[Informative Index]
  We say that an index $i \in [k]$ is \emph{$\gamma$-informative} for the protocol $\protSMS$ iff there exist two input profiles $f$ and
  $g$ where $f(i) = \Yes$, $g(i) = \No$, and $f(j) = g(j)$ for all $j \neq i$, such that
  $\pfYes{f} + \pfNo{g} \geq 1+2\gamma$. In this case, the input profiles $f$ and $g$ are called the \emph{witness} of $i$.
\end{definition}

Informally speaking, if $\protSMS$ has a $\gamma$-informative index $i$, then $\protSMS$ can distinguish whether the $i$-th \bhmz instance is
a $\Yes$ or $\No$ instance w.p. at least $\half + \gamma$ (i.e., $\protSMS$ solves the $i$-th \bhmz instance).  In the rest of this section, 
we prove that indeed every protocol $\protSMS$ has an informative index.

\begin{lemma}\label{lem:informative-index}
Any $\delta$-error protocol $\protSMS$ for $\sms$ has a $\gamma$-informative index for $\gamma = \frac{1-2\delta}{2k}$.
\end{lemma}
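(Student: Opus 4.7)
The plan is a standard hybrid (telescoping) argument on input profiles. Since $\protSMS$ is a $\delta$-error protocol for $\sms$, Claim~\ref{clm:SMS-size-gap} together with our convention that $\protSMS$ outputs \Yes iff it estimates $\opt(G) > n/k$ gives $\pfYes{\fYes} \geq 1-\delta$ and $\pfNo{\fNo} \geq 1-\delta$. These are the only two numerical facts about $\protSMS$ that I will use.

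Next, I would build a hybrid chain of profiles $f_0, f_1, \ldots, f_k$ interpolating between $\fYes$ and $\fNo$ by flipping one coordinate at a time: let $f_0 := \fYes$ and, for each $i \in [k]$, let $f_i$ be obtained from $f_{i-1}$ by changing the $i$-th entry from \Yes to \No; thus $f_k = \fNo$, and for each $i$ the profiles $f_{i-1}$ and $f_i$ agree on all coordinates except $i$ where $f_{i-1}(i)=\Yes$ and $f_i(i)=\No$. Set $a_i := \pfYes{f_i}$. The endpoints satisfy $a_0 \geq 1-\delta$ and $a_k = 1 - \pfNo{\fNo} \leq \delta$, so by telescoping
\[
  \sum_{i=1}^{k} (a_{i-1} - a_i) \;=\; a_0 - a_k \;\geq\; 1 - 2\delta.
\]

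By averaging (pigeonhole), there exists some index $i^\star \in [k]$ with $a_{i^\star - 1} - a_{i^\star} \geq (1-2\delta)/k$. Taking $f := f_{i^\star - 1}$ and $g := f_{i^\star}$ as the candidate witness, we have $f(i^\star) = \Yes$, $g(i^\star) = \No$, $f(j) = g(j)$ for all $j \neq i^\star$, and
\[
  \pfYes{f} + \pfNo{g} \;=\; a_{i^\star - 1} + (1 - a_{i^\star}) \;\geq\; 1 + \frac{1-2\delta}{k} \;=\; 1 + 2\gamma,
\]
so $i^\star$ is $\gamma$-informative with witness $(f,g)$, as required.

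There is no real obstacle here: the argument is a clean hybrid, and the only substantive input is the size-gap of Claim~\ref{clm:SMS-size-gap} which lets us convert the estimation guarantee into Yes/No decision probabilities at the two extreme profiles. The real difficulty of Theorem~\ref{thm:SMS-hard} is not in this lemma but in the subsequent step, where one must exploit the existence of such an informative index $i^\star$ to extract a one-way \bhmz protocol from $\protSMS$ by having an external party simulate the players $\PS{j}$ for $j \neq i^\star$ using the witness profiles $f$ and $g$ (this is where the SMP-versus-one-way subtlety enters).
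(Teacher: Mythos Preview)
Your proof is correct and is essentially the same argument as the paper's: both build the identical hybrid chain $\fYes = f_0, f_1, \ldots, f_k = \fNo$ flipping one coordinate at a time and telescope the $\Yes$-probabilities along it. The only cosmetic difference is that the paper phrases the argument by contradiction (assuming no index is $\gamma$-informative and deriving $\pfYes{\fYes} + \pfNo{\fNo} < 2(1-\delta)$), whereas you argue directly via averaging/pigeonhole; the content is the same.
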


\begin{proof}
  Suppose towards a contradiction that for any two input profiles $f$ and $g$ that differ only on one entry (say $i$, and $f(i) = \Yes$, $g(i) = \No$), 
  we have, $\pfYes{f} + \pfNo{g} < 1+2\gamma$ for $\gamma = \frac{1 - 2\delta}{2k}$. 

  Consider the following sequence of $(k + 1)$ input profiles:
  \begin{align*}
  (\fYes=) (\Yes,\Yes, \ldots, \Yes ),     (\No,\Yes, \ldots, \Yes ),   (\No,\No, \ldots, \Yes ), \ldots,   (\No,\No, \ldots, \No) (= \fNo)
  \end{align*}
  whereby, for the $j$-th input profile of this sequence (denoted by $f_j$), the first $j-1$ entries of $f_j$ are all \No, and the
  rest are all \Yes. 
  
  Observe that for any $j \in [k]$, the input profiles $f_j$ and $f_{j+1}$ differ in exactly one entry $j$, and $f_j(j) = \Yes$,
  while $f_{j+1}(j) = \No$. Hence, by our assumption, we have $\pfYes{f_j} + \pfNo{f_{j+1}} < 1 + 2\gamma$, which implies
  \begin{align*}
    \pfYes{f_j} < 1 + 2\gamma - \pfNo{f_{j+1}} = \pfYes{f_{j+1}} + 2\gamma \tag{$\pfYes{f_{j+1}} + \pfNo{f_{j+1}} = 1$}
  \end{align*}
  Therefore, 
  \begin{align*}
    \pfYes{f_1} &< \pfYes{f_{2}} + 2\gamma< \pfYes{f_{3}} + 2\gamma \cdot 2 < \dots < \pfYes{f_{k+1}} + 2\gamma \cdot k    
  \end{align*}
  which implies (by adding $\pfNo{f_{k+1}}$ to both sides of the inequality)
  \begin{align}
    \pfYes{f_1} + \pfNo{f_{k+1}}  <   \pfYes{f_{k+1}} + \pfNo{f_{k+1}} + 2\gamma \cdot k  =  1  + 2\gamma \cdot k = 2 \cdot (1-\delta) \label{eq:f-cont}
  \end{align}
  by our choice of $\gamma$. However, since $\protSMS$ is a $\delta$-error protocol for $\SMS{n}{k}$, by Claim~\ref{clm:SMS-size-gap}, the probability that $\PiSMS$ succeeds
  in distinguishing $\paren{\DistSMS \mid \fYes}$ from $\paren{\DistSMS \mid \fNo}$ on the
  distribution 
  $\frac{1}{2} \paren{\DistSMS \mid \fYes} + \frac{1}{2} \paren{\DistSMS \mid \fNo}$ is
  at least $1-\delta$. Therefore,
  $\half \cdot (\pfYes{f_1} + \pfNo{f_{k+1}}) \ge 1 -\delta$, a contradiction to Eq~(\ref{eq:f-cont}). 
\end{proof}

In the next section, we use existence of a $\gamma$-informative index in any protocol $\protSMS$ for $\SMS{n}{k}$ to obtain a protocol for $\BHMZ{\frac{2n}{k}}$ w.p. of success at least $\frac{1}{2} + \gamma$, based 
$\protSMS$. 

\subsubsection{The Reduction From the \BHMZ{\frac{2n}{k}} Problem}\label{sec:sms-reduction}

Recall that $\protSMS$ is a $\delta$-error protocol for the distribution $\DistSMS$. Let $\istar$ be a
$\gamma$-informative index of $\protSMS$ (as in Lemma~\ref{lem:informative-index}), and let input profiles $\fistar$ and $\gistar$ be the witness of $\istar$. 

\textbox{Protocol \protBHM. \textnormal{A protocol for reducing \BHMZ{\frac{2n}{k}} to \SMS{n}{k}}}{\medskip \\
  \textbf{Input:} An instance $(M,x) \sim \DistBHM$ of $\BHMZ{{2n \over k}}$. \\
  \textbf{Output:} \Yes if $Mx=0^{\frac{n}{k}}$ and \No if
  $Mx=1^{\frac{n}{k}}$.  \\ \algline
  \begin{enumerate}
\item Bob creates the input $(\Mbhm_{\istar},\sigma_{\istar})$ for the player $\PS{\istar}$ as follows:  
 \begin{itemize}
 \item Let $\Mbhm_{\istar} = M$.
 \item Using \emph{public randomness}, Bob picks $\sigma_{\istar}$ to be a \emph{uniformly random} injection from $[{2n \over k}]$ to
   $[n + {n \over k}]$.
 \item Let $V_{\istar}$ be the image of $\sigma_{\istar}$ (i.e., $V_{\istar} = \set{\sigma_{\istar}(j) \mid j\in[{2n \over k}]}$).

  \end{itemize}
\item Alice generates the inputs for all other players. Using \emph{private randomness}, Alice first randomly partitions the set
  $[n+\frac{n}{k}] \setminus V_{\istar}$ into $(k - 1)$ sets $\set{V'_i}_{i \in [k] \setminus \set{\istar}}$, where each
  $V'_{i}$ has size ${n \over k}$.  She then generates the input of each player \PS{i} ($i \neq \istar$) as
  follows:
  \begin{itemize}
  \item If $\fistar(i) = \Yes$ (resp. $\fistar(i)=\No$), Alice draws a \BHMZ{\frac{2n}{k}} instance $(\Mbhm_i,\xbhm_i)$ from $\DistBHMY$
    (resp. from $\DistBHMN$). 
  \item The mapping $\sigma_i: [{2n \over k}] \rightarrow [n + {n \over k}]$ is defined as follows. For the ${n \over k}$ entries in
    $[{2n \over k}]$ where $x_i$ is $0$, Alice assigns a \emph{uniformly random} bijection to $V'_i$. For each entry $j$ in $[{2n \over k}]$
    where $\xbhm_i(j) = 1$, suppose $\xbhm_i(j)$ is the $\ell$-th $1$ of $x_i$, Alice assigns $\sigma_i(j) = \sigma_{\istar}(j')$ where $j'$ is the
    index such that $x(j')$ is the $\ell$-th $1$ of $x$.\footnote{Recall that $x$ is the input  vector to Alice in a \bhmz
      instance.}
  \end{itemize}
  \item Bob runs $\protSMS$ for the $\istar$-th player and Alice runs $\protSMS$ for all other players and sends the messages of all other
  players to Bob.
  \item After receiving the messages from Alice, Bob runs the referee part of the protocol $\protSMS$, and outputs the same answer as $\protSMS$. 
\end{enumerate}
}

It is relatively straightforward to verify that the distribution of the instances created by this reduction and the original distributions $(\DistSMS \mid \fistar)$ and $(\DistSMS \mid \gistar)$ 
are identical. Formally,  

\begin{claim}\label{clm:sms-dist-match}
  Suppose $(M,x)$ is a \Yes (resp. \No) \bhm instance; then the \sms instance constructed by Alice and Bob in the given reduction is 
  sampled from $\DistSMS\mid \fistar$ (resp. $\DistSMS \mid \gistar $).
\end{claim}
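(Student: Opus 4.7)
The plan is to verify the claim by checking that the joint distribution of the inputs $\{(\Mbhm_i,\sigma_i)\}_{i \in [k]}$ produced by the reduction matches $\DistSMS \mid \fistar$ in the \Yes case (the \No case is entirely symmetric with $\gistar$). I would break this into three checks: the input profile matches, the BHM instances are mutually independent with correct marginals, and the vertex embedding has the correct joint distribution.

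The profile check is almost immediate from construction. In the \Yes case, player $\PS{\istar}$ receives $(\Mbhm_{\istar},\sigma_{\istar})=(M,\sigma_{\istar})$ where $(M,x)\sim \DistBHMY$, so the $\istar$-th profile entry is \Yes, matching $\fistar(\istar)=\Yes$. For every other $i$, Alice explicitly draws $(\Mbhm_i,\xbhm_i)$ from $\DistBHMY$ or $\DistBHMN$ according to $\fistar(i)$, so the profile agrees with $\fistar$ on all coordinates. Independence and correct marginals of $(\Mbhm_i,\xbhm_i)$ across $i$ follow because Bob's instance is independent of Alice's private randomness and Alice samples the remaining $k-1$ instances independently; this matches the product-of-marginals structure of $\DistSMS \mid \fistar$.

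The delicate step is showing that the vertex embedding is correctly distributed. In $\DistSMS \mid \fistar$, every mapping $\sigma_i$ is derived from a single uniformly random permutation $\sigma$ on $[n+n/k]$: the shared set is $S:=\{\sigma(1),\dots,\sigma(n/k)\}$, the private set of player $i$ is $V^{(i)}:=\{\sigma(i\cdot n/k+1),\dots,\sigma((i+1)\cdot n/k)\}$, the $\ell$-th 1-position of $\xbhm_i$ is sent to $\sigma(\ell)$, and the $\ell$-th 0-position is sent to $\sigma(i\cdot n/k+\ell)$. In the reduction, the embedding arises from two independent sources: Bob's uniformly random injection $\sigma_{\istar}:[2n/k]\to[n+n/k]$, and Alice's uniformly random partition of $[n+n/k]\setminus V_{\istar}$ into $(k-1)$ blocks $\{V'_i\}$ together with uniform bijections from the 0-positions of each $\xbhm_i$ to $V'_i$. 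I would couple the two processes by showing that the composition of Bob's random injection with the (fixed) hamming pattern of $x$ --- together with Alice's uniform partition and uniform within-block bijections --- produces the same joint distribution over $(S,V^{(1)},\dots,V^{(k)})$ equipped with uniform orderings of each part, which is precisely the distribution obtained by restricting a uniformly random permutation $\sigma$ of $[n+n/k]$ to the first $(k+1)\cdot(n/k)$ coordinates.

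Given this coupling, the last thing to verify is that the mappings of the 1-positions agree across players in both processes. In $\DistSMS$, the $\ell$-th 1 of every $\xbhm_i$ is sent to $\sigma(\ell)$, a shared vertex that does not depend on $i$. The reduction enforces exactly this: for each $i\neq\istar$, the $\ell$-th 1 of $\xbhm_i$ is explicitly sent to $\sigma_{\istar}(j')$, where $j'$ is the position of the $\ell$-th 1 of $x=\xbhm_{\istar}$, which is the same shared vertex that $\istar$'s own $\ell$-th 1 is sent to. The main obstacle is purely bookkeeping --- keeping track of the orderings of 1's and 0's within each $\xbhm_i$ and verifying that the two random processes induce identical joint laws on ordered tuples --- but once the coupling is stated carefully no deeper argument is needed beyond the symmetry of the uniform distribution.
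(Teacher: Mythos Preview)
Your proposal is correct and follows essentially the same approach as the paper's proof. Both arguments verify the BHM-instance marginals coordinate by coordinate and then reduce the vertex-embedding check to the observation that a uniformly random permutation $\sigma$ of $[n+n/k]$ can be sampled in stages---first Bob's random injection $\sigma_{\istar}$ (which, composed with the Hamming pattern of $x$, yields the ordered shared block and $\istar$'s private block), then Alice's uniform partition of the complement together with uniform within-block bijections; your explicit check that the $1$-positions of every $\xbhm_i$ land on the same shared vertices is a detail the paper leaves implicit but is indeed part of the same verification.
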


The proof of this claim is deferred to the end of this section. 

\begin{proof}[Proof of Theorem~\ref{thm:SMS-hard}]
  
  Let $\gamma = \frac{1-2\delta}{2k}$; we first argue that $\protBHM$ outputs a correct answer for \BHMZ{\frac{2n}{k}} w.p. at least $\frac{1}{2} + \gamma$.  
  If the input \bhmz instance $(M,x)$ is a \Yes (resp. \No) instance, then by Claim~\ref{clm:sms-dist-match}, the distribution of the \sms instance
  created in $\protBHM$ is exactly $\DistSMS \mid \fistar$ (resp. $\DistSMS \mid \gistar $); consequently, $\protSMS$ outputs the correct answer w.p. 
  $\frac{1}{2} \cdot \paren{\pfYes{\fistar} + \pfNo{\gistar}}$. Since $\istar$ is a $\frac{1-2\delta}{2k}$-informative instance, 
  we have $\frac{1}{2} \cdot \paren{\pfYes{\fistar} + \pfNo{\gistar}} \geq \frac{1}{2} + \frac{1-2\delta}{2k} = \frac{1}{2} + \gamma$ and hence the protocol $\PiBHM$ 
  outputs the correct answer w.p. at least $\half + \gamma$. 
  
  Now notice that in $\protBHM$, Alice is sending messages of $k-1$ players in $\protSMS$ to Bob and hence communication cost of $\protBHM$ is at most the communication cost of $\protSMS$. 
  Since solving $\BHM{\frac{2n}{k}}$ on $\DistBHM$ w.p. of success $\frac{1}{2} + \gamma$ requires at least $\Omega(\gamma \cdot \sqrt{\frac{n}{k}})$ bits of communication by Corollary~\ref{cor:bhh}, we have
  $\norm{\protSMS} = \Omega( \gamma \cdot \sqrt{\frac{n}{k}})$. Moreover, $\gamma = \frac{\eps}{k}$ for 
  some constant $\eps$ bounded away from $0$ (since $\delta$ is a constant bounded away from $1/2$), hence we obtain that
  $      \CCS{\SMS{n}{k}}{\dist}{\delta} = \Omega\paren{\frac{\sqrt{n}}{k\sqrt{k}}} $
  for $\dist := \frac{1}{2} \paren{\DistSMS \mid \fistar} + \frac{1}{2} \paren{\DistSMS \mid \gistar}$. 
\end{proof}

It only remains to prove Claim~\ref{clm:sms-dist-match}.

\begin{proof}[Proof of Claim~\ref{clm:sms-dist-match}]
  Suppose the input \bhmz instance $(M,x)$ is a \Yes instance. We need to prove that the distribution of the \sms instance created by the
  protocol $\PiBHM$ is the same as the distribution $\DistSMS \mid \fistar$ (the case where $(M,x)$ is a \No instance is similar).  In the
  following, we will go through the construction of $\DistSMS$ conditioned on $\fistar$ step by step and explain how the reduction captures each
  step of the construction.

  Firstly, in the distribution $\DistSMS \mid \fistar$, we draw $k$ instances of \bhmz, where the $i$-th instance $(\Mbhm_i, \xbhm_i)$ is drawn from the \Yes
  (resp. \No) \bhmz instances if $\fistar(i) = \Yes$ (resp. $\fistar(i) = \No$). It is straightforward to verify that for any $i \neq \istar$, in the reduction, the \bhmz instance created by Alice 
  is drawn following $\fistar(i)$. The $\istar$-th instance corresponds to the original input of Alice and Bob and since we assume it is a \Yes instance, this instance is also sampled following 
  $\fistar(i)$. This implies that the first part of the input of every player (i.e., the matchings over $[{2n \over k}]$) are drawn the way in the reduction as in the original distribution. 

  Secondly, in the original distribution, we draw a random permutation $\sigma: [n+ {n \over k}] \rightarrow [n+ {n \over k}]$ and use $\sigma$ to
  define the mapping $\sigma_i$ of each player $i$: map the vertices $j$ ($j \in [{2n \over k}]$) where $x_i(j) = 1$ (resp. $x_i(j) = 0$) to
  the same set of vertices $\set{\sigma(1), \sigma(2), \ldots, \sigma({n \over k})}$ (resp.  to a private set of vertices
  $\set{\sigma(i \cdot {n \over k} + 1), \sigma(i \cdot {n \over k} + 2), \ldots, \sigma(i \cdot {n \over k} + {n \over k})}$) in the final graph $G$.

  In the reduction, Bob picks a random injection $\sigma_{\istar}$, and defines the image of $\sigma_{\istar}$ by $V_{\istar}$; Alice
  randomly partition $[n + {n \over k}] \setminus V_{\istar}$ into $k-1$ sets of size ${n \over k}$ each.  For each of the other player $i \neq \istar$,
  Alice assigns the entries in $[{2n \over k}]$ where $x_i$ is $1$ to $V^*$ following the same procedure as $\DistSMS$. For the entries in
  $[{2n \over k}]$ where $x_i$ is $0$, Alice assigns a random bijection to $V_i$.

  To see that the two ways of defining the mappings $\sigma_i$'s are equivalent in the original distribution and in the reduction, simply note that one
  can decompose the choice of the random permutation $\sigma$ into the following steps:
  \begin{enumerate}
  \item Pick a random subset of size ${2n \over k}$ (i.e., $V_{\istar}$). 
  \item Partition the remaining universe into $k-1$ sets of size ${n \over k}$ each.
  \item Pick a random permutation for each set (which is equivalent to picking a random bijection as used in the reduction).
  \end{enumerate}

  Therefore, the mappings $\sigma_i$'s created by the reduction induce the same distribution as the mappings created by $\DistSMS$. Hence, the
  distribution of the inputs of all players created by the protocol $\protBHM$ is the same as the distribution $\DistSMS \mid f_1$.
\end{proof}


\subsection{An $\Omega(n/\alpha^2)$ Lower Bound for Dense Graphs}\label{sec:dense-matching}

We switch to the dense graphs case in this section (i.e., Part~(2) of Theorem~\ref{thm:lower-alpha-intro}), and establish a better lower
bound of $\Omega(n/\alpha^2)$ for computing an $\alpha$-approximate matching size in dynamic streams.

We define \MSa{n,k,\alpha} as the \emph{$k$-player simultaneous communication} problem of estimating the matching size to within a factor of
$\alpha$, when edges of an $n$-vertex input graph $G(V,E)$ are partitioned across the $k$-players.  In this section, we prove the following
lower bound on the information complexity of $\MSa{n,k,\alpha}$ in the SMP communication model.

\begin{theorem}[Lower bound for \MS{n,k,\alpha}]\label{thm:MSa-hard}
  For any sufficiently large $n$ and $\alpha$, there exists some $k = \alpha \cdot \paren{\frac{n}{\alpha}}^{o(1)}$ and a distribution $\distMSa$
  for $\MSa{n,k,\alpha}$ such that for any constant $\delta < \frac{1}{2}$:
  \[
      \ICS{\MSa{n,k,\alpha}}{\distMS}{\delta} = \Omega(nk/\alpha^2)
  \]
\end{theorem}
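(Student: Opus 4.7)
The plan is to establish this lower bound via a direct-sum reduction from $k$ independent copies of the $\BHHZ{m}{t}$ problem with $m = \Theta(n/\alpha^2)$ and $t = \omega(\log m)$, leveraging the information complexity bound of Theorem~\ref{thm:bhh-ic}. I would construct the hard distribution $\distMSa$ by using an Alon-Moitra-Sudakov $(r, t_{RS})$-RS graph on $n$ vertices with $r = n^{1-o(1)}$ and $r \cdot t_{RS} = \binom{n}{2} - o(n^2)$ as a scaffold; the $t_{RS}$ induced matchings serve as edge-disjoint slots into which one BHHZ gadget per player is embedded. This choice supports taking $k = \alpha \cdot (n/\alpha)^{o(1)}$ players while still leaving each gadget enough room for an $m$-vertex BHHZ instance whose per-instance information complexity, by Theorem~\ref{thm:bhh-ic}, is $\Omega(m^{1-1/t}) = \Omega(n/\alpha^2)$.

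For the matching-size gap, I would use a core-periphery layout in the spirit of the sparse $\sms$ problem from Section~\ref{sec:simple-ms} but scaled to the dense regime: designate a shared set $C$ of $\Theta(n/\alpha)$ vertices as the ``core'' and partition the remaining $\Theta(n)$ vertices into player-private peripheries. Following the BHHZ-to-matching reduction of Claim~\ref{clm:bhh-matching}, each player's BHHZ gadget is arranged so that in the Yes case it contributes a large matching lying entirely within that player's private periphery (no competition with other players), while in the No case its matching is forced to traverse the core $C$. In the all-Yes distribution, the private matchings combine without conflict to yield $\opt(G) = \Theta(n)$, whereas in the all-No distribution every player's matching competes for vertices in $C$, bounding $\opt(G) = \Theta(n/\alpha)$; this gives the required $\alpha$-factor separation that any $\alpha$-approximation protocol must detect. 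The induced-matching property of the RS scaffold is what ensures the gadgets are edge-disjoint and (after randomization of labels) approximately independent across players.

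The main technical step, and the primary obstacle, is the direct-sum bound. Given any $\delta$-error protocol $\Pi$ for $\MSa{n,k,\alpha}$ on $\distMSa$, I would show that for each coordinate $i \in [k]$ the mutual information $I(\bProt_i;\bX_i \mid \bX^{-i},\bR)$ is at least $\Omega(n/\alpha^2)$, by embedding a fresh $\BHHZ{m}{t}$ instance into slot $i$ while generating the other $k-1$ gadgets from public randomness according to an appropriately chosen symmetric mixture of Yes and No distributions; Theorem~\ref{thm:bhh-ic} then lower-bounds this quantity, and summing via conditional super-additivity of mutual information (\itfacts{info-super-additivity}) yields the claimed $\Omega(nk/\alpha^2)$ total. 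The subtlety, unlike in the sparse SMP setting of Section~\ref{sec:simple-ms}, is that a single coordinate's BHHZ answer shifts $\opt(G)$ only by $\Theta(n/\alpha^2)$, which is well within the $\alpha$-approximation slack of the protocol, so one cannot directly extract a single-coordinate solver for $\BHHZ{m}{t}$ by plugging in arbitrary inputs to the other $k-1$ slots. I would address this by a hybrid argument across the $k$ coordinates analogous to Lemma~\ref{lem:informative-index} but carried out at the level of mutual information rather than acceptance probabilities, combined with the use of the near-uniform decomposition of Lemma~\ref{lem:aux-high-ent} to reduce randomized protocols to (nearly) uniform ones before invoking Theorem~\ref{thm:bhh-ic}.
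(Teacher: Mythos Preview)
Your approach departs substantially from the paper's and runs into exactly the obstacle you flag.

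The paper's proof does \emph{not} use \bhh at all here. It hides a single bit $\theta\in\{0,1\}$: each of the $k$ players receives a copy of the same $(r,t)$-RS graph on $N$ vertices together with a private vector $x^{(i)}\in\{0,1\}^t$ whose $\jstar$-th coordinate is forced to equal $\theta$ (with $\jstar$ uniform in $[t]$ and hidden from the players); player $\PS{i}$ keeps the matching $\Mrs_j$ iff $x^{(i)}(j)=1$. A random relabeling makes the vertices of $\Mrs_{\jstar}$ private to each player, so when $\theta=1$ the $k$ disjoint copies of $\Mrs_{\jstar}$ give $\opt(G)\ge(\alpha+1)N$, while when $\theta=0$ every edge touches the $N$ shared vertices and $\opt(G)\le N$. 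Fano's inequality then yields $I(\bT;\bProtMSa\mid\bSigma,\bJ,\bR)=\Omega(1)$, and a short chain of sub/super-additivity steps (expand over $\bJ=j$, split over players, drop the conditioning on $\bJ$, recombine over $j$) shows this forces $\ICost{\protMSa}{\distMSa}=\Omega(t)=\Omega(nk/\alpha^2)$. The direct sum is over the $t$ induced matchings of one RS graph, not over $k$ per-player \bhh instances, and the entire argument is a few lines of information-theoretic bookkeeping.

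Your plan---one \bhhz gadget per player followed by a direct sum over players---is blocked precisely where you anticipate. Flipping a single player's \bhhz answer moves $\opt(G)$ by far less than the $\alpha$-approximation slack, so there is no reason $\bProt_i$ must leak $\Omega(n/\alpha^2)$ bits about $\bX_i$. The informative-index device of Lemma~\ref{lem:informative-index} does not rescue this: that lemma tracks a jump in \emph{acceptance probability} across a hybrid path to locate one coordinate where the protocol gains advantage $\Omega(1/k)$, which is enough to invoke a \emph{communication} lower bound for \bhmz, but it does not produce a per-coordinate \emph{mutual-information} bound of order $n/\alpha^2$---the protocol need not solve the \bhhz instance at that coordinate, only shift its output probability slightly. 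There is also a parameter inconsistency in your construction: with $m=\Theta(n/\alpha^2)$ gadget vertices per player and $k\approx\alpha$ players, the total private-vertex budget is $\Theta(n/\alpha)$, so the all-\Yes matching cannot reach size $\Theta(n)$ as you claim; and the reduction of Claim~\ref{clm:bhh-matching} only produces a $(1+O(1/p))$-factor gap per gadget, not the factor-$\alpha$ gap you need.
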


Theorem~\ref{thm:MSa-hard}, combined with Proposition~\ref{prop:cc-ic}, immediately gives the same lower bound on the SMP communication
complexity of $\MSa{n,k,\alpha}$. Since SMP communication complexity of a $k$-player problem is at most $k$ times the space complexity of
any single-pass streaming algorithm in dynamic streams~\cite{LiNW14,AiHLW16}, this immediately proves the $\Omega(n/\alpha^2)$ lower bound in Part~(2) of
Theorem~\ref{thm:lower-eps-intro}. We now prove Theorem~\ref{thm:MSa-hard}.


\textbox{The hard distribution $\distMSa$ for $\MSa{n,k,\alpha}$:}{
\begin{enumerate}
\item[] \textbf{Parameters:} $r = N^{1-o(1)},~t = \frac{{N \choose 2} - o(N^2)}{r},~k = \frac{(\alpha+1) N}{r},~n=N + k\cdot
  r$.
\item Fix an $(r,t)$-RS graph $\Grs$ on $N$ vertices with induced matchings $\Mrs_1,\ldots,\Mrs_t$.
\item Pick $\jstar \in [t]$ and $\theta \in \set{0,1}$ independently and uniformly at random. 

\item For each player \PS{i}  independently,
  \begin{enumerate}
  \item Denote by $G_i$ the input graph of \PS{i}, initialized to be a copy of $\Grs$ with vertices $V_i = [N]$. Moreover, define $\vstari$ as the set of vertices incident 
  on the matching $\Mrs_{\jstar}$. 
  \item Let $x^{(i)}$ be a $t$-dimensional vector, whereby $x^{(i)}(\jstar) = \theta$ and for any $j \neq \jstar$, $x^{(i)}(j)$ is chosen uniformly at random from $\set{0,1}$.  
  \item For any $j \in [t]$, if $x^{(i)}(j) = 0$ \emph{remove} the matching $\Mrs_\jstar$ from $G_i$ (otherwise, do nothing).
   \end{enumerate}
\item Pick a random permutation $\sigma$ of $[n]$. For every player $\PS{i}$, for each vertex $v$ in $V_i\setminus \vstari$ with label $j$
  ($\in [N]$), \emph{relabel} $v$ to $\sigma(j)$. Enumerate the vertices in $\vstari$ (from the one with the smallest label to the largest),
  and relabel the $j$-th vertex to $\sigma(N + (i-1) \cdot 2r + j)$.  In the final graph, the vertices with the same label correspond to
  the same vertex.
\end{enumerate}
} 

The vertices whose labels belong to $\sigma([N])$ are referred to as \emph{shared} vertices since they belong to the input graph of \emph{every}
player, and the vertices $\vstari$ are referred to as the \emph{private} vertices of the player $\PS{i}$ since they only appear in
the input graph of $\PS{i}$ (in the final graph, i.e., after relabeling). We point out that, in general, the final graph constructed by this distribution is a 
multi-graph with $n$ vertices and $O(kN^2) = O(n^{2}/\alpha)$ edges (counting the multiplicities); the multiplicity of each edge is also at most $k$. 
Finally, the existence of an $(r,t)$-RS graph $\Grs$ with the parameters used in this distribution is guaranteed by a result of~\cite{AlonMS12} (see Section~\ref{sec:rs-graphs}).

\begin{claim}\label{clm:alpha-matching}
  Let:
  \begin{align*}
  	\opt_{1} &:= \min_{G}\Paren{\opt(G) \mid \text{$G$ is chosen from $\distMS$ \emph{conditioned} on $\theta = 1$}} \\
	\opt_{0} &:= \max_{G}\Paren{\opt(G) \mid \text{$G$ is chosen from $\distMS$ \emph{conditioned} on $\theta = 0$}}.
  \end{align*}
  then, $\mu_{1} > \alpha \cdot \mu_{0}$.  
\end{claim}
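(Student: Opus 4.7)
The plan is to prove the two bounds separately: a lower bound of $(\alpha+1)N$ on $\opt_1$ and an upper bound of $N$ on $\opt_0$. Combined with the parameter choice $k = (\alpha+1)N/r$ this immediately yields $\opt_1 \geq (\alpha+1)N > \alpha \cdot N \geq \alpha \cdot \opt_0$.

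For the lower bound on $\opt_1$: when $\theta = 1$, we have $x^{(i)}(\jstar) = 1$ for every player $i \in [k]$, so the induced matching $\Mrs_{\jstar}$ survives in each subgraph $G_i$. Every edge of $\Mrs_{\jstar}$ has both endpoints in $\vstari$, and the relabeling step sends each vertex of $\vstari$ to a label in $\sigma(N + (i-1)\cdot 2r + [2r])$, which is disjoint from the private vertex set of any other player. Thus player $i$ contributes $r$ pairwise vertex-disjoint edges that also share no endpoint with any edge contributed by a different player. Summing across all $k$ players produces a matching of size $k \cdot r = (\alpha+1) N$ in the final graph, regardless of the random choices elsewhere.

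For the upper bound on $\opt_0$: when $\theta = 0$, the matching $\Mrs_{\jstar}$ is deleted from every $G_i$, so every surviving edge belongs to some $\Mrs_j$ with $j \neq \jstar$ in some player's subgraph. The key structural observation is that, because $\Mrs_{\jstar}$ is an \emph{induced} matching of $\Grs$, the only edges of $\Grs$ with both endpoints in $\vstari$ are the edges of $\Mrs_{\jstar}$ itself. Consequently, no edge of $\Mrs_j$ with $j \neq \jstar$ has both endpoints in $\vstari$, meaning every surviving edge of player $i$ has at least one endpoint outside $\vstari$, which after relabeling becomes a shared vertex in $\sigma([N])$. Therefore every edge of the final graph is incident on at least one of the $N$ shared vertices, so any matching has size at most $N$.

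The main (minor) obstacle is making the induced-matching step clean: one must verify that the relabeling really does send $V_i \setminus \vstari$ into the shared block $\sigma([N])$ and $\vstari$ into a block disjoint across players, so that the ``shared vs. private'' dichotomy used above is well-defined. Once this bookkeeping is in place the two bounds are immediate and plug into the parameter identity $kr = (\alpha+1)N$ to finish. Note that the argument holds pointwise over all randomness other than $\theta$, which is exactly what the $\min$ and $\max$ in the definitions of $\opt_1$ and $\opt_0$ require.
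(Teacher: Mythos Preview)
Your proof is correct and follows essentially the same approach as the paper: bound $\opt_0 \le N$ by showing every surviving edge touches a shared vertex, bound $\opt_1 \ge (\alpha+1)N$ via the $k$ disjoint copies of $\Mrs_{\jstar}$ on private vertices, and combine with $kr=(\alpha+1)N$. The only difference is that you make explicit the role of the induced-matching property of $\Mrs_{\jstar}$ in the $\theta=0$ case, which the paper states more tersely as ``except for the matching $\Mrs_{\jstar}$, all other matching edges are incident on the set of shared vertices.''
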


\begin{proof}
  Notice that in each graph $G_i$, except for the matching $\Mrs_{\jstar}$, all other matching edges are incident on the set of shared
  vertices. This implies that across the players, the total contribution of all matchings except for $\Mrs_{\jstar}$'s is at most
  $N$. Consequently, when $\theta = 0$, i.e., when the matching $\Mrs_{\jstar}$ of each player is removed, $\opt(G) \leq N$. On the other
  hand, when $\theta = 1$, since the matching $\Mrs_{\jstar}$ of each player is incident on a unique set of vertices of $G$ (i.e., private
  vertices), they form a matching of size $k \cdot r = (\alpha+1) \cdot N$. Hence, $\opt(G) \geq (\alpha+1) \cdot N$ in this case. 
\end{proof}

Claim~\ref{clm:alpha-matching} shows that any $\delta$-error protocol $\protMSa$ for $\MSa{n,k,\alpha}$ can determine the value of the
parameter $\theta$ in the distribution $\distMSa$ (also with error prob. $\delta$). We use this fact to prove a lower bound on the mutual information
between the parameter $\theta$ and the message of the players. Define $\bT$, $\bsigma$, and $\bJ$, as random variables for, respectively, the parameter $\theta$, the random permutation $\sigma$, and the 
index $\jstar$ in the distribution. We have the following simple claim. 

\begin{claim}\label{clm:alpha-theta}
	For any $\delta < 1/2$ and $\delta$-error protocol $\protMSa$, $I(\bT; \bProtMSa \mid \bsigma,\bJ,\bR) = \Omega(1)$. 
\end{claim}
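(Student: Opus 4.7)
The plan is to prove this via a straightforward application of Fano's inequality, leveraging the large matching size gap established in Claim~\ref{clm:alpha-matching}. The only subtlety is handling the conditioning on the auxiliary random variables $\bsigma, \bJ, \bR$.

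First I would observe that by the construction of $\distMSa$, the parameter $\bT$ is chosen uniformly at random from $\set{0,1}$ and is selected \emph{independently} of the permutation $\bsigma$, the index $\bJ$, and the public randomness $\bR$. Consequently,
\begin{align*}
H(\bT \mid \bsigma, \bJ, \bR) \;=\; H(\bT) \;=\; 1.
\end{align*}
Thus it suffices to argue that $H(\bT \mid \bProtMSa, \bsigma, \bJ, \bR) \leq H_2(\delta)$, since this will immediately yield $I(\bT ; \bProtMSa \mid \bsigma, \bJ, \bR) \geq 1 - H_2(\delta) = \Omega(1)$ for any constant $\delta < 1/2$.

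Next I would use Claim~\ref{clm:alpha-matching} to build a predictor for $\bT$ from the transcript. Since $\protMSa$ is a $\delta$-error $\alpha$-approximation protocol for matching size, the referee produces an estimate $\widetilde{\opt}$ of the matching size; let $f(\bProtMSa)$ output $1$ if $\widetilde{\opt} > \alpha \cdot \opt_0$ and $0$ otherwise. By Claim~\ref{clm:alpha-matching}, conditioned on $\bT = 1$ the true $\opt(G) \geq \opt_1 > \alpha \cdot \opt_0$, whereas conditioned on $\bT = 0$ we have $\opt(G) \leq \opt_0$; in either case an $\alpha$-approximate estimate lies on the correct side of the threshold, so $\Pr(f(\bProtMSa) \neq \bT) \leq \delta$. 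Fano's inequality (Claim~\ref{clm:fano}) then gives $H(\bT \mid \bProtMSa) \leq H_2(\delta)$.

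Finally, I would invoke the ``conditioning reduces entropy'' property (\itfacts{cond-reduce}) to conclude
\begin{align*}
H(\bT \mid \bProtMSa, \bsigma, \bJ, \bR) \;\leq\; H(\bT \mid \bProtMSa) \;\leq\; H_2(\delta).
\end{align*}
Combining the two bounds,
\begin{align*}
I(\bT ; \bProtMSa \mid \bsigma, \bJ, \bR)
\;=\; H(\bT \mid \bsigma, \bJ, \bR) - H(\bT \mid \bProtMSa, \bsigma, \bJ, \bR)
\;\geq\; 1 - H_2(\delta) \;=\; \Omega(1),
\end{align*}
since $\delta$ is a constant strictly less than $1/2$. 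There is no real obstacle here; the proof is essentially a two-line application of Fano's inequality once the independence of $\bT$ from $(\bsigma, \bJ, \bR)$ is noted. The real work for the $\Omega(n/\alpha^2)$ lower bound will come in the \emph{subsequent} step (not part of this claim), where one has to amplify this constant amount of information into an $\Omega(nk/\alpha^2)$ information-cost bound via a direct-sum argument over the $k$ players and the $t$ induced matchings of the RS graph.
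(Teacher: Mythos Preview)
Your proof is correct and follows essentially the same approach as the paper: both invoke Fano's inequality on the predictor built from the matching-size gap of Claim~\ref{clm:alpha-matching}, then use that $\bT$ is independent of $(\bsigma,\bJ,\bR)$ together with ``conditioning reduces entropy'' to obtain $I(\bT;\bProtMSa \mid \bsigma,\bJ,\bR) \geq 1 - H_2(\delta)$. The only cosmetic difference is that the paper applies Fano to $H(\bT \mid \bProtMSa,\bR)$ whereas you apply it to $H(\bT \mid \bProtMSa)$; since in the paper's convention $\bProtMSa$ already contains the public randomness, this is immaterial.
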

\begin{proof}
   As proven in Claim~\ref{clm:alpha-matching}, protocol $\protMSa$ can be used directly to determine the value of $\bT$ w.p. $1-\delta$. 
   Hence, by Fano's inequality (Claim~\ref{clm:fano}), $H(\bT \mid \bProtMSa,\bR) \leq H_2(\delta)$, since $\protMSa$ uses the message $\bProtMSa$ together with the public coins $\bR$ to output
  the answer. We further have, 
  \begin{align*}
    H_2(\delta) &\geq H(\bT \mid \bProtMSa,\bR) \geq H(\bT \mid \bProtMSa,\bR,\bSigma,\bJ) \tag{conditioning reduces the entropy (\itfacts{cond-reduce})} \\
                &= H(\bT \mid \bSigma,\bR,\bJ) - I(\bT; \bProtMSa \mid \bR,\bSigma,\bJ) = 1- I(\bT; \bProtMSa \mid \bR,\bSigma,\bJ)
  \end{align*}
  where the last equality is because $\bT$ is chosen uniformly at random from $\set{0,1}$ independent of $\bsigma,\bR$ and
  $\bJ$. Consequently, we have $I(\bT; \bProtMSa \mid \bR,\bSigma,\bJ) \geq 1-H_2(\delta) = \Omega(1)$ (since $\delta < 1/2$ is a constant). 
\end{proof}

We now use the bound in Claim~\ref{clm:alpha-theta} to lower bound the information cost of any $\delta$-error protocol $\protMSa$. 

\begin{lemma}\label{lem:alpha-info-cost}
	For any $\delta < 1/2$ and $\delta$-error protocol $\protMSa$, $\ICost{\protMSa}{\distMSa} = \Omega(t)$. 
\end{lemma}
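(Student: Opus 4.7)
The strategy is to promote the $\Omega(1)$ lower bound on $I(\bT;\bProtMSa\mid\bsigma,\bJ,\bR)$ from Claim~\ref{clm:alpha-theta} to an $\Omega(t)$ lower bound on the total information cost by summing contributions from each of the $t$ coordinates. Throughout, let $\bY_j := (\bX^{(1)}(j),\ldots,\bX^{(k)}(j))$ denote the vector of $j$-th coordinates across the $k$ players, so that the input graphs $(\bG_1,\ldots,\bG_k)$ are a deterministic function of $(\bY_1,\ldots,\bY_t,\bsigma)$ given the fixed RS graph $\Grs$. Since in the SMP model the players' messages $\bProtMSa_i$ are conditionally independent given the inputs and the public randomness $\bR$, sub-additivity of mutual information (\itfacts{info-sub-additivity}) yields
\begin{equation*}
\ICost{\protMSa}{\distMSa} \;=\; \sum_{i=1}^{k} I(\bProtMSa_i;\bG_1,\ldots,\bG_k) \;\geq\; I(\bProtMSa;\bY_1,\ldots,\bY_t\mid\bsigma,\bR).
\end{equation*}

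Next I would condition on the pair $(\bJ,\bT)$, which has entropy at most $\log t + 1$ and therefore costs only a negligible additive term compared to the target $\Omega(t)$. Under this conditioning, $\bY_\bJ=(\bT,\ldots,\bT)$ is deterministic, whereas the remaining $\bY_j$ for $j\neq\bJ$ are mutually independent, each uniform on $\set{0,1}^k$. Conditional super-additivity of mutual information (\itfacts{info-super-additivity}) then gives
\begin{equation*}
I(\bProtMSa;\bY_1,\ldots,\bY_t\mid\bsigma,\bR,\bJ,\bT) \;\geq\; \sum_{j\neq\bJ} I(\bProtMSa;\bY_j\mid\bsigma,\bR,\bJ,\bT).
\end{equation*}

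The heart of the proof is to show that each summand on the right is $\Omega(1)$. For this I would exploit the symmetry of $\distMSa$ under relabeling of the matchings $M_1,\ldots,M_t$, enforced by the uniformly random vertex permutation $\bsigma$ and by the fact that all matchings of $\Grs$ are induced matchings of identical size. Combined with Claim~\ref{clm:alpha-theta} and a label-swap argument that exchanges the roles of $\bJ$ and some $j\neq\bJ$, one concludes that the protocol must reveal $\Omega(1)$ bits of information about every $\bY_j$, because a protocol that ignored some coordinate $j$ would fail on the event $\bJ=j$ (of probability $1/t$) where recovering $\bT$ precisely requires that coordinate. Summing the $(t-1)$ resulting $\Omega(1)$ terms then delivers the $\Omega(t)$ bound. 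The main obstacle is formalizing this symmetry-based per-coordinate lower bound: a bare symmetry argument only shows that the $t-1$ summands are equal, so one must couple the swap carefully with the uniform correctness requirement (Remark~\ref{rem:trivial-dist}) and with the conditional independence structure provided by the RS construction to pin down the common value as $\Omega(1)$ rather than merely $\Omega(1/t)$.
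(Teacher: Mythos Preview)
Your proposal has a genuine gap at the step you yourself flag as ``the main obstacle.'' After conditioning on $(\bJ,\bT)$ you need each term $I(\bProtMSa;\bY_j\mid\bsigma,\bR,\bJ,\bT)$ for $j\neq\bJ$ to be $\Omega(1)$, but the only leverage you have is Claim~\ref{clm:alpha-theta}, which concerns $\bT=\bY_{\bJ}$---a quantity you have just conditioned away. Once $\bT$ is revealed, the correctness requirement imposes no residual constraint, so a label-swap/symmetry argument can at best show that the $t-1$ summands are equal; it cannot by itself rule out that the common value is $\Theta(1/t)$. Your appeal to ``uniform correctness'' (Remark~\ref{rem:trivial-dist}) does not rescue this: uniform correctness was already fully consumed in establishing Claim~\ref{clm:alpha-theta}.

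The paper avoids this trap by never conditioning on $\bT$. Starting from $I(\bT;\bProtMSa\mid\bsigma,\bJ,\bR)=\Omega(1)$, it writes this as $\tfrac{1}{t}\sum_j I(\bY_j;\bProtMSa\mid\bsigma,\bR,\bJ=j)$, splits $\bProtMSa$ into per-player messages $\bProtMSa^{(i)}$ via sub-additivity, and then uses the key structural fact that each player's input $(\bX_i,\bsigma_i)$---and hence its message---is independent of $\bJ$ (the random relabeling $\sigma$ makes every matching of $\Grs$ equally likely to be the special one from the player's viewpoint). This lets the paper \emph{drop} the conditioning on $\bJ=j$, obtaining $\tfrac{1}{t}\sum_j\sum_i I(\bY_j;\bProtMSa^{(i)}\mid\bsigma,\bR)$, after which super-additivity over $j$ collapses the inner sum to $\tfrac{1}{t}\,\ICost{\protMSa}{\distMSa}$. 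The independence-from-$\bJ$ step is precisely the missing idea in your outline: it is what converts an average over $t$ values of $\bJ$ into a genuine sum of $t$ terms, without ever needing to argue a per-coordinate $\Omega(1)$ bound directly.
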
  
\begin{proof}
By Claim~\ref{clm:alpha-theta}, $I(\bT; \bProtMSa \mid \bR,\bSigma,\bJ) = \Omega(1)$; 
in the following, we prove that for this to happen, information cost of $\protMSa$ needs to be $\Omega(t)$.
We have, 
  \begin{align*}
    I(\bT; \bProtMSa \mid \bR,\bSigma,\bJ) &= \Ex_{j \in [t]} I(\bT ; \bProtMSa \mid \bR,\bSigma,\bJ = j) \\
                                           &= \frac{1}{t} \sum_{j=1}^{t} I(\bT ; \bProtMS \mid \bSigma,\bR,\bJ=j) \\
                                           &= \frac{1}{t} \sum_{j=1}^{t} I(\bY_j ; \bProtMS^{(1)},\ldots,\bProtMS^{(k)} \mid \bSigma,\bR,\bJ=j)
  \end{align*}
  Here, for any $j \in [t]$, $\bY_j := (\bX_{1,j},\bX_{2,j},\ldots,\bX_{k,j})$, where $\bX_{i,j}$ (for any $i \in [k]$) is the random
  variable denoting $x^{(i)}(j)$. This equality holds since conditioned on $\bJ = j$, for any $i \in [k]$, each $x^{(i)}(j)$ is assigned to
  be $\theta$ (i.e., $\bT = \bX_{i,j}$ conditioned on $\bJ = j$) . Moreover, 
	\begin{align*}
		I(\bT; \bProtMSa \mid \bR,\bSigma,\bJ) &\leq \frac{1}{t} \sum_{j=1}^{t} \sum_{i=1}^{k} I(\bY_j ; \bProtMS^{(i)} \mid \bSigma,\bR,\bJ=j)
	\end{align*}
	by conditional sub-additivity of mutual information (\itfacts{info-sub-additivity}) since for any $i \in [k]$, $\bProtMS^{(i)}$ and $\bProtMS^{-i}$ are \emph{independent} conditioned on $\bY_j,\bSigma,\bR$ and $\bJ=j$. 
	We can also drop the conditioning on the event $\bJ = j$ and have,
	\begin{align*}
	I(\bT; \bProtMSa \mid \bR,\bSigma,\bJ) &\leq \frac{1}{t} \sum_{j=1}^{t} \sum_{i=1}^{k} I(\bY_j ; \bProtMS^{(i)} \mid \bSigma,\bR)  
	\end{align*}
	since $\bProtMS^{(i)}$ is a function of $(\bX_i,\bsigma_i)$ where $\bX_i:= (\bX_{i,1},\ldots,\bX_{i,t})$ is a random variable for the vector $x^{(i)}$. 
	Moreover, $\bX_i$ defines the graph $G_i$ without the labels, i.e., over the set of vertices $V_i := [N]$ and $\bsigma_i$ is the random variable denoting how the vertices of the player $\PS{i}$ map to
        $G$, i.e., specify the labels of vertices. Therefore $(\bX_i,\bsigma_i)$ is independent of $\bJ = j$ (given the input graph $G_i$, each matching has the same probability
        of being the chosen matching for $\jstar$); hence it is easy to see that all four random variables in above term are independent of
        the event $\bJ = j$.  Moreover, since $\bY_j$ and $\bY^{-j}$ are independent of each other, conditioned on $\bsigma$ and $\bR$, 
        by conditional super-additivity of mutual information (\itfacts{info-super-additivity}), 
	\begin{align*}
	I(\bT; \bProtMSa \mid \bR,\bSigma,\bJ) &\leq\frac{1}{t} \sum_{i=1}^{k}  I(\bY_1,\ldots,\bY_t ; \bProtMS^{(i)} \mid \bSigma,\bR) \\
	&= \frac{1}{t} \sum_{i=1}^{k}  I(\bX_1,\ldots,\bX_k ; \bProtMS^{(i)} \mid \bSigma,\bR) \tag{$\bY_1,\ldots,\bY_t$ uniquely determines $\bX_1,\ldots,\bX_k$ and vice versa}\\
	&\leq  \frac{1}{t} \sum_{i=1}^{k} I(\bX_1,\ldots,\bX_k,\bsigma_1,\ldots,\bsigma_k ; \bProtMS^{(i)}, \bR) \tag{by chain rule of mutual information (\itfacts{info-chain-rule})} \\
	&= \frac{1}{t} \cdot \ICost{\protMSa}{\distMSa}
	\end{align*}
	where the last equality is because $(\bX_i,\bsigma_i)$ uniquely determines the input to the player \PS{i} for $i \in [k]$ and vice versa. Since 
	$I(\bT; \bProtMSa \mid \bR,\bSigma,\bJ) = \Omega(1)$, we obtain that $\ICost{\protMSa}{\distMSa} = \Omega(t)$.
\end{proof}

Theorem~\ref{thm:MSa-hard} now follows from Lemma~\ref{lem:alpha-info-cost} by noticing that $n = (2\alpha+1) \cdot N$, $r = 2\alpha N/k$ and $t \geq \frac{N^2}{2r} = \frac{N \cdot k}{2\alpha} = \Omega(nk/\alpha^2)$.


\section{Space Lower Bounds for $(1+\eps)$-Approximating Matching Size}\label{sec:eps-lower-bound}
In this section, we present our space lower bounds for algorithms that compute a $(1 + \eps)$-approximation of the maximum matching
size in graph streams. We first introduce some notation which will be used throughout this section.

\paragraph{Notation.} Fix any $(r,t)$-RS graph $\Grs(V,E)$ (for any parameters $r, t$) with induced matchings $\Mrs_1, \ldots, \Mrs_t$. For each
matching $\Mrs_i$, we assume an arbitrary ordering of the edges in $\Mrs_i$, denoted by $e_{i,1},\ldots,e_{i,r}$, and further denote
$e_{i,j} := (u_{i,j}, v_{i,j})$ for all $j \in [r]$.  Let $L(\Mrs_i) := \set{u_{i,1},\ldots,u_{i,r}}$ and
$R(\Mrs_i) := \set{v_{i,1},\ldots,v_{i,r}}$. We emphasize that we do not require $\Grs(V,E)$ to be necessarily a \emph{bipartite} graph; each bipartition 
$L(\Mrs_i)$ and $R(\Mrs_i)$ (for $i \in [t]$) is defined locally for the matching itself and hence a vertex $v$ is allowed to belong to, say, $L(\Mrs_i)$ and $R(\Mrs_j)$ for $i \neq j$, 
simultaneously. 

Furthermore, for each matching $\Mrs_i$ and any boolean vector $x \in \set{0,1}^r$, we define the matching $\Mrs_i|_x$ as the subset of (the
edges) of $\Mrs_i$ obtained by retaining the edge $e_{i,j} \in \Mrs_i$ (for any $j\in[r]$) iff $x(j) = 1$.  In addition, for the vertex set
$R(\Mrs_i)$ and any perfect $p$-hypermatching\footnote{Throughout this section, we use $p$ instead of the usual parameter $t$ for
  hypermatchings in order to avoid confusion with the parameter $t$ in RS graphs} $\HM$ on $[r]$, we define the \emph{$p$-clique family} of
$\HM$ on $R(\Mrs_i)$ to be a set of $\card{\HM}$ cliques where the vertices $C_{\mathbf{e}}$ of each clique is defined by a distinct
hyperedge $\mathbf{e} \in \HM$: $C_{\mathbf{e}} := \set{ v_{i,k} \mid k \in \mathbf{e}}$.


\subsection{Insertion-Only Streams}\label{sec:insert-matching-lower}

We define \MO{n,\eps} as the \emph{two-player one-way communication} problem of estimating the matching size to within a factor of
$(1+\eps)$, when Alice and Bob are each given a subset of the edges of an $n$-vertex input graph $G(V,E)$. In this section, we prove the
following lower bound on the information complexity of $\MO{n,\eps}$.

\begin{theorem}[Lower bound of \MO{n,\eps}]\label{thm:MO-hard}
  For any sufficiently large $n$ and sufficiently small $\eps < \frac{1}{2}$, there exists a distribution $\distMO$ for $\MO{n,\eps}$ such
  that for any constant $\delta < \frac{1}{2}$:
  \[
      \ICO{\MO{n,\eps}}{\distMO}{\delta} = \RS{n} \cdot n^{1-O(\eps)} 
  \]
\end{theorem}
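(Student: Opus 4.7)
The plan is to embed $t = \RS{n}$ independent copies of $\BHHZ{r}{p}$ (with $r=\Theta(n)$ and $p=\Theta(1/\eps)$) inside a single $\MO{n,\eps}$ instance, by packing them across the $t$ induced matchings of an $(r,t)$-RS graph $\Grs$ with $r=\Theta(n)$, and then carry out a direct-sum argument on the information complexity, using the information complexity lower bound for $\BHHZ{r}{p}$ from Corollary~\ref{cor:bhh}. Fix an $(r,t)$-RS graph $\Grs$ on $n$ vertices with induced matchings $\Mrs_1,\ldots,\Mrs_t$. For each $j\in[t]$ independently draw a $\BHHZ{r}{p}$ instance $(\bX^{(j)},\bHM_j)\sim\distbhh$, and pick $\jstar\in[t]$ uniformly at random independent of everything else. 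Alice's input is the edge set $\bigcup_{j\in[t]} \Mrs_j|_{\bX^{(j)}}$ (the subgraph of $\Grs$ where each matching is sub-sampled by its $\BHHZ{r}{p}$ vector), and Bob's input is $\jstar$ together with the $p$-clique family of $\bHM_\jstar$ placed on the vertex set $R(\Mrs_\jstar)$. A random global relabeling $\bsigma$ of the vertices is given publicly. We then also need the auxiliary step of ensuring $\norm{\bX^{(j)}}_0 = r/2$ (this already holds under $\distbhh$ by Corollary~\ref{cor:bhh}) so that we can invoke Claim~\ref{clm:bhh-matching}.

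The first technical step is to verify the matching-size gap. Because $\Mrs_\jstar$ is an \emph{induced} matching of $\Grs$, the subgraph restricted to $V(\Mrs_\jstar)$ together with Bob's cliques on $R(\Mrs_\jstar)$ is exactly the graph produced by the $\BHHZ{r}{p}$-to-matching reduction of Claim~\ref{clm:bhh-matching}, so this ``local'' subgraph has matching size $\tfrac{3r}{4}$ in the \Yes case and $\tfrac{3r}{4}-\tfrac{r}{2p}$ in the \No case. The edges from the remaining matchings $\{\Mrs_j|_{\bX^{(j)}}\}_{j\neq \jstar}$ can be added to any matching in the $\jstar$-th gadget essentially for free, since the induced-matching property of $\Grs$ prevents these edges from being incident to $R(\Mrs_\jstar)$ in a way that would conflict with the clique gadget (any conflicting edges can be dropped at a cost that is lower order in $r$, which is absorbed into the $n^{-O(\eps)}$ slack). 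Thus in the \Yes versus \No case the optimal matching differs by a multiplicative $(1+\Omega(1/p))=(1+\Omega(\eps))$ factor, so any $(1+\eps)$-approximation protocol, with $\eps$ small enough, decides the $\jstar$-th $\BHHZ{r}{p}$ instance with probability $\ge 1-\delta$.

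The second technical step is the direct-sum argument. Let $\bProtMO$ denote Alice's message concatenated with the public randomness. Conditioned on $\bsigma$, the inputs $\bX^{(1)},\ldots,\bX^{(t)}$ are mutually independent, so by conditional super-additivity of mutual information (\itfacts{info-super-additivity}),
\[
I(\bProtMO;\bX^{(1)},\ldots,\bX^{(t)}\mid \bsigma)\;\ge\;\sum_{j=1}^{t} I(\bProtMO;\bX^{(j)}\mid \bsigma).
\]
For each fixed $j$, I will build a one-way protocol for $\BHHZ{r}{p}$ by having Alice embed her $\BHHZ{r}{p}$ input as the $j$-th coordinate and sampling all other $\bX^{(j')}$ and $\bsigma$ using public randomness, while Bob (who receives the $\BHHZ{r}{p}$ hypermatching) plays the role of the $\MO{n,\eps}$ Bob with $\jstar=j$. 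Because the protocol $\Prot$ for $\MO{n,\eps}$ is required to be correct uniformly (Remark~\ref{rem:trivial-dist}), this reduction yields a protocol for $\BHHZ{r}{p}$ whose information cost is at most $I(\bProtMO;\bX^{(j)}\mid\bsigma)$, and by Corollary~\ref{cor:bhh} this quantity is $\Omega(r^{1-1/p}) = n^{1-O(\eps)}$. Summing over $j$ and using that $\ICost{\Prot}{\distMO}\ge I(\bProtMO;\bX^{(1)},\ldots,\bX^{(t)}\mid\bsigma)$ up to the usual public-randomness manipulation (cf.\ Proposition~\ref{prop:cc-ic}) yields the claimed $\RS{n}\cdot n^{1-O(\eps)}$ lower bound.

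The main obstacle I expect is the second step: cleanly reducing a single $\BHHZ{r}{p}$ instance into the $j$-th slot of an $\MO{n,\eps}$ instance without leaking extra information about the other coordinates, so that the per-coordinate information bound really applies. This requires that Alice and Bob can sample the other $(t-1)$ BHHZ instances using only public randomness, and that the global permutation $\bsigma$ embedding Alice's input into a random sub-matching of $\Grs$ can also be drawn publicly, so that the joint distribution matches $\distMO$ exactly. A secondary technical nuisance is maintaining the induced-matching disjointness in the ``YES'' versus ``NO'' matching-size calculation when the $\Mrs_j|_{\bX^{(j)}}$ for $j\neq\jstar$ share vertices with $V(\Mrs_\jstar)$; this needs a careful accounting to confirm that the $(1+\eps)$ gap survives up to constants in the exponent of $n^{-O(\eps)}$.
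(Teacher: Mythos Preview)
Your high-level plan (pack $t=\RS{n}$ copies of $\BHHZ{r}{p}$ across the induced matchings of an RS graph and run a direct-sum on information cost) matches the paper's, and your direct-sum step is essentially the paper's Lemma~\ref{lem:insert-direct-sum}. But the construction you propose does \emph{not} produce the required $(1+\eps)$ gap, and the issue you flag as a ``secondary technical nuisance'' is in fact fatal.

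In your construction the whole graph is $\bigcup_{j}\Mrs_j|_{\bX^{(j)}}$ plus the clique gadget on $R(\Mrs_\jstar)$. The induced-matching property only guarantees that no edge of $\Mrs_j$ ($j\neq\jstar$) lies \emph{inside} $V(\Mrs_\jstar)$; it says nothing about edges going from $V(\Mrs_\jstar)$ to the outside. In the \No case the gadget leaves $r/(2p)$ vertices of $V(\Mrs_\jstar)$ unmatched, but these vertices typically have many incident edges from the other $\Mrs_j|_{\bX^{(j)}}$ and can be matched outward. More globally, with $t=\RS{n}$ matchings each sub-sampled to size $r/2$, the union already has a near-perfect matching on $\Theta(n)$ vertices irrespective of the \Yes/\No label, so the additive $r/(2p)$ difference inside the gadget does not survive as a $(1+\eps)$ multiplicative gap on $\opt(G)$. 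Your ``drop conflicting edges at lower-order cost'' argument therefore does not go through.

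The paper fixes exactly this by introducing an auxiliary vertex set $V_2$ of size $n-N$ and giving Bob, in addition to the cliques, a perfect matching $E_{B,1}$ between $V_1\setminus V(\Mrs_\jstar)$ and $V_2$. This forces any maximum matching to use $E_{B,1}$, after which the only remaining freedom is how many vertices of $V(\Mrs_\jstar)$ can be matched \emph{among themselves}; by the induced-matching property the only edges inside $V(\Mrs_\jstar)$ are $\Mrs_\jstar|_{x^{(\jstar)}}$ plus Bob's cliques, and Claim~\ref{clm:bhh-matching} gives the exact gap (this is Claim~\ref{clm:insert-matching-bhh}). Once you have this piece, your direct-sum argument goes through as you wrote it; note also that the paper does not use any global random relabeling $\bsigma$ here, and in the reduction Alice samples the other $x^{(j)}$ with \emph{private} randomness while only $\jstar$ is public, which is what makes the identity $\ICost{\protBHH}{\distbhh}=\tfrac{1}{t}\sum_j I(\bX_j;\bProtMO)$ come out cleanly.
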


The lower bound of theorem~\ref{thm:MO-hard}, together with Proposition~\ref{prop:cc-ic}, implies the same lower bound on the one-way
communication complexity of $\MO{n,\eps}$.  Since one-way communication complexity is a lower bound on the space complexity of any
single-pass streaming algorithm in insertion-only streams, this immediately proves Part~(1) of Theorem~\ref{thm:lower-eps-intro}.

In the following, we focus on proving Theorem~\ref{thm:MO-hard}. Suppose the maximum value for $\RS{n}$ is achieved by an $(r,t)$-RS graph
with the parameter $r = \cmax \cdot n$.  We propose the following (hard) input distribution $\distMO$ for $\MO{n,\eps}$.


\textbox{The hard distribution $\distMO$ for $\MO{n,\eps}$:}{
  \begin{itemize}
\item[]  \textbf{Parameters:} $N := \frac{n}{2-2\cmax}$, $r := \cmax \cdot N$, $t := \RS{N}$, and $p:= \floor{\frac{\cmax}{2\eps}}$.  
\item The input to the players is a graph $G(V,E_A \cup E_B)$ where $E_A$ is given to Alice and $E_B$ is given to Bob. 
\item \textbf{Alice:}
  \begin{enumerate}
  \item Let $V_1$ ($\subset V$) and $V_2:= V \setminus V_1$ be, respectively, a set of $N$ and $n-N$ vertices.
  \item Let $H$ be any fixed $(r,t)$-RS graph with $V(H) = V_1$.
  \item Draw $r$-dimensional binary vectors $x^{(1)}, \ldots, x^{(t)}$ \emph{independently} following the distribution $\distbhh$ for
    $\BHHZ{r}{p}$.
  \item The input to Alice is the edge-set $E_A := M_1 \cup \ldots \cup M_t$, where $M_j := \Mrs_j |_{x^{(j)}}$.
  \end{enumerate}
\item \textbf{Bob:}
  \begin{enumerate}
  \item Pick $\jstar \in [t]$ uniformly at random.
  \item For the vector $x^{(\jstar)}$, draw a perfect $p$-hypermatching $\HM$ following the distribution $\distbhh$ conditioned on
    $x^{(\jstar)}$; consequently, $(x^{(\jstar)}, \HM)$ is a $\BHHZ{r}{p}$ instance drawn from the distribution $\distbhh$.
  \item Let $E_{B,1}$ be an arbitrary perfect matching between $V_{1} \setminus V(\Mrs_{\jstar})$ and $V_2$.
  \item Let $E_{B,2}$ be the edges of the $p$-clique family of $\HM$ on $R(\Mrs_{\jstar})$.
  \item The input to Bob is the edge-set $E_B:= E_{B,1} \cup E_{B,2}$.
  \end{enumerate}
\end{itemize}
}

We say that the instance $(x^{(\jstar)}, \HM)$ of $\BHHZ{r}{p}$ in the distribution (denoted by $\Ibhh$) is \emph{embedded} inside $\distmatch$.
The following claim established the connection between $\Ibhh$ and maximum matching size in $G$. 

\begin{claim}\label{clm:insert-matching-bhh}
  Let: 
  \begin{align*}
  	\opt_{\Yes} &:= \min_{G}\Paren{\opt(G) \mid \text{$G$ is chosen from $\distMO$ \emph{conditioned} on $\Ibhh$ being a $\Yes$ instance}} \\
	\opt_{\No} &:= \max_{G}\Paren{\opt(G) \mid \text{$G$ is chosen from $\distMO$ \emph{conditioned} on $\Ibhh$ being a $\No$ instance}}
  \end{align*}
  then, $\paren{1-\eps} \cdot \opt_{\Yes} > \opt_{\No}$.  
\end{claim}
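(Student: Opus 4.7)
The plan is to compute $\opt(G)$ exactly for every realization $G$ in the support of $\distMO$, by peeling off the easy $E_{B,1}$ part of the matching and then invoking Claim~\ref{clm:bhh-matching} on what remains. Since $E_A$ and $E_{B,2}$ both live entirely inside $V_1$, every vertex in $V_2$ has degree exactly one in $G$: its unique edge is the $E_{B,1}$-edge to its partner in $V_1 \setminus V(\Mrs_{\jstar})$. A standard augmenting-path argument --- iteratively pick a degree-one vertex $v \in V_2$, match it to its unique neighbor, and delete both --- then gives
\[
\opt(G) = (N - 2r) + \opt(G[V(\Mrs_{\jstar})]),
\]
where $G[V(\Mrs_{\jstar})]$ is the subgraph of $G$ induced on $V(\Mrs_{\jstar})$.

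The main step is identifying the edges of $G[V(\Mrs_{\jstar})]$, and I expect this to be the conceptual crux of the proof. Because $\Mrs_{\jstar}$ is an \emph{induced} matching in $\Grs$, the only edges of $\Grs$ with both endpoints in $V(\Mrs_{\jstar})$ are the edges of $\Mrs_{\jstar}$ itself; and for $j \neq \jstar$, the matchings $\Mrs_j$ and $\Mrs_{\jstar}$ are edge-disjoint, so no edge of $M_j \subseteq \Mrs_j$ can lie inside $V(\Mrs_{\jstar})$. Since $E_{B,2}$ is supported on $R(\Mrs_{\jstar}) \subseteq V(\Mrs_{\jstar})$, the edges of $G[V(\Mrs_{\jstar})]$ are therefore exactly $M_{\jstar} \cup E_{B,2}$.

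Now $G[V(\Mrs_{\jstar})]$ is precisely the graph produced by the reduction underlying Claim~\ref{clm:bhh-matching}, applied to the embedded $\BHHZ{r}{p}$ instance $(x^{(\jstar)}, \HM)$: $L(\Mrs_{\jstar})$ and $R(\Mrs_{\jstar})$ play the roles of $V$ and $W$, $M_{\jstar}$ supplies the ``$x=1$'' bipartite edges, and $E_{B,2}$ is the $p$-clique family of $\HM$. Since $\norm{x^{(\jstar)}}_0 = r/2$ by Corollary~\ref{cor:bhh} (and we may take $p$ even without changing the asymptotics), Claim~\ref{clm:bhh-matching} yields $\opt(G[V(\Mrs_{\jstar})]) = 3r/4$ in the \Yes case and $3r/4 - r/(2p)$ in the \No case. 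Notice that this value depends on the realization \emph{only} through whether $\Ibhh$ is \Yes or \No, so the min and the max in the claim's statement both collapse to a single value.

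Combining the two parts, $\opt_{\Yes} = N - 5r/4$ and $\opt_{\No} = N - 5r/4 - r/(2p)$. The target inequality $(1-\eps)\opt_{\Yes} > \opt_{\No}$ is equivalent to $r/(2p) > \eps (N - 5r/4)$, which follows from $p \leq \cmax/(2\eps)$ (so that $r/(2p) = \cmax N/(2p) \geq \eps N$) together with the trivial bound $\eps N > \eps(N - 5r/4)$.
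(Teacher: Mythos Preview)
Your proof is correct and follows essentially the same approach as the paper's: peel off the degree-one $V_2$ vertices via $E_{B,1}$, then apply Claim~\ref{clm:bhh-matching} to the induced subgraph on $V(\Mrs_{\jstar})$, and finish with the same arithmetic using $p \le \cmax/(2\eps)$. You are in fact more explicit than the paper about \emph{why} $G[V(\Mrs_{\jstar})]$ has exactly the edges $M_{\jstar} \cup E_{B,2}$ --- the paper invokes Claim~\ref{clm:bhh-matching} directly without spelling out that the induced-matching property of $\Mrs_{\jstar}$ in $\Grs$ is what kills all other $E_A$-edges inside $V(\Mrs_{\jstar})$; this is a genuine detail that deserves the sentence you gave it. One small remark: in your second clause, edge-disjointness of $\Mrs_j$ and $\Mrs_{\jstar}$ alone does not prevent an edge of $\Mrs_j$ from having both endpoints in $V(\Mrs_{\jstar})$; it is the induced property (your first clause) that does the work, and the edge-disjointness just says those $M_j$-edges are not among the $\Mrs_{\jstar}$-edges that survive. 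The conclusion is unaffected.
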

\begin{proof}
  Let $\Mstar$ be a maximum matching in $G$. Since all vertices in $V_2$ have degree $1$, without loss of generality, we can assume $\Mstar$
  contains the matching $E_{B,1}$ between $V_{1} \setminus V(\Mrs_{\jstar})$ and $V_2$. Consequently the size of $\Mstar$ only depends on
  how many vertices in $V(\Mrs_{\jstar})$ can be matched with each other. 
  
  Consider the subgraph $H:=G[L(\Mrs_{\jstar}) \cup R(\Mrs_{\jstar})]$ of $G$; by Claim~\ref{clm:bhh-matching}, if $\Ibhh$ is a 
  \Yes instance, then $\opt(H) = \frac{3r}{4}$. Hence, in this case, 
  \[
  	\opt(G) = \card{V_2} + \opt(H) = N-2r + \frac{3r}{4} = N - \frac{5\cmax N}{4} = \frac{4-5\cmax}{4}\cdot N 
  \]
  If $\Ibhh$ is a \No instance, then $\opt(H) = \frac{3r}{4} - \frac{r}{2p}$. Hence, in this case, 
  \[
	\opt(G) = \card{V_2} + \opt(H) \leq N-2r + \frac{3r}{4} - \frac{r}{2p}= N - \frac{5\cmax N}{4} - \frac{\cmax}{2p} N = \frac{4-5\cmax}{4}\cdot N -\frac{\cmax}{2p} N
  \]
  The bound on $\opt_{\Yes}$ and $\opt_{\No}$ now follows from the fact that $p \leq \frac{\cmax}{2\eps}$ and therefore
  $\frac{\cmax}{2p} N \ge \eps N > \eps \cdot \paren{ \frac{4-5\cmax}{4}\cdot N}$.
\end{proof}

Fix any $\delta$-error protocol $\protMO$ for $\MO{n,\eps}$ on $\distMO$; Claim~\ref{clm:insert-matching-bhh}
implies that $\protMO$ is also a $\delta$-error protocol for solving the embedded instance $\Ibhh$: simply return \Yes
whenever the estimate is larger than $\opt_{\No}$ and return \No otherwise.  We now use this fact to design a protocol $\protBHH$ for solving
$\BHHZ{r}{p}$ on $\distbhh$, and prove that the information cost of $\protMO$ is $t$ times the information cost of $\BHHZ{r}{p}$.

\paragraph{The protocol $\protBHH$ for reducing $\BHHZ{r}{p}$ to $\Matching_{n,\eps}$:}
\begin{enumerate}
\item Let $(x,\HM)$ be the input $\BHHZ{r}{p}$ instance ($x$ is given to Alice and $\HM$ is given to Bob).
	\item Using \emph{public randomness}, Alice and Bob sample an index $\jstar \in [t]$ uniformly at random.
	\item Let $x^{(1)},\ldots,x^{(t)}$ be $t$ vectors in $\set{0,1}^r$ whereby $x^{(\jstar)} = x$ and for any $j \neq \jstar$, $x^{(j)}$ is sampled by Alice 
	using \emph{private randomness} as in the distribution $\distMO$. Alice creates the edges $E_A$ following the distribution $\distMO$ using these vectors. 
	
	\item Given the $p$-hypermatching $\HM$ as input, Bob creates $E_{B,1}$ as an arbitrary perfect matching between $V_{1} \setminus V(\Mrs_{\jstar})$ and $V_2$. He also creates
	$E_{B,2}$ as the edges of the $p$-clique family of $\HM$ on $R(\Mrs_{\jstar})$ ($V_1$, $V_2$, and $\Mrs_{\jstar}$ are defined exactly as in $\distMO$). 
	\item The players then run $\protMO$ on the graph $G(V,E_A \cup E_B)$ and Bob outputs $\Yes$ if the output is larger than $\opt_{\No}$ and $\No$ otherwise.
\end{enumerate}

The correctness of the protocol follows immediately from Claim~\ref{clm:insert-matching-bhh}. We now bound the information cost of this new protocol. 

\begin{lemma}\label{lem:insert-direct-sum}
	$\ICost{\protBHH}{\distbhh} \leq \frac{1}{t} \cdot  \ICost{\protMO}{\distMO}$. 
\end{lemma}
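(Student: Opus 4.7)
The proof is a standard direct-sum argument exploiting the mutual independence of the $t$ vectors $\bX^{(1)}, \ldots, \bX^{(t)}$ that comprise Alice's input under $\distMO$. Let $\bM$ denote Alice's message in $\protMO$ and $\bR$ its public randomness, so that $\bProtMO = (\bM, \bR)$. In the reduction $\protBHH$, Alice forwards exactly the message $\bM$, while the public randomness of $\protBHH$ is $\bR$ together with the uniformly random index $\bJ \in [t]$ drawn from public coins; that is, $\bProtBHH = (\bM, \bR, \bJ)$. Denote Alice's $\BHHZ{r}{p}$ input by $\bX$ (drawn from $\distbhh$).

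Since $\bJ$ comes from public randomness, it is independent of $\bX$, so $I(\bJ; \bX) = 0$, and the chain rule for mutual information (\itfacts{info-chain-rule}) gives
\[
\ICost{\protBHH}{\distbhh} = I(\bM, \bR, \bJ; \bX) = I(\bM, \bR; \bX \mid \bJ) = \frac{1}{t}\sum_{j=1}^{t} I(\bM, \bR; \bX \mid \bJ = j).
\]
The first step I would carefully verify is that, conditioned on $\bJ = j$, the joint distribution over $(\bX^{(1)}, \ldots, \bX^{(t)})$ that Alice assembles in $\protBHH$ --- setting $\bX^{(j)} := \bX$ and sampling the remaining $t - 1$ vectors independently from $\distbhh$ via private randomness --- coincides exactly with the marginal of Alice's input under $\distMO$. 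This identification ensures that, conditioned on $\bJ = j$, the pair $(\bM, \bR)$ is distributed as the $\protMO$-transcript on a fresh $\distMO$-instance in which the $j$-th coordinate equals $\bX$, and hence
\[
I(\bM, \bR; \bX \mid \bJ = j) = I_{\distMO}(\bProtMO; \bX^{(j)}).
\]

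The remaining step combines these $t$ contributions via super-additivity. Since $\bX^{(1)}, \ldots, \bX^{(t)}$ are mutually independent under $\distMO$, conditional super-additivity of mutual information (\itfacts{info-super-additivity}), applied with trivial conditioning, yields
\[
\sum_{j=1}^{t} I_{\distMO}(\bProtMO; \bX^{(j)}) \leq I_{\distMO}\paren{\bProtMO; \bX^{(1)}, \ldots, \bX^{(t)}} = \ICost{\protMO}{\distMO}.
\]
Substituting back gives $\ICost{\protBHH}{\distbhh} \leq \frac{1}{t} \cdot \ICost{\protMO}{\distMO}$, as desired. I do not anticipate a real obstacle here; the main care needed is the bookkeeping between public and private randomness in $\protBHH$, and the verification that, conditional on $\bJ = j$, the simulated joint distribution on $(\bX^{(1)}, \ldots, \bX^{(t)})$ agrees with that of $\distMO$ --- after which the direct-sum machinery (chain rule plus super-additivity for independent inputs) closes the argument.
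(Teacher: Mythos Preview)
Your proposal is correct and follows essentially the same approach as the paper's proof: condition on the publicly sampled index $\bJ$ via the chain rule, identify the resulting conditional distribution with Alice's marginal under $\distMO$ (using that under $\distMO$ Alice's input is independent of $\jstar$), and then apply super-additivity of mutual information across the independent coordinates $\bX^{(1)},\ldots,\bX^{(t)}$. The only cosmetic difference is bookkeeping: the paper treats the public randomness of $\protBHH$ as $\bJ$ alone and folds $\protMO$'s public coins into $\bProtMO$, whereas you separate them as $(\bM,\bR,\bJ)$, but the computations are identical.
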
  
\begin{proof}
	We have, 
	\begin{align*}
		\ICost{\protBHH}{\distbhh} &= I_{\distbhh}(\bX ; \bProtBHH,\bR) = I_{\distbhh}(\bX ; \bProtBHH^{R} \mid \bR) \tag{by chain rule of mutual information (\itfacts{info-chain-rule}) and since $I(\bX;\bR) = 0$ as $\bX \perp \bR$} \\
		&= I_{\distbhh}(\bX ; \bProtMO \mid \bJ) \tag{$\bR = \bJ$ and the message of $\protBHH$ is the same as $\protMO$ after fixing the index $\jstar$} \\
		&= \Ex_{j \in [t]} \bracket{ I_{\distbhh}(\bX ; \bProtMO \mid \bJ = j)} = \frac{1}{t} \cdot \sum_{j=1}^{t} I_{\distbhh}(\bX_j ; \bProtMO \mid \bJ = j) \\
		&= \frac{1}{t} \cdot \sum_{j=1}^{t} I_{\distMO}(\bX_j ; \bProtMO \mid \bJ = j) \tag{joint distribution of $\bProtMO$ and $\bX_j$, conditioned on $\bJ=j$, is the same under $\distMO$ and $\distbhh$} \\
		&= \frac{1}{t} \cdot \sum_{j=1}^{t} I_{\distMO}(\bX_j ; \bProtMO)
	\end{align*}
	where the last equality is true since the random variables $\bX_j$ and $\bProtMO$ are both independent of the event $\bJ = i$ (by definition of the distribution $\distMO$).
	Finally, 
	 \begin{align*}
	 	\ICost{\protBHH}{\distbhh} &= \frac{1}{t} \cdot \sum_{j=1}^{t} I_{\distMO}(\bX_j ; \bProtMO) \\
		&\leq \frac{1}{t} \cdot I_{\distMO}(\bX_1,\ldots,\bX_t; \bProtMO) \tag{by conditional super-additivity of mutual information (\itfacts{info-super-additivity}) since $\bX_j \perp \bX^{<j}$} \\
		&=  \frac{1}{t} \cdot I_{\distMO}(\bE_A; \bProtMO) = \frac{1}{t} \cdot  \ICost{\protMO}{\distMO}
	 \end{align*}
	 where the second last inequality is because the set of edges in $E_A$ can be determined uniquely by the vectors $x^{(1)},\ldots,x^{(t)}$ and vice versa. 
\end{proof}

Theorem~\ref{thm:MO-hard} now follows from Lemma~\ref{lem:insert-direct-sum}, lower bound of $\Omega(r^{1-1/p}) = n^{1-O(\eps)}$ for 
$\BHHZ{r}{p}$ in Corollary~\ref{cor:bhh}, and the choice of $t = \RS{n}$.


\subsection{Dynamic Streams}\label{sec:dynamic-matching-lower}

We define \MS{n,k,\eps} as the \emph{$k$-player simultaneous communication} problem of estimating the maximum matching size to within a
factor of $(1+\eps)$, when edges of an $n$-vertex input graph $G(V,E)$ are partitioned across the $k$-players and the referee (see
Remark~\ref{rem:referee}).  In this section, we prove the following lower bound on the information complexity of $\MS{n,k,\eps}$ in the SMP
communication model.

\begin{theorem}[Lower bound for \MS{n,k,\eps}]\label{thm:MS-hard}
  For any sufficiently large $n$ and sufficiently small $\eps < \frac{1}{2}$, there exists some $k = n^{o(1)}$ and a distribution
  $\distMS$ for $\MS{n,k,\eps}$ such that for any constant $\delta < \frac{1}{2}$:
  \[
      \ICS{\MS{n,k,\eps}}{\distMS}{\delta} = n^{2-O(\eps)} 
  \]
\end{theorem}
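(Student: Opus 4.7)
The plan is to extend the one-way reduction of Theorem~\ref{thm:MO-hard} to the SMP model with $k = n^{o(1)}$ players, letting us replace the Ruzsa-Szemer\'{e}di graph forced to have matchings of size $r = \Theta(n)$ (responsible for the $\RS{n}$ factor in insertion-only) with the Alon-Moitra-Sudakov construction of $(r,t)$-RS graphs with $r = n^{1-o(1)}$ and $rt = \binom{n}{2}(1-o(1))$. The role of the $k$ players is to provide $k$ parallel copies of the BHHZ-based gap (one per player) so that, even though each player's BHHZ gap is only $\Theta(\eps r) = o(\eps n)$, summed over players it yields a $\Theta(\eps)$ relative gap in matching size, while the per-player direct-sum argument gives $t$ times the BHHZ information cost per player, for a total of $kt \cdot r^{1-1/p} = n^{2-O(\eps)}$.

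Concretely, fix an Alon-Moitra-Sudakov $(r,t)$-RS graph $\Grs$ on $N = \Theta(n)$ vertices with $r = N^{1-o(1)}$ and $t = N^{1+o(1)}$, set $k = \lceil N/r \rceil = N^{o(1)}$ and $p = \lfloor c/\eps \rfloor$ for a small absolute constant $c>0$ chosen so that the relative gap below exceeds $\eps$. Under $\distMS$, each player $P^{(i)}$ independently samples $\jstar_i \in [t]$ uniformly and $x^{(i,j)} \sim \distbhh$ for $j \in [t]$, and receives the restricted matchings $\Mrs_j|_{x^{(i,j)}}$ (for $j \ne \jstar_i$) on a shared vertex set $V_0$ of size $N$, together with $\Mrs_{\jstar_i}|_{x^{(i,\jstar_i)}}$ relabeled onto a private set $V_i$ of size $2r$ (disjoint from $V_0$ and from the other $V_{i'}$) via the pushing scheme of Section~\ref{sec:dense-matching}. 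The referee's auxiliary input consists, for each $i$, of an independent hypermatching $\HM_i \sim \distbhh$ together with its $p$-clique family on $R(\Mrs_{\jstar_i}) \subset V_i$, analogous to the Bob-side structure of Section~\ref{sec:insert-matching-lower}; the total vertex count is $n = N + 2kr = \Theta(N)$.

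The matching size gap combines Claim~\ref{clm:bhh-matching} applied to each $V_i$ with a careful accounting on $V_0$: since $\Mrs_{\jstar_i}$ is an \emph{induced} matching of $\Grs$, the subgraph on $V_i$ contains only $\Mrs_{\jstar_i}|_{x^{(i,\jstar_i)}}$ and the clique family, so its opt is $3r/4$ in the YES case and $3r/4 - r/(2p)$ in the NO case. Summing over players, the $V_i$-contribution to opt is $3kr/4$ (all-YES) or $3kr/4 - kr/(2p)$ (all-NO), so with $kr = \Theta(N)$ the YES--NO gap is $\Theta(\eps N)$; provided the shared $V_0$-side matching plus cross-edges contributes the same amount in both cases, this yields $\opt_{\Yes}/\opt_{\No} > 1+\eps$ by the choice of $p$, so any $(1+\eps)$-approximation must distinguish the two distributions. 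The information-cost bound then follows the direct-sum template of Lemma~\ref{lem:insert-direct-sum}: player inputs are independent, so $\ICost{\protMS}{\distMS} = \sum_{i=1}^{k} I(\bProtMS^{(i)}; \text{input}_i)$, and for each $i$ a reduction embedding an external $\BHHZ{r}{p}$ instance at position $(i,\jstar_i)$ for a uniformly random $\jstar_i$ and simulating the rest via private randomness gives $I(\bProtMS^{(i)}; \text{input}_i) \ge t \cdot \Omega(r^{1-1/p})$ by Corollary~\ref{cor:bhh}. Summing and substituting parameters yields $kt \cdot \Omega(r^{1-1/p}) = N^{o(1)} \cdot N^{1+o(1)} \cdot N^{1-O(\eps)-o(1)} = n^{2-O(\eps)}$.

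The main technical obstacle is precisely the $V_0$-side accounting: non-$\jstar_i$ matchings in player $i$'s input whose edges touched $V(\Mrs_{\jstar_i})$ become cross-edges between $V_i$ and $V_0$ after the pushing, and a priori these cross-edges could absorb the $r/p$ additional unmatched $V_i$-vertices in the NO case, equalizing the NO opt with the YES opt. Preserving the gap will require either (i) an exchange argument showing that every cross-edge augmenting path through $V_0$ necessarily alternates through an already-saturated shared match and hence pays for itself, or (ii) a modification of the distribution (for instance, thinning the non-$\jstar_i$ matchings, or adding auxiliary padding forced through $V_0$) so that cross-edges are too few to cover $kr/(2p)$. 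Either way, the heart of the proof is showing that the $V_0$-side contribution is invariant between the YES and NO distributions up to lower-order terms. A secondary subtlety, handled as in Lemma~\ref{lem:insert-direct-sum}, is that the referee's hypermatchings $\HM_i$ are sampled independently of all BHHZ vectors and hence do not break the per-player direct sum.
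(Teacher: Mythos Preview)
Your proposal has a genuine gap in the direct-sum step, and it stems from a key design choice that differs from the paper: you give each player an \emph{independent} special index $\jstar_i$ and an \emph{independent} embedded $\BHHZ{r}{p}$ instance, whereas the paper uses a \emph{single} index $\jstar$ and a \emph{single} instance $(x^{(\jstar)},\HM)$ shared verbatim across all $k$ players.

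With independent instances, your claim that $I(\bProtMS^{(i)};\text{input}_i)\ge t\cdot\Omega(r^{1-1/p})$ for each $i$ is not justified. To obtain such a bound you would need a one-way $\BHHZ{r}{p}$ protocol that uses only $\bProtMS^{(i)}$ (together with the referee's side for player $i$) to recover $Z_i$, the \Yes/\No status of player $i$'s embedded instance. But the $(1+\eps)$-approximation guarantee only lets the referee recover $\opt(G)$ to within $\Theta(\eps N)$, while flipping a single $Z_i$ changes $\opt(G)$ by $r/(2p)=\Theta(\eps r)=o(\eps N)$. Thus the protocol is under no obligation to determine any individual $Z_i$; it only needs (roughly) $\sum_i Z_i$ to the precision $\Theta(k)$, which is trivial. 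Equivalently, even if you push all other players' inputs and all referee hypermatchings into public randomness so that Bob can subtract their contribution exactly, the residual signal about $Z_i$ is below the protocol's allowed error, and the per-player reduction to \bhhz fails. This is not a technicality that the ``informative index'' trick of Section~\ref{sec:simple-ms} can rescue: there the per-coordinate gap equals the full gap divided by $k$, whereas here the per-player gap is already $o(1)$ of the approximation slack.

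The paper sidesteps this entirely. By planting the \emph{same} $\jstar$ and the \emph{same} $(x^{(\jstar)},\HM)$ in every player, the all-\Yes/all-\No dichotomy coincides with the single \bhhz answer; Alice then simulates all $k$ players in the reduction, and the direct sum runs over the $t$ positions $j\in[t]$ only (Lemma~\ref{lem:dynamic-direct-sum}), giving $\ICost\ge t\cdot\Omega(r^{1-1/p})=n^{2-O(\eps)}$ with no factor of $k$ needed. The cross-edge issue you flag is handled not by an exchange argument but by taking $k=N/(\eps r)$ so that the private contribution $\Theta(N/\eps)$ dominates the at-most-$N$ shared contribution (Claim~\ref{clm:dynamic-matching-bhh}); your choice $k=\lceil N/r\rceil$ makes the two comparable and is what forces you into that obstacle in the first place.
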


Theorem~\ref{thm:MS-hard}, combined with Proposition~\ref{prop:cc-ic}, immediately proves the same lower bound on the SMP communication
complexity of $\MS{n,k,\eps}$. Since SMP communication complexity of a $k$-player problem is at most $k$ times the space complexity of any
single-pass streaming algorithm in dynamic streams~\cite{LiNW14,AiHLW16} (and $k = n^{o(1)}$), this immediately proves Part~(2) of
Theorem~\ref{thm:lower-eps-intro}.


In the following, we focus on proving Theorem~\ref{thm:MS-hard}. We propose the following (hard) distribution $\distMS$ for $\MS{n,k,\eps}$.
Intuitively, the distribution $\distMS$ can be seen as imposing the hard distribution for matching size estimation in~\cite{BuryS15} on each
induced matching in the hard instance of~\cite{AssadiKLY16} for finding approximate matchings.

\textbox{The hard distribution $\distMS$ for $\MS{n,k,\eps}$:}{
\begin{enumerate}
\item[] \textbf{Parameters:} $r = N^{1-o(1)},~t = \frac{{N \choose 2} - o(N^2)}{r},~k = \frac{N}{\eps \cdot r},~n=N + k\cdot r$, and
  $p:= \floor{\frac{1}{8\eps}}$.
\item Fix an $(r,t)$-RS graph $\Grs$ on $N$ vertices.
\item Pick $\jstar \in [t]$ uniformly at random and draw a $\BHHZ{r}{p}$ instance $(x^{(\jstar)}, \HM)$ from the
  distribution $\distbhh$.
\item For each player \PS{i}  independently,
  \begin{enumerate}
  \item Denote by $G_i$ the input graph of \PS{i}, initialized to be a copy of $\Grs$ with vertices $V_i = [N]$.
  \item Let $\vstari$ be the set of vertices matched in the $\jstar$-th induced matching of $G_i$. Change the induced matching
    $\Mrs_{\jstar}$ of $G_i$ to $M_{\jstar} := \Mrs_{\jstar} |_{x^{(\jstar)}}$.
  \item For any $j \in [t]\setminus \set{\jstar}$, draw a vector $x^{(i,j)} \in \set{0,1}^{r}$ following the
    distribution $\distbhh$ for $\BHHZ{r}{p}$, and change the induced matching $\Mrs_j$ of $G_i$ to $M_j := \Mrs_j |_{x^{(j)}}$.
  \item Create the $p$-clique family of $\HM$ on the vertices $R(\Mrs_{\jstar})$, and give the edges of the $p$-clique family to the referee.
  \end{enumerate}
\item Pick a random permutation $\sigma$ of $[n]$. For every player $\PS{i}$, for each vertex $v$ in $V_i\setminus \vstari$ with label $j$
  ($\in [N]$), \emph{relabel} $v$ to $\sigma(j)$. Enumerate the vertices in $\vstari$ (from the one with the smallest label to the largest),
  and relabel the $j$-th vertex to $\sigma(N + (i-1) \cdot 2r + j)$.  In the final graph, the vertices with the same label correspond to
  the same vertex.
\end{enumerate}
} 


The vertices whose labels belong to $\sigma([N])$ are referred to as \emph{shared} vertices since they belong to the input graph of \emph{every}
player, and the vertices $\vstari$ are referred to as the \emph{private} vertices of the player $\PS{i}$ since they only appear in
the input graph of $\PS{i}$ (in the final graph, i.e., after relabeling). We point out that, in general, the final graph constructed by this distribution is a 
multi-graph with $n$ vertices and $O(kN^2) = O(n^{2})$ edges (counting the multiplicities); the multiplicity of each edge is also at most $k$. 
Finally, the existence of an $(r,t)$-RS graph $\Grs$ with the parameters used in this distribution is guaranteed by a result of~\cite{AlonMS12} (see Section~\ref{sec:rs-graphs}).

Similar to the lower bound in Section~\ref{sec:insert-matching-lower}, let $\Ibhh$ be the \emph{embedded} $\BHHZ{r}{p}$ instance
$(x^{(i)}, \HM)$.  The following claim is analogous to Claim~\ref{clm:insert-matching-bhh} in Section~\ref{sec:insert-matching-lower}.

\begin{claim}\label{clm:dynamic-matching-bhh}
  Let: 
  \begin{align*}
  	\opt_{\Yes} &:= \min_{G}\Paren{\opt(G) \mid \text{$G$ is chosen from $\distMS$ \emph{conditioned} on $\Ibhh$ being a $\Yes$ instance}} \\
	\opt_{\No} &:= \max_{G}\Paren{\opt(G) \mid \text{$G$ is chosen from $\distMS$ \emph{conditioned} on $\Ibhh$ being a $\No$ instance}}
  \end{align*}
  then, $\paren{1-\eps} \cdot \opt_{\Yes} > \opt_{\No}$.  
\end{claim}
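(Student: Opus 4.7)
The plan is to follow the same strategy as Claim~\ref{clm:insert-matching-bhh}, adapted to the multi-player layout. The central structural observation I would use is that, because $\Mrs_{\jstar}$ is an \emph{induced} matching of $\Grs$, no edge of $\Grs$ has both endpoints inside $\vstari = V(\Mrs_{\jstar})$ except for the edges of $\Mrs_{\jstar}$ itself. Since $G_i$ is a sub-graph of $\Grs$, the same holds in $G_i$: the restricted matchings $M_j$ with $j \neq \jstar$ cannot contribute any edge lying entirely within $\vstari$. After the relabeling step, this means that for each player $i$, the sub-graph of $G$ induced on its private vertex set $V^*_i$ consists \emph{exactly} of $M_{\jstar}$ together with the referee's $p$-clique family on $R(\Mrs_{\jstar})$ attached to player $i$, which is precisely the construction of Claim~\ref{clm:bhh-matching} with parameters $r$ and $p$.

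Next, I would partition the edges of any matching $\Mstar$ of $G$ into (a) edges lying entirely inside some private set $V^*_i$, and (b) edges incident on the set $S$ of shared vertices (with $|S| = N - 2r$). The edges of type~(a) form a matching inside $G[V^*_i]$, so by Claim~\ref{clm:bhh-matching} they contribute at most $3r/4$ per player in the \Yes case and at most $3r/4 - r/(2p)$ per player in the \No case; the edges of type~(b) contribute at most $|S|$ in total since each uses a distinct shared vertex. Summing across the $k$ players yields
\[
  \opt_{\No} \;\leq\; k\Paren{\tfrac{3r}{4} - \tfrac{r}{2p}} + (N-2r), \qquad \opt_{\Yes} \;\geq\; \tfrac{3kr}{4},
\]
where the lower bound on $\opt_{\Yes}$ is witnessed by taking the disjoint union of the maximum matchings inside the $G[V^*_i]$'s.

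Subtracting these bounds gives $\opt_{\Yes} - \opt_{\No} \geq \tfrac{kr}{2p} - (N-2r)$. Plugging in $kr = N/\eps$ and $1/p \geq 8\eps$ yields $\tfrac{kr}{2p} \geq 4N$, so $\opt_{\Yes} - \opt_{\No} \geq 3N$. A trivial upper bound gives $\opt_{\Yes} \leq kr + |S| \leq (1+1/\eps)N$, hence $\eps \cdot \opt_{\Yes} \leq (1+\eps)N < 2N$ for sufficiently small $\eps$, and therefore $(1-\eps)\opt_{\Yes} > \opt_{\No}$, as desired.

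The one step that requires genuine care is the initial structural claim that relies on the induced-matching property: without it, edges of the other $M_j$'s could land entirely inside $V^*_i$ and enlarge the \No-case private matching beyond the Claim~\ref{clm:bhh-matching} bound, potentially collapsing the gap. Once this is in hand, the remainder of the proof is a direct application of Claim~\ref{clm:bhh-matching} on each player's private sub-graph combined with an elementary counting argument for the matching edges touching the shared set $S$.
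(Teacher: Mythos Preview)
Your proof is correct and follows essentially the same approach as the paper: partition the edges into those inside each private set $V^*_i$ and those touching the shared vertices, apply Claim~\ref{clm:bhh-matching} to each $G[V^*_i]$, and bound the shared contribution by the number of shared vertices. Your explicit justification that $G[V^*_i]$ coincides with the Claim~\ref{clm:bhh-matching} construction (via the induced-matching property of $\Mrs_{\jstar}$) is left implicit in the paper, and your final arithmetic differs only cosmetically from the paper's.
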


\begin{proof}
  We partition the edges of $G$ into $k+1$ groups: for any $i \in [k]$, group $i$ contains the edges that are between the private vertices
  $\vstari$ of player $\PS{i}$, and group $k+1$ contains the edges incident on at least one shared vertex. Let $H_i := G[\vstari]$, i.e.,
  the subgraph of $G$ induced on the vertices $\vstari$.

  
  If $\Ibhh$ is a \Yes instance, then for any $i \in [k]$, $\opt(H_i) = \frac{3r}{4}$ by Claim~\ref{clm:bhh-matching}.  Since $\vstari$ are
  private vertices, one can choose \emph{any} matching from each $H_i$, and the collection of the chosen edges form a matching of
  $G$. Therefore,
  \begin{align*}
  	\opt(G) > \sum_{i=1}^{k} \opt(H_i) = \frac{3kr}{4} = \frac{3N}{\eps}
  \end{align*} 
  Note that, $\opt(G)$ is \emph{strictly} larger than $\frac{3N}{\eps}$ since one can add (any) edge between the public vertices to the
  matching.

  If $\Ibhh$ is a \No instance, then $\opt(H_i) = \frac{3r}{4} - \frac{r}{2p}$. Since the maximum matching size in $G$ is at most the
  summation of the maximum matching size in each group, we have
  \begin{align*}
  	\opt(G) \leq \sum_{i=1}^{k} \opt(H_i) + N \leq \frac{3kr}{4} - \frac{kr}{2p} + N \le  \frac{3N}{\eps} - {3N}
  \end{align*} 
  and the gap between $\opt_{\Yes}$ and $\opt_{\No}$ follows.
\end{proof}

Fix any $\delta$-error protocol $\protMS$ for $\MS{n,k,\eps}$ on $\distMS$; Claim~\ref{clm:dynamic-matching-bhh} implies that
$\protMS$ is also a $\delta$-error protocol for solving the embedded instance $\Ibhh$: simply return \Yes whenever the estimate is larger
than $\opt_{\No}$ and return \No otherwise.  In the following, we use this fact to design a protocol $\protBHH$ for solving $\BHHZ{r}{p}$ on
$\distbhh$, and then prove that the information cost of $\protMS$ is $t$ times the information cost of $\BHHZ{r}{p}$.

In the protocol $\protBHH$, Alice will simulate all $k$ players of $\MS{n,k,\eps}$ and Bob will simulate the referee; Alice and
Bob will use public coins to draw the special index $\jstar$ and the permutation $\sigma$. Together with the input from $\distbhh$,
Alice and Bob will be able to create a $\MS{n,k,\eps}$ instance. The reduction is formally defined as follows (the parameters used in the
reduction are exactly the same as that in the definition of $\distMS$).


\paragraph{The protocol $\protBHH$ for reducing $\BHHZ{r}{p}$ to $\Matching_{n,k,\eps}$:}
\begin{enumerate}
	\item Let $(x,\HM)$ be the input $\BHHZ{r}{p}$ instance ($x$ is given to Alice and $\HM$ is given to Bob).
	\item Using \emph{public randomness}, Alice and Bob sample an index $\jstar \in [t]$, and a permutation $\sigma$ on $[n]$ uniformly at random.
	\item For any player $\PS{i}$, let $x^{(i,1)},\ldots,x^{(i,t)}$ be $t$ vectors in $\set{0,1}^r$ whereby $x^{(i,\jstar)} = x$ (i.e., Alice's input in the \BHHZ{r}{p} problem) 
	and for any $j \neq \jstar$, $x^{(i,j)}$ is sampled by Alice using \emph{private randomness} as in the distribution $\distMS$. Alice then uses these vector together with permutation $\sigma$ to 
	create the input graph $G_i$ for each player \PS{i} for $i \in [k]$ following how $G_i$ is created in the distribution $\distMS$ for $\MS{n,k,\eps}$. 
	
	\item The vertices $R(\Mrs_{\jstar})$ of each player will be mapped (by $\sigma$) to a different set of vertices in $G$.  Since Bob
          knows $\sigma$ and $\jstar$, and the (input) $p$-hypermatching $\HM$, Bob can create the $p$-clique families of each player (following the
          input of the referee in $\distMS$).
	\item The players then run $\protMS$ on the $\MS{n,k,\eps}$ that they created, and Bob outputs $\Yes$ if the matching size estimate
          is larger than $\opt_{\No}$ and $\No$ otherwise.
\end{enumerate}

It is straightforward to verify that the distribution of the $\MS{n,k,\eps}$ instance created by the protocol $\protBHH$ is identical to the
distribution $\distMS$. The correctness of the protocol now follows immediately from Claim~\ref{clm:insert-matching-bhh}. In the remainder of this section, we bound the
information cost of this protocol.

\begin{lemma}\label{lem:dynamic-direct-sum}
	$\ICost{\protBHH}{\distbhh} \leq \frac{1}{t} \cdot  \ICost{\protMS}{\distMS}$. 
\end{lemma}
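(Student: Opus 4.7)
\emph{Plan.} The argument mirrors the template of Lemma~\ref{lem:insert-direct-sum}, but must be adapted to the SMP information cost $\ICost{\protMS}{\distMS}=\sum_{i=1}^{k} I_{\distMS}(\bProtMS^{(i)};\bG_1,\ldots,\bG_k \mid \bR)$, where $\bG_i$ denotes the input graph of player~$\PS{i}$ and $\bR$ is the public randomness of $\protMS$. In $\protBHH$ the public tape is $(\bJ,\bSigma,\bR)$ and Alice's message is the concatenation $(\bProtMS^{(1)},\ldots,\bProtMS^{(k)})$, so (using $\bX\perp(\bJ,\bSigma,\bR)$) we have $\ICost{\protBHH}{\distbhh}=I_{\distbhh}(\bX;\bProtMS^{(1)},\ldots,\bProtMS^{(k)}\mid\bJ,\bSigma,\bR)$. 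I would first invoke sub-additivity of mutual information (\itfacts{info-sub-additivity}) to peel off the $k$ player messages: conditional on $(\bX,\bJ,\bSigma,\bR)$, each simulated input $\bG_i$ is determined by $\bX$ together with the vectors $\{\bX^{(i,j)}\}_{j\ne\bJ}$ that Alice draws independently across $i$, and $\bProtMS^{(i)}$ depends further only on player-$i$'s private coins; hence $\bProtMS^{(1)},\ldots,\bProtMS^{(k)}$ are conditionally independent, yielding $\ICost{\protBHH}{\distbhh}\le\sum_{i=1}^{k} I_{\distbhh}(\bX;\bProtMS^{(i)}\mid\bJ,\bSigma,\bR)$.

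Next, for each fixed $i$, I would average over the uniform $\bJ$ and switch to $\distMS$. Conditioned on $\bJ=j$, Alice's input $\bX$ plays precisely the role of the vector $\bX^{(i,j)}$ used by player $\PS{i}$ for the $j$-th induced matching in $\distMS$, so the joint distributions of $(\bX,\bProtMS^{(i)},\bSigma,\bR)$ under $(\distbhh,\bJ=j)$ and of $(\bX^{(i,j)},\bProtMS^{(i)},\bSigma,\bR)$ under $(\distMS,\bJ=j)$ agree. The latter law is moreover $\bJ$-free: from a single player's standpoint, the $t$ vectors $\bX^{(i,1)},\ldots,\bX^{(i,t)}$ are i.i.d.\ draws from $\distbhh$ regardless of which index is the shared one (sharing introduces only cross-player correlations, not marginal distortions at player~$\PS{i}$), and $\bProtMS^{(i)}$ depends solely on $\bG_i$ and $\bR$. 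The conditioning on $\bJ=j$ can therefore be dropped, yielding
\[
I_{\distbhh}(\bX;\bProtMS^{(i)}\mid\bJ,\bSigma,\bR)=\frac{1}{t}\sum_{j=1}^{t} I_{\distMS}(\bX^{(i,j)};\bProtMS^{(i)}\mid\bSigma,\bR).
\]

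Finally I would apply super-additivity of mutual information (\itfacts{info-super-additivity}) over $j$: the vectors $\bX^{(i,1)},\ldots,\bX^{(i,t)}$ are marginally independent (for fixed $i$) and independent of $(\bSigma,\bR)$, so
\[
\sum_{j=1}^{t} I_{\distMS}(\bX^{(i,j)};\bProtMS^{(i)}\mid\bSigma,\bR)\le I_{\distMS}(\bX^{(i,1)},\ldots,\bX^{(i,t)};\bProtMS^{(i)}\mid\bSigma,\bR)\le I_{\distMS}(\bG_i;\bProtMS^{(i)}\mid\bSigma,\bR),
\]
the last inequality being data processing since $\bG_i$ is determined by $(\bX^{(i,1)},\ldots,\bX^{(i,t)})$ and $\bSigma$. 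Because $\bSigma$ is itself a deterministic function of $(\bG_1,\ldots,\bG_k)$ (the vertex labels recover the permutation), $I(\bSigma;\bProtMS^{(i)}\mid\bG_1,\ldots,\bG_k,\bR)=0$, and chain rule then gives $I_{\distMS}(\bG_i;\bProtMS^{(i)}\mid\bSigma,\bR)\le I_{\distMS}(\bG_1,\ldots,\bG_k;\bProtMS^{(i)}\mid\bR)$. Summing over $i$ and folding in the $1/t$ prefactor yields the claimed $\ICost{\protBHH}{\distbhh}\le\frac{1}{t}\ICost{\protMS}{\distMS}$. The main subtlety I expect is the distribution-matching/$\bJ$-removal step: unlike in $\distMO$ where Alice herself samples the $\bX^{(j)}$'s as i.i.d., here the vector $\bX^{(\jstar)}$ is shared across all $k$ players under $\distMS$, so one must argue carefully that a \emph{single} player's marginal, and hence the joint law of $(\bX^{(i,j)},\bProtMS^{(i)},\bSigma,\bR)$, is genuinely unaffected by which index is the special one.
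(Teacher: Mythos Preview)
Your argument is correct and follows essentially the same route as the paper: sub-additivity across the $k$ player messages, averaging over $\bJ$ and identifying Alice's input with the $j$-th vector, removing the $\bJ=j$ conditioning via the single-player symmetry, super-additivity across $j\in[t]$, and finally absorbing $\bSigma$ via chain rule to land on the SMP information cost. The only cosmetic differences are that the paper applies sub-additivity \emph{after} expanding over $\bJ$ (you do it before), and the paper tracks the tuple $\bY_j=(\bX_{1,j},\ldots,\bX_{k,j})$ rather than the single coordinate $\bX^{(i,j)}$; once sub-additivity has isolated player $i$ these are equivalent.

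One small imprecision: your justification that $I(\bSigma;\bProtMS^{(i)}\mid\bG_1,\ldots,\bG_k,\bR)=0$ because ``$\bSigma$ is a deterministic function of $(\bG_1,\ldots,\bG_k)$'' is not obviously true---whether the global permutation $\sigma$ is recoverable from the labeled graphs depends on the automorphism structure of the fixed RS graph $\Grs$. The conclusion, however, is still valid for a simpler reason: $\bProtMS^{(i)}$ is a function of $(\bG_i,\bR,\text{private coins of }\PS{i})$, and the private coins are independent of $\bSigma$ even after conditioning on $(\bG_1,\ldots,\bG_k,\bR)$; hence $\bProtMS^{(i)}\perp\bSigma\mid\bG_1,\ldots,\bG_k,\bR$. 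With that fix your final chain-rule step goes through, and the rest of the argument is sound.
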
  
\begin{proof}
	We have, 
	\begin{align*}
		\ICost{\protBHH}{\distbhh} &= I_{\distbhh}(\bX; \bProtBHH,\bR) = I_{\distbhh}(\bX ; \bProtBHH^{R} \mid \bR) 
		 \tag{by chain rule of mutual information (\itfacts{info-chain-rule}) and since $I(\bX;\bR) = 0$ as $\bX \perp \bR$} \\
		&= I_{\distbhh}(\bX ; \bProtMS \mid \bSigma,\bJ,\bRms) = \Ex_{j \in [t]} \bracket{I_{\distbhh}(\bX ; \bProtMS \mid \bSigma,\bRms,\bJ=j)} 
	\end{align*}
	where the second last equality is because $\bR = (\bsigma,\bJ,\bRms)$ ($\bRms$ is the public randomness of $\protMS$), and the message of $\protBHH$ is the same as $\protMS$ after fixing the index $\jstar$. 
	For any $j \in [t]$, define $\bY_j := (\bX_{1,j},\bX_{2,j},\ldots,\bX_{k,j})$ where $\bX_{i,j}$ is a random variable for the vector $x^{(i,j)}$. With this notation, conditioned on $\bJ=j$, 
	we have $\bX = \bY_j$ and also the joint distribution of $(\bProtMS, \bSigma, \bY_j,\bRms)$ conditioned on $\bJ=j$, is the same under both $\distMS$ and $\distbhh$. Hence, 
	\begin{align*}
	\ICost{\protBHH}{\distbhh} &= \frac{1}{t} \sum_{j=1}^{t} I_{\distMS}(\bY_j ; \bProtMS \mid \bSigma,\bRms,\bJ=j) \\
		&= \frac{1}{t} \sum_{j=1}^{t} I_{\distMS}(\bY_j ; \bProtMS^{(1)},\ldots,\bProtMS^{(k)} \mid \bSigma,\bRms,\bJ=j) \\
		&\leq \frac{1}{t} \sum_{j=1}^{t} \sum_{i=1}^{k}  I_{\distMS}(\bY_j;\bProtMS^{(i)}\mid \bSigma,\bRms,\bJ=j) \\
		&= \frac{1}{t} \sum_{j=1}^{t} \sum_{i=1}^{k} I_{\distMS}(\bY_j;\bProtMS^{(i)}\mid \bSigma,\bRms) 
		\end{align*}
	where the last inequality is by conditional sub-additivity of mutual information (\itfacts{info-sub-additivity}) since $\bProtMS^{(i)} \perp \bProtMS^{<i} \mid \bSigma,\bY_j,\bRms,\bJ=j$; this is 
	because conditioned on the given random variables and $\bJ = j$, the message of each player $\PS{i}$ (i.e., $\bProtMS^{(i)}$) is only a function of $x^{(i,j)}$ for $j \neq \jstar$ and since these
	vectors are chosen independently, the messages would be independent. 
	
	Moreover, the reason we can drop the conditioning on the event $\bJ = j$ (in the last equality above) is as follows: $\bProtMS^{(i)}$ is a function of $(\bX_i,\bsigma_i)$  
	where $\bX_i:= (\bX_{i,1},\ldots,\bX_{i,t})$ is a random variable for the vector $x^{(i)}$. $\bX_i$ defines the graph $G_i$ without the labels, i.e., 
	over the set of vertices $V_i := [N]$ and $\bsigma_i$ is the random variable denoting how the vertices of the player $\PS{i}$ are mapped to
        $G$, i.e., specifies the labels of vertices. Therefore, $(\bX_i,\bsigma_i)$ is independent of $\bJ = j$ (given the input graph $G_i$, each matching has the same probability
        of being the chosen matching for $\jstar$); hence it is easy to see that all four random variables in above term are independent of
        the event $\bJ = j$. 
        
        Finally, since $\bY_j$ and $\bY^{-j}$ are independent of each other even conditioned on $\bSigma,\bRms$, by conditional super-additivity of 
        mutual information (\itfacts{info-super-additivity}), 
	\begin{align*}
		\ICost{\protBHH}{\distbhh}  &\leq \frac{1}{t} \sum_{i=1}^{k} I_{\distMS}(\bY_1,\ldots,\bY_t;\bProtMS^{(i)} \mid \bSigma,\bRms,) \\
		&= \frac{1}{t} \sum_{i=1}^{k} I_{\distMS}(\bX_1,\ldots,\bX_k;\bProtMS^{(i)} \mid \bSigma,\bRms,) \tag{$\bY_1,\ldots,\bY_t$ uniquely define $\bX_1,\ldots,\bX_k$ and vice versa}\\
		&\leq \frac{1}{t} \sum_{i=1}^{k} I_{\distMS}(\bX_1,\ldots,\bX_k,\bsigma;\bProtMS^{(i)},\bRms) \tag{by chain rule of mutual information (\itfacts{info-chain-rule})} \\
		&= \frac{1}{t} \cdot \ICost{\protMS}{\distMS}
	\end{align*}
	where the last equality is because $(\bX_i,\bsigma_i)$ uniquely defines the input to player $\PS{i}$. 
\end{proof}

Theorem~\ref{thm:MS-hard} now follows from Lemma~\ref{lem:dynamic-direct-sum}, lower bound of $\Omega(r^{1-1/p}) = n^{1-O(\eps)}$ for 
$\BHHZ{r}{p}$ in Corollary~\ref{cor:bhh}, and the choice of $t = \Theta(n)$ in the distribution.

\section{Space Upper Bounds for  $\alpha$-Approximating Matching Size}\label{sec:alpha-upper}
In this section, we present our algorithms for achieving an $\alpha$-approximation of the maximum matching size respectively in
$\Ot({n / \alpha^2})$ space for insertion-only streams and in $\Ot({n^2 / \alpha^4})$ space for dynamic streams, proving Theorem~\ref{thm:upper-intro}.



The main ingredient of both our algorithms is a simple \emph{vertex sampling} procedure.  In the rest of this section, we first define this
sampling procedure and establish its connection to matching size estimation (Section~\ref{sec:vertex-sample}).  We then build on this
connection to provide a \emph{meta-algorithm} for matching size estimation (Section~\ref{sec:meta-alg}). Finally, we show how to implement
this meta-algorithm in $\Ot(n/\alpha^2)$ space in insertion-only streams and $\Ot(n^2/\alpha^4)$ space in dynamic streams, which proves
Theorem~\ref{thm:upper-intro}.


\subsection{Vertex Sampling Procedure}\label{sec:vertex-sample}

Consider the following simple vertex sampling procedure. 
\begin{itemize}
	\item[] \sample{G}{p}: sample each vertex $v \in V$ in $G(V,E)$ w.p. $p$, using a \emph{four-wise independent} hash function, and return the induced subgraph over the set of sampled vertices. 
\end{itemize}

Note that since $O(\log{n})$ bits suffices to store a four-wise independent hash function (see, e.g.,~\cite{RAbook}), the set of sampled vertices in \sample{G}{p} can be also 
be stored (implicitly) in $O(\log{n})$ bits. 

The following lemma establishes that as long as $\opt(G)$ is not too small, the maximum matching size in the graph that \sample{G}{p}
outputs (for $p := \frac{\log{n}}{\alpha}$) can be directly used to obtain an $\alpha$-approximation of $\opt(G)$.


\begin{lemma}\label{lem:vertex-sample}
  Let $G(V,E)$ be any graph, $\alpha \geq \log{n}$, and $p:= \frac{\log{n}}{\alpha}$; for $\GS := \sample{G}{p}$,
	\begin{enumerate}
		\item \label{part:sample-upper} if $\opt(G) = \Omega(\alpha)$, then $\opt(\GS) \le {3 \log n \over  \alpha} \cdot \opt(G)$ w.p. $1-o(1)$. 
		\item \label{part:sample-lower} if $\opt(G) = \Omega(\alpha^2)$, then $\opt(\GS) \ge {\log^2 n \over 2\alpha^2} \cdot
                  \opt(G)$ w.p. $1-o(\frac{1}{\log{n}})$.
	\end{enumerate}
\end{lemma}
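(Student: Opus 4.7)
The plan is to use the standard vertex-cover/matching duality for Part~\ref{part:sample-upper} and a direct matching-preservation argument for Part~\ref{part:sample-lower}, applying Chebyshev's inequality in both cases since the 4-wise independence of the hash function comfortably supports pairwise independence (needed for variance bounds of sums of indicators on single vertices) and full 4-wise independence (needed for variance bounds of sums of indicators on pairs of vertices).

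For Part~\ref{part:sample-upper}, I would fix a minimum vertex cover $C$ of $G$; by the standard relation between matchings and vertex covers, $\card{C} \leq 2\opt(G)$. Any matching in $\GS$ is a matching in the subgraph of $G$ induced on $V(\GS)$, and in particular every one of its edges must still be covered by $C$, so $\opt(\GS) \leq \card{C \cap V(\GS)}$. Let $\bX := \card{C \cap V(\GS)} = \sum_{v \in C} \bX_v$ where $\bX_v$ is the indicator that $v$ is sampled. Since the $\bX_v$'s are pairwise independent (the hash function is 4-wise independent), $\var{\bX} \leq \Ex[\bX] = \card{C} \cdot p$. Under the hypothesis $\opt(G) = \Omega(\alpha)$, we have $\Ex[\bX] \leq 2\opt(G) p = O(\log n)$ but also $\Ex[\bX] \geq \opt(G) p = \Omega(\log n)$, so Chebyshev's inequality with deviation $\delta = 1/2$ yields $\Pr[\bX > \tfrac{3}{2}\card{C}p] = O(1/\log n) = o(1)$. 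On the complementary event, $\opt(\GS) \leq \tfrac{3}{2} \card{C} p \leq 3 \opt(G) p = (3\log n/\alpha) \cdot \opt(G)$.

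For Part~\ref{part:sample-lower}, I would fix a maximum matching $\Mstar$ of $G$ with $\card{\Mstar} = \opt(G)$. For each edge $e = (u,v) \in \Mstar$, let $\bY_e$ be the indicator that both $u$ and $v$ are sampled, so $\Pr(\bY_e = 1) = p^2$. Since $\Mstar$ is a matching, any two edges of $\Mstar$ involve four distinct vertices, and the 4-wise independence of the hash function implies that the $\bY_e$'s are \emph{pairwise independent}. Letting $\bY := \sum_{e \in \Mstar} \bY_e$, any edge of $\Mstar$ with $\bY_e = 1$ survives in $\GS$, and the surviving edges still form a matching, so $\opt(\GS) \geq \bY$. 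We have $\Ex[\bY] = \opt(G) \cdot p^2$ and (by pairwise independence) $\var{\bY} \leq \Ex[\bY]$. Under the hypothesis $\opt(G) = \Omega(\alpha^2)$, $\Ex[\bY] = \Omega(\log^2 n)$, and Chebyshev's inequality with $\delta = 1/2$ gives $\Pr[\bY < \tfrac{1}{2}\Ex[\bY]] = O(1/\Ex[\bY]) = o(1/\log n)$. On the complementary event, $\opt(\GS) \geq \tfrac{1}{2} \opt(G) \cdot p^2 = (\log^2 n/(2\alpha^2)) \cdot \opt(G)$.

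The main subtlety is the choice of concentration inequality: a Chernoff bound would require full independence, but we only have 4-wise independence of the hash function for space reasons. Fortunately, 2-wise independence is exactly what is needed for the variance computation in Part~\ref{part:sample-upper} (a sum over single-vertex indicators), and 4-wise independence is exactly what is needed in Part~\ref{part:sample-lower} (a sum over indicators each depending on two vertices, so any two such indicators are determined by four vertices). This is precisely why sampling via a 4-wise independent hash function suffices and why the failure probabilities degrade only to $o(1)$ and $o(1/\log n)$ rather than being inverse-polynomial.
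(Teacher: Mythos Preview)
Your proof is correct and essentially identical to the paper's, which uses the vertex set $V(\Mstar)$ of a maximum matching (of size exactly $2\opt(G)$) as the vertex cover in Part~\ref{part:sample-upper} rather than a minimum vertex cover, and otherwise proceeds exactly as you do via Chebyshev and the pairwise/4-wise independence observations. One harmless slip: $\Ex[\bX] = \card{C}\cdot p$ is $\Omega(\log n)$ but not $O(\log n)$ when $\opt(G) \gg \alpha$; fortunately only the lower bound $\Ex[\bX] = \Omega(\log n)$ is needed for the Chebyshev step.
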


Note that for $\opt(G) = \Omega(\alpha^2)$, Lemma~\ref{lem:vertex-sample} immediately implies that w.p. $1 - o(1)$,
\begin{align*}
  {3 \log n \over \alpha} \cdot \opt(G) \le \opt(\GS) \le {\log^2 n \over 2\alpha^2} \cdot \opt(G).
\end{align*}

\begin{proof}[Proof of Lemma~\ref{lem:vertex-sample}]
  Fix a maximum matching $\Mstar$ in $G$ and denote the set of vertices matched in $\Mstar$ by $V(\Mstar)$. 
  Moreover, let $\VS(\Mstar)$ be the set of vertices in $V(\Mstar)$ that are sampled by \sample{G}{p}. 
  
  We first prove Part~(\ref{part:sample-upper}) of the lemma. Since $\Mstar$ is a maximum matching in $G$, every edge in $G$ must be
  incident on at least one vertex in $V(\Mstar)$.  Consequently, in the sampled graph $\GS$, every edge is incident on at least one vertex
  in $\VS(\Mstar)$, and hence, $\opt(\GS) \le \card{\VS(\Mstar)}$; therefore, we only need to upper bound $\card{\VS(\Mstar)}$. Let $X$ be a
  random variable denoting $\card{\VS(\Mstar)}$.
  
  Now, $\expect{X} = p \cdot \card{V(\Mstar)} = p \cdot 2\opt(G) = {2\log n \over \alpha} \cdot \opt(G)$ by the choice of $p$.  Since
  $\opt(G) = \Omega(\alpha)$ by our assumption in Part~(\ref{part:sample-upper}), $\expect{X} = \Omega(\log n)$. Moreover, because
  \sample{G}{p} samples vertices using a four-wise independent hash function, $\var{X} \leq \expect{X}$ and hence by Chebyshev inequality,
  \begin{align*}
    \prob{X \geq {3 \log n \over \alpha} \cdot \opt(G)} &=  \prob{X \geq \frac{3}{2} \cdot \expect{X}} \leq  \prob{\card{X - \expect{X}} \ge {\expect{X} \over 2}} \\
    &\leq {\var{X} \over (\expect{X}/2)^2} \le {4 \over \expect{X}} = {1 \over
    \Omega(\log n)} = o(1)
  \end{align*}
  This implies w.p. $1 - o(1)$, $\card{\VS(\Mstar)} \le {3 \log n \over \alpha} \cdot \opt(G)$, and 
  since $\opt(\GS) \leq \card{\VS(\Mstar)}$ we obtain the result in Part~(\ref{part:sample-upper}). 
  
  We now prove Part~(\ref{part:sample-lower}) of the lemma. Let $\MstarS$ be the set of sampled edges from $\Mstar$ that end up $\GS$.
  Since $\opt(\GS) \geq \card{\MstarS}$, it suffices to show that $\card{\MstarS} \ge {\log^2 n \over 2\alpha^2} \cdot \opt(G)$.  Let $Y$ be
  a random variable denoting $\card{\MstarS}$.

  For each edge $e \in \Mstar$, $e$ appears in $\MstarS$ iff both endpoints of $e$ are sampled by \sample{G}{p}, which happens w.p. $p^2$
  (due to four-wise independence in sampling vertices). Therefore, the expected number of edges in $\MstarS$ is
  $\expect{Y}=p^2 \cdot \opt(G) = {\log^2 n \over \alpha^2} \cdot \opt(G)$. Since by assumption in Part~(\ref{part:sample-lower}),
  $\opt(G) = \Omega(\alpha^2)$, we have $\expect{Y} = \Omega(\log^2 n)$. Moreover, since vertices are sampled in \sample{G}{p} using a
  four-wise independent hash function, for any two edges in $\Mstar$, the event that they appear in $\MstarS$ is independent of each other;
  this implies $\var{Y} \le \expect{Y}$, and hence by Chebyshev inequality,
  \begin{align*}
    \prob{Y < {\log^2 n \over 2\alpha^2} \cdot \opt(G)} &= \prob{Y < {\expect{Y} \over 2}} \leq \prob{\card{Y - \expect{Y}} \ge {\expect{Y} \over 2}} \\
    &\leq {\var{Y} \over (\expect{Y}/2)^2} \le {4 \over \expect{Y}} = {1 \over  \Omega(\log^2 n)} = o({1 \over \log n})
  \end{align*}
  Therefore, w.p. $1-o({1 \over \log n})$, $\card{\MstarS} \ge {\log^2 n \over 2\alpha^2} \cdot \opt(G)$, proving Part~(\ref{part:sample-lower}). 
\end{proof}

\subsection{The Meta Algorithm}\label{sec:meta-alg}

In this section, we define our meta-algorithm for approximating the matching size in any graph $G$ based on the vertex sampling procedure
defined in the previous section.  To continue, we need to define the notion of \emph{matching size testers} that are used as subroutines in
the meta-algorithm.

\begin{definition}[$\gamma$-Matching Size Tester]
  For any constant $0 < \gamma < 1$, a $\gamma$-matching size tester (denoted by $\testerg$) is an algorithm that given a graph $G$ and a threshold $k$,
  outputs \Yes if $\opt(G) \ge k$, outputs \No if $\opt(G) \le \gamma \cdot k$, and otherwise is allowed to output either \Yes or \No. 
  
  Moreover, whenever $\testerg(G,k)$ outputs $\No$, it also outputs an estimate $\topt$ such that
  $\gamma \cdot \opt(G) \le \topt \le \opt(G)$.
\end{definition}

Given any $\gamma$-matching size tester $\testerg$, consider the following algorithm (denoted by Algorithm~1) for achieving an
$O(\alpha)$-approximation of maximum matching size.

  \begin{enumerate}
  \item\label{step:guesses} For each value $\beta \in \set{\log n, 2 \log n, 2^2 \log n, \ldots, \alpha}$, let $\galphap := \sample{G}{\log n\over \beta}$. 
  In parallel, run $\testerg$ on each $\galphap$ with the parameter ${\log^2 n \over 2}$ (i.e., run
    $\testerg(\galphap, {\log^2 n \over 2})$).
  \item\label{step:large-test} In addition, for $\beta = \alpha$, also run $\testerg(\galpha, {n \log^2 n \over \alpha^2})$.
  \item\label{step:output} At the end of the stream, for each value $\beta$, we say $\beta$ \emph{passes} if $\testerg(\galphap, {\log^2n \over 2})$ outputs
    \Yes; otherwise, we say $\beta$ \emph{fails}.
    \begin{itemize}
    \item If all $\beta$ fail, output the estimate $\tOPT_{\log n}$ returned by $\testerg(G^{\log n}, {\log^2 n \over 2})$.
    \item If all $\beta$ pass, output $\max\set{\alpha, {{\alpha \over \log^2n} \cdot \tOPT_\alpha}}$, where $\tOPT_{\alpha}$ is defined as
      follows. if $\testerg(\galpha, {n \log^2 n \over \alpha^2})$ returns \No, let $\tOPT_\alpha$ be the estimate returned by
      $\testerg(\galpha, {n \log^2 n \over \alpha^2})$; otherwise, let $\tOPT_\alpha := {\gamma n \log^2n \over \alpha^2}$.
    \item Otherwise, output ${\beta^* \over 2}$ where $\beta^*$ is the smallest $\beta$ that fails.
    \end{itemize}
  \end{enumerate}
  
  We should remark right away that if $\opt(G) = \Omega(\alpha^2)$, running $\testerg(\galpha, {n \log^2 n \over \alpha^2})$
  (step~\ref{step:large-test} in the algorithm) suffices to obtain an $\alpha$-approximation (Lemma~\ref{lem:vertex-sample} essentially guarantees that $\opt(\galpha) \in [{\opt(G) \over \alpha^2}, {\opt(G) \over \alpha}]$).
  Therefore, running tester for $O(\log{\alpha})$ different values (step~\ref{step:guesses} in the algorithm) is only for the case where
  $\opt(G) \le \alpha^2$.

  Intuitively speaking, for the three cases that determine the output of the algorithm (step~\ref{step:output} in the algorithm):
  \begin{itemize}
  \item If all $\beta$ fails, then all testers returns \No, which means the maximum matching size in the sampled graphs are all small: this
    is for the case $\opt(G) = \Ot(1)$.
  \item If all $\beta$ passes, then all testers return \Yes, which means the maximum matching sizes are all large: this is for the case
    $\opt(G) > \alpha^2$.
  \item Finally, if some $\beta$ pass and some $\beta$ fail, then we are in the case $\opt(G) \in [\Ot(1), \alpha^2]$.
  \end{itemize}
  We now prove the correctness of Algorithm~1 through considering these three cases separately.

  \begin{lemma}\label{lem:alpha-apx}
    For any $\alpha \ge \log n$, Algorithm~1 outputs an $O(\alpha)$-approximation of $\opt(G)$ w.h.p.
  \end{lemma}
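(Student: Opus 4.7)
My plan is to first condition on a high-probability event $E$ under which the conclusions of both parts of Lemma~\ref{lem:vertex-sample} hold simultaneously for every sampling rate $\log n/\beta$ used by Algorithm~1. Since there are only $O(\log \alpha)$ values of $\beta$ tried and the failure probability in Part~(\ref{part:sample-lower}) is $o(1/\log n)$, a union bound gives $\Pr(E) = 1-o(1)$. The guiding intuition driving the whole analysis is that under $E$, for sampling rate $\log n/\beta$, the matching in the sampled graph has size roughly $\opt(G) \cdot \log^2 n/\beta^2$ once $\opt(G) = \Omega(\beta^2)$ and at most $\opt(G) \cdot \log n/\beta$ always; consequently the threshold $\log^2 n/2$ is guaranteed to be passed once $\opt(G) \gtrsim \beta^2$ and guaranteed to be failed once $\opt(G) \lesssim \beta \log n$. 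I would then analyze each of the three branches of step~\ref{step:output} separately.

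For the \emph{all-fail} branch, the value $\beta = \log n$ has sampling rate $1$, so the sampled graph coincides with $G$ itself; the tester returning NO on $G$ certifies $\opt(G) < \log^2 n/2$ and yields an estimate $\tOPT_{\log n} \in [\gamma\opt(G),\opt(G)]$, an $O(1)$-approximation. For the \emph{mixed} branch, let $\beta^\star$ be the smallest failing $\beta$. The failure of $\beta^\star$ combined with the guaranteed-pass regime of Part~(\ref{part:sample-lower}) forces $\opt(G) < (\beta^\star)^2$, and the passing of $\beta^\star/2$ combined with the guaranteed-fail regime of Part~(\ref{part:sample-upper}) forces $\opt(G) = \Omega(\beta^\star \log n)$. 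Since $\beta^\star \le \alpha$, these sandwich $\opt(G)$ inside an interval of multiplicative width $O(\alpha)$ around $\beta^\star/2$, exactly matching the output. For the \emph{all-pass} branch, Part~(\ref{part:sample-upper}) applied to $\beta=\alpha$ gives $\opt(\galpha) \le 3 \log n \cdot \opt(G)/\alpha$, so scaling by $\alpha/\log^2 n$ already keeps the output at most $\opt(G)$; when $\opt(G) = \Omega(\alpha^2)$, Part~(\ref{part:sample-lower}) yields the matching lower bound $(\alpha/\log^2 n)\cdot \tOPT_\alpha = \Omega(\opt(G)/\alpha)$, while when $\opt(G) = O(\alpha^2)$ the $\alpha$ term in the $\max$ takes over and satisfies $\alpha \ge \sqrt{\opt(G)} \ge \opt(G)/\alpha$. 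In the auxiliary-tester YES sub-case, the output value $\gamma n/\alpha$ is itself an $O(\alpha)$-approximation, which one verifies by combining the tester's YES condition with Part~(\ref{part:sample-upper}) to deduce $\opt(G) = \Omega(n \log n/\alpha)$.

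The main obstacle I anticipate is handling the ``boundary'' regime in which the preconditions of Lemma~\ref{lem:vertex-sample} are only marginally satisfied (i.e., $\opt(G)$ is comparable to $\beta$ or $\beta^2$ rather than much larger), where the tester is allowed to answer either way. In those regimes one cannot deduce a two-sided bound on $\opt(G)$ from a single pass/fail outcome. The way I plan to deal with this is precisely to exploit the monotonicity of the sampling rate $\log n/\beta$: a smaller $\beta$ that passes and a larger $\beta$ that fails, taken together, pin down $\opt(G)$ to within a factor $O(\alpha)$, which is the reason the mixed branch outputs $\beta^\star/2$ rather than any estimate derived from a single tester. The all-pass branch requires a symmetric verification that the auxiliary tester's NO/YES dichotomy handles the two regimes $\opt(G) \in [\alpha^2, n/\alpha]$ and $\opt(G) \ge n/\alpha$ uniformly; this I would just check by direct substitution of the Lemma~\ref{lem:vertex-sample} bounds rather than by any clever argument.
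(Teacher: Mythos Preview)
Your three-case split and the all-fail branch match the paper exactly. The gap is in how you upper-bound the output in the mixed and all-pass branches.

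In the mixed branch you argue: since $\beta^\star/2$ passes, Part~(\ref{part:sample-upper}) of Lemma~\ref{lem:vertex-sample} forces $\opt(G)=\Omega(\beta^\star\log n)$. But Part~(\ref{part:sample-upper}) carries the precondition $\opt(G)=\Omega(\beta)$; invoking it at the \emph{random} value $\beta=\beta^\star/2$ in order to conclude $\opt(G)\gtrsim\beta^\star$ is circular. If in fact $\opt(G)\ll\beta^\star$, the lemma simply does not apply to $\beta^\star/2$, and the only remaining bound, the trivial $\opt(G^{\beta^\star/2})\le\opt(G)$, is too weak to force a fail (since $\opt(G)$ can exceed $\gamma\log^2 n/2$ while still being far below $\beta^\star$). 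Your global event $E$ does not rescue this: you only justified the union bound for Part~(\ref{part:sample-lower}), and even a careful union bound for Part~(\ref{part:sample-upper}) would cover only those $\beta$ for which the precondition already holds. The same gap recurs in the all-pass branch, where you invoke Part~(\ref{part:sample-upper}) at $\beta=\alpha$ and use the output term $\alpha$ without first verifying $\opt(G)\ge\alpha$.

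The paper's fix is to apply Part~(\ref{part:sample-upper}) not to the algorithm-dependent $\beta^\star$ but to the single \emph{deterministic} index $\beta'$ satisfying $\beta'/2\le\opt(G)\le\beta'$ (which exists whenever $\opt(G)<\alpha$). For this $\beta'$ the precondition $\opt(G)=\Omega(\beta')$ holds by construction, so one application of Part~(\ref{part:sample-upper}) shows $\opt(G^{\beta'})\le 3\log n$ and hence $\beta'$ fails w.h.p.; consequently $\beta^\star\le\beta'$ and $\beta^\star/2\le\beta'/2\le\opt(G)$. The same observation yields $\opt(G)\ge\alpha$ in the all-pass branch (otherwise such a $\beta'$ would exist and fail, contradicting all-pass), which both legitimizes the use of Part~(\ref{part:sample-upper}) at $\beta=\alpha$ and bounds the $\alpha$ term in the $\max$. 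Thus only $O(1)$ invocations of Part~(\ref{part:sample-upper}) are ever needed, and no union bound on that part is required.
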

  \begin{proof}
    First notice that if all $\beta$ fails, in particular, $\beta = \log n$ fails, and hence the estimate returned by
    $\testerg(G^{\log n}, \log^2 n)$ (denoted by $\tOPT_{\log n}$) is a $\gamma$-approximation of $\OPT(G^{\log n})$. Furthermore, note that
    for $\beta = \log n$, the subsampling probability is ${\log n \over \beta} = 1$, and hence, $G^{\log n} = G$. Therefore,
    $\tOPT_{\log n}$ is a also $\gamma$-approximation of $\opt(G)$.

    In the following, we analyze the other two cases: $(i)$ all $\beta$ pass (which would be the case where $\opt(G)$ is large) and $(ii)$
    some $\beta$ pass and some $\beta$ fail (which will be the case where $\opt(G)$ is small).  The following two claims summarize the
    property of the $\beta$ that passes and the property of the $\beta$ that fails, which will be useful for the analysis.

  \begin{claim}\label{clm:pass}
    For any $\beta$ where $\beta^2 \le \opt(G)$, $\beta$ passes w.p. $1 - o({1 \over \log n})$.
  \end{claim}
  \begin{proof}
    By Lemma~\ref{lem:vertex-sample} Part~(\ref{part:sample-lower}), when $\beta^2 \le \opt(G)$, w.p. $1 - o({1 \over \log n})$,
    \begin{align*}
      \opt(\galphap)  &\ge { \log^2 n \over 2\beta^2 } \cdot \opt(G) \ge { \log^2 n \over 2\beta^2 } \cdot \beta^2 = {\log^2 n \over 2}.
    \end{align*}
    Therefore, $\testerg(\galphap, {\log^2 n \over 2})$ outputs \Yes (and hence $\beta$ passes).
  \end{proof}

  \begin{claim}\label{clm:fail}
    For the value $\beta$ where ${\beta \over 2} \le \opt(G) \le \beta$ (if one exists), $\beta$ fails w.p. $1 - o(1)$.
  \end{claim}
  \begin{proof}
    By Lemma~\ref{lem:vertex-sample} Part~(\ref{part:sample-upper}), when $\opt(G) \ge {\beta \over 2}$, w.p. $1 - o(1)$, 
    \begin{align*}
      \opt(\galphap) & \le {3\log n \over \beta} \cdot  \opt(G) \le {3\log n \over \beta} \cdot \beta = 3\log n < {\gamma\log^2 n \over 2 } ~~
                        (\text{for $n$ sufficiently large}).
    \end{align*}
    Therefore, $\testerg(\galphap, {\log^2 n \over 2})$ outputs \No (and hence $\beta$ fails).
  \end{proof}

  With Claim~\ref{clm:pass} and Claim~\ref{clm:fail}, the correctness of case~$(ii)$ follows immediately.
  \begin{lemma}\label{lem:case-ii}
    If some $\beta$ pass and some $\beta$ fail, then ${\beta^* \over 2}$ is an $O(\alpha)$-approximation of $\opt(G)$.
  \end{lemma}
  \begin{proof}
    We first show that ${\beta^* \over 2} \ge {\opt(G) \over 2\alpha}$ and then show that ${\beta^* \over 2} \le \opt(G)$. To see
    ${\beta^* \over 2} \ge {\opt(G) \over 2\alpha}$, by Claim~\ref{clm:pass}, for each $\beta$ where $\beta^2 \le \opt(G)$, w.p.
    $1 - o({1 \over \log n})$, $\beta$ passes.  Therefore, we can apply a union bound over all $O(\log \alpha) (= O(\log n))$ choices of
    $\beta$, and claim that w.p. $1 - o(1)$, for all $\beta$ where $\beta^2 \le \opt(G)$, $\beta$ passes. Now, since $\beta^*$ is the
    smallest $\beta$ that fails, we have ${\beta^*}^2 \ge \opt(G)$, which implies
    $\beta^* \ge {\opt(G) \over \beta^*} \ge {\opt(G) \over \alpha}$. Hence, ${\beta^* \over 2} \ge {\opt(G) \over 2\alpha}$.
    
    To see ${\beta^* \over 2} \le \opt(G)$, we consider two cases: $\alpha \le \opt(G)$ or $\alpha > \opt(G)$. If $\alpha \le \opt(G)$, we
    trivially have ${\beta^* \over 2} \le {\alpha \over 2} \le {\opt(G) \over 2} \le \opt(G)$. Now, if $\alpha > \opt(G)$, there exists a
    unique $\beta' \in \set{\log n, 2 \log n, 2^2 \log n, \ldots, \alpha}$ where ${\beta' \over 2} \le \opt(G) \le \beta'$. Then by
    Claim~\ref{clm:fail}, w.h.p. $\beta'$ fails. Since $\beta^*$ is the smallest that fails, $\beta^* \le \beta'$. Hence
    ${\beta^* \over 2} \le {\beta' \over 2} \le \opt(G)$.
  \end{proof}

  It remains to analyze case~$(i)$.
  \begin{lemma}\label{lem:dynamic-all-pass}
    If all $\beta$ passes, $\max\set{\alpha, {{\alpha \over \log^2n} \cdot \tOPT_\alpha}}$ (denoted by $\ALG$) is an
    $O(\alpha)$-approximation of $\opt(G)$.
  \end{lemma}
  \begin{proof}
    Recall that if $\testerg(\galpha, {n \log^2n \over \alpha^2})$ returns \No, $\tOPT_\alpha$ is the estimate returned by $\testerg$, and
    if $\testerg$ returns \Yes (i.e., $\opt(\galpha) \ge \gamma \cdot {n \log^2n \over \alpha^2}$), $\tOPT_\alpha$ is defined to be
    $\gamma \cdot {n \log^2n \over \alpha^2}$.

    Intuitively speaking, $\testerg$ returning \Yes is the special case where the sampled graph $\galpha$ has a matching of size (even)
    larger than ${n \over \alpha^2}$, which implies that $\opt(G)$ itself is very large ($\Omega({n \over \alpha})$ by
    Part~(\ref{part:sample-upper}) of Lemma~\ref{lem:vertex-sample}). In this case, ${n \over \alpha}$ is always an
    $O(\alpha)$-approximation (which is basically $\alpha \cdot \tOPT_\alpha$). We should remark that the expression we use for $\ALG$ is a
    unified expression that works for both \testerg outputs \Yes and \testerg outputs \No.

    We now prove the lemma formally.  First note that for either case, $\tOPT_\alpha \le \opt(\galpha)$.  In the following, we first show
    that $\ALG \le \opt(G)$, and then show that $\ALG \ge {\opt(G) \over O(\alpha)}$.

    To see that $\ALG = \max\set{\alpha, {{\alpha \over \log^2n} \cdot \tOPT_\alpha}}\le \opt(G)$, firstly, if $\alpha > \opt(G)$, then
    there exists $\beta$ used by Algorithm~1 where ${\beta \over 2} \le \opt(G) \le \beta$, and by Claim~\ref{clm:fail}, this $\beta$ fails
    w.p. $1 - o(1)$ (which contradicts to the fact that all $\beta$ pass). Therefore, $\alpha \le \opt(G)$, and we only need to show that
    ${{\alpha \over \log^2n} \cdot \tOPT_\alpha} \le \opt(G)$. As pointed out above, $\opt(G^\alpha) \ge { \tOPT_\alpha}$. Hence, w.h.p.,
  \begin{align*}
   {\alpha \over \log^2n} \cdot \tOPT_\alpha  & \le {\alpha \over \log^2n} \cdot \opt(G^\alpha) \\
& \le {\alpha \over \log^2n} \cdot {3\log n \over \alpha }\cdot \opt(G) \tag{By Lemma~\ref{lem:vertex-sample} Part~(\ref{part:sample-upper})}\\
&\le \opt(G)
  \end{align*}
  proving $\ALG \le \opt(G)$. 

  To see that $\ALG = \max\set{\alpha, {{\alpha \over \log^2n} \cdot \tOPT_\alpha}} = {\opt(G) \over O(\alpha)}$, firstly, if $\opt(G) < \alpha^2$, trivially
  \begin{align*}
    \ALG \ge \alpha >   {\opt(G) \over \alpha}.
  \end{align*}

  Therefore, we only need to consider $\opt(G) \ge \alpha^2$. There are two cases: $\testerg(G^\alpha, {n \log^2n \over \alpha^2})$ returns
  \Yes or returns \No. If $\testerg(G^\alpha, {n \log^2n \over \alpha^2})$ returns \Yes, then
  $\tOPT_\alpha := {\gamma n \log^2 n\over \alpha^2}$, and hence
  \begin{align*}
    \ALG & \ge {\alpha \over \log^2n} \cdot \tOPT_\alpha = {\alpha \over \log^2n} \cdot   {\gamma n \log^2 n \over \alpha^2}\\
         &= {\gamma n \over \alpha} \ge {\gamma \cdot \opt(G) \over \alpha} = {\opt(G) \over O(\alpha)} 
  \end{align*}

  If $\testerg(G^\alpha, {n \log^2n \over \alpha^2})$ returns \No, then by the definition of \testerg, $\tOPT_\alpha \ge \gamma \cdot 
  \opt(\galpha)$. We have, w.h.p.,
  \begin{align*}
    \ALG & \ge {\alpha \over \log^2n} \cdot \tOPT_\alpha \ge {\alpha \over \log^2n} \cdot {\gamma \cdot \opt(G^\alpha)}\\
         &\ge {\gamma \alpha \over \log^2n} \cdot {\log^2 n\over 2\alpha^2}\cdot \opt(G)  
          = {\opt(G) \over O(\alpha)}\tag{Lemma~\ref{lem:vertex-sample} Part~(\ref{part:sample-lower})}
  \end{align*}
  Therefore, $\ALG = {\opt \over O(\alpha)}$ for all $\opt(G) \ge  \alpha$, which completes the proof.
\end{proof}
\end{proof}

\subsection{Implementing  Matching Size Testers in Graph Streams}\label{sec:matching-size-tester}

We now show how to implement matching size testers in insertion-only streams and dynamic streams. 

\begin{claim}\label{clm:tester-insert}
  A $0.5$-matching size tester $\tester_{0.5}(G,k)$ can be implemented in $\Ot(k)$ space in insertion-only streams.
\end{claim}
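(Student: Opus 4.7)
The plan is to implement $\tester_{0.5}(G,k)$ by maintaining a greedy maximal matching with its size capped at $k$, which is the textbook insertion-only streaming primitive. Concretely, I initialize $M \gets \emptyset$; upon each edge insertion $(u,v)$, if $|M| < k$ and neither $u$ nor $v$ is matched in $M$, I add $(u,v)$ to $M$; otherwise I discard the edge. At the end of the stream I output \Yes iff $|M| = k$, and otherwise output \No together with the estimate $\topt := |M|$.

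For correctness I need to verify the three cases in the definition of a $\gamma$-matching size tester with $\gamma = 0.5$. First, $|M| \le \opt(G)$ always, so if $|M| = k$ then $\opt(G) \ge k$ and outputting \Yes is valid. Second, if $|M| < k$ then the greedy procedure was never forced to discard an edge because of the size cap, so $M$ is a maximal matching of the entire graph $G$. Since any maximal matching is a $2$-approximation of the maximum matching, $|M| \ge \opt(G)/2$; combined with $|M| \le \opt(G)$, the estimate $\topt = |M|$ meets the required $\tfrac{1}{2}\opt(G) \le \topt \le \opt(G)$ guarantee. In particular, in the \No regime $\opt(G) \le k/2$ we have $|M| \le \opt(G) \le k/2 < k$ so the algorithm indeed outputs \No, and the intermediate regime $k/2 < \opt(G) < k$ yields $|M| < k$ hence \No, which is permitted.

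For the space bound, at all times $|M| \le k$, so the algorithm stores at most $k$ edges. Each edge takes $O(\log n)$ bits to represent, giving a total of $O(k \log n) = \Ot(k)$ bits as claimed. I see no real obstacle here; the only subtlety worth highlighting is that the cap at $k$ does not destroy the maximality property we rely on, precisely because whenever the cap kicks in we already have $|M| = k$ and can safely answer \Yes without needing to finish the maximal matching.
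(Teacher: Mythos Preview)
Your algorithm caps the greedy matching at size $k$, but the tester definition requires that whenever $\opt(G) \ge k$ the output is \Yes. You never verify this direction, and in fact it fails: a maximal matching is only guaranteed to have size $\ge \opt(G)/2$, so when $\opt(G) = k$ the greedy matching may stop at size $k/2 < k$ and your algorithm will wrongly answer \No. Concretely, take $G$ to be $k/2$ vertex-disjoint paths on four vertices each, and let the stream present the middle edge of every path first; then the greedy matching picks exactly those $k/2$ middle edges and becomes maximal, yet $\opt(G) = k$.

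The paper's proof avoids this by capping at $k/2$ rather than $k$: if $\opt(G) \ge k$ then any maximal matching has size at least $k/2$, so the cap is reached and the tester outputs \Yes; if $|M| < k/2$ then $M$ is truly maximal and $|M|$ is a valid $1/2$-approximate estimate. Your space analysis and the maximality-under-the-cap observation are fine; the only fix needed is to change the stopping threshold from $k$ to $k/2$ and adjust the correctness argument to cover the $\opt(G) \ge k$ regime explicitly.
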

\begin{proof}
  Simply maintain a maximal matching $M$ and stop when $k/2$ edges have been collected. If $\card{M} = k/2$, return \Yes, and otherwise
  return \No along with $\card{M}$ as the estimate.
\end{proof}

\begin{proof}[Proof of Theorem~\ref{thm:upper-intro}, Part~(1)]
  Suppose Algorithm~1 returns a $c \cdot \alpha$-approximation (Lemma~\ref{lem:alpha-apx}; $c$ is a constant). First notice that if
  $\alpha < c \log n$, $\Ot({n \over \alpha^2})$ space is enough to store a maximal matching of the input graph $G$ which is a
  $2$-approximation of $\opt(G)$.  Therefore, we only need to consider $\alpha \ge c \log n$.  Define $\widehat{\alpha} = \alpha/ c$; we have
  $\widehat{\alpha} \ge \log n$. Run Algorithm~1 for $\widehat{\alpha}$ using the tester by Claim~\ref{clm:tester-insert}.

  By Lemma~\ref{lem:alpha-apx}, Algorithm~1 returns a $c \cdot \widehat{\alpha} (= \alpha)$-approximation of $\opt(G)$ w.h.p.  On the other
  hand, Algorithm~1 invokes $\testerg(*,k)$ for $O(\log \alpha)$ times where the largest $k$ used is
  $\Ot(\max\set{{n \over \alpha^2}, 1}) = \Ot({n \over \alpha^2})$ (recall that $\alpha \le \sqrt{n}$). Therefore, by
  Claim~\ref{clm:tester-insert}, the space requirement is $\Ot({n \over \alpha^2})$.
\end{proof}

For implementing a matching size tester in dynamic streams, we use the following result
from~\cite{AssadiKLY16,ChitnisCEHMMV16}.

\begin{lemma}[\!\!\cite{AssadiKLY16,ChitnisCEHMMV16}]\label{lem:dynamic-size-k}
  There exists a constant $\gamma$ such that a randomized $\gamma$-matching size tester $\tester_{\gamma}(G,k)$ that succeeds w.p. $1-o(\frac{1}{n})$ can be implemented in dynamic streams using $\Ot(k^2)$ space. 
\end{lemma}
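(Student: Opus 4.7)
The plan is to invoke the existing algorithms of~\cite{AssadiKLY16,ChitnisCEHMMV16} essentially as a black box. Those works establish that for any target parameter $k$ there is a randomized dynamic-stream algorithm $\alg_k$ that, using $\Ot(k^{2})$ bits of space, outputs a matching $M$ of the input graph $G$ satisfying
\[
\card{M} \;\geq\; \gamma' \cdot \min\bigl(\opt(G),\,k\bigr)
\]
with constant probability, for some absolute constant $\gamma' \in (0,1)$. Standard arguments (e.g.\ the sketches there are linear sketches, and matching size only depends on whether at least $k$ disjoint edges are present) show that $\card{M} \le \opt(G)$ is automatic. I will build $\testerg$ directly on top of $\alg_k$.

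Concretely, pick a constant $\gamma < \gamma'$ and define $\testerg(G,k)$ as follows: run $\alg_k$ on the stream, obtain a matching $M$, and then output \Yes iff $\card{M} \ge \gamma k$; when $\card{M} < \gamma k$, output \No together with the estimate $\topt := \card{M}/\gamma'$ (or simply $\topt := \card{M}$, depending on which normalization one prefers).  Correctness is then a two-line case analysis: if $\opt(G) \ge k$, the guarantee of $\alg_k$ gives $\card{M} \ge \gamma' k > \gamma k$, so the tester outputs \Yes; if $\opt(G) \le \gamma k$, then trivially $\card{M} \le \opt(G) \le \gamma k < \gamma' k$, so the tester outputs \No, and moreover $\gamma \cdot \opt(G) \le \card{M} \le \opt(G)$ (using $\card{M} \ge \gamma' \opt(G)$ in the regime $\opt(G) \le k$), which is exactly the estimate guarantee required by the definition of a $\gamma$-matching size tester.

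The only remaining issue is probability amplification: the black-box algorithm succeeds with constant probability, whereas the lemma demands success probability $1 - o(1/n)$. This is handled by independently repeating the algorithm $\Theta(\log n)$ times (with fresh private randomness in each copy, noting that the linear-sketch nature of $\alg_k$ makes parallel repetition immediate in dynamic streams) and taking the majority vote on the \Yes/\No outputs, together with the median of the estimates in the \No branches. By a standard Chernoff bound, the failure probability drops to $1/n^{\Omega(1)} = o(1/n)$, and the space blows up only by a $\polylog(n)$ factor, which is absorbed by the $\Ot(\cdot)$ notation. Overall the space remains $\Ot(k^2)$.

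No step here is really an obstacle; the proof is essentially a repackaging. The only point that requires a modicum of care is choosing $\gamma$ strictly less than the approximation constant $\gamma'$ of~\cite{AssadiKLY16,ChitnisCEHMMV16} so that the \Yes/\No decision is unambiguous on both sides of the gap, and verifying that the approximation guarantee of $\alg_k$ translates cleanly into the $\topt$-estimate required by Definition of $\testerg$.
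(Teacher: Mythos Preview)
Your proposal is correct and follows essentially the same approach as the paper: both derive the tester directly from the cited results of \cite{AssadiKLY16,ChitnisCEHMMV16}, with the paper briefly describing the underlying vertex-grouping algorithm while you invoke its guarantee as a black box and add an explicit amplification step. The only minor quibbles are a boundary issue at $\card{M}=\gamma k$ (easily fixed by a strict inequality or a slightly smaller $\gamma$) and that the alternative normalization $\topt=\card{M}/\gamma'$ need not satisfy $\topt\le\opt(G)$, so you should stick with $\topt=\card{M}$.
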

One simple approach for implementing a tester for Lemma~\ref{lem:dynamic-size-k} is to randomly group the vertices into $\Theta(k)$ groups and
compute a maximum matching between the groups.  It is shown in~\cite{AssadiKLY16} that this can be done in $\Ot(k^2)$ space, while
w.h.p. the size of the maximum matching between the groups is either $\Omega(k)$ (hence tester outputs \Yes) or $\Omega(\opt)$ (hence tester
outputs \No, along with the matching size).

\begin{proof}[Proof of Theorem~\ref{thm:upper-intro}, Part~(2)]
  Suppose Algorithm~1 returns a $c \cdot \alpha$-approximation (Lemma~\ref{lem:alpha-apx}; $c$ is a constant).  First notice that if
  $\alpha < c \log n$, $\Ot({n^2 \over \alpha^4})$ bits of space is enough to maintain a counter for each edge slot in the input graph $G$,
  which can recover all edges in $G$. Therefore, we only need to consider $\alpha > c \log n$.  Define $\widehat{\alpha} = \alpha/c$; we
  have $\widehat{\alpha} \ge \log n$. Run Algorithm~1 for $\widehat{\alpha}$ using the \testerg by Lemma~\ref{lem:dynamic-size-k}. Since
  Algorithm~1 only invokes \testerg for $O(\log n)$ times, by Lemma~\ref{lem:dynamic-size-k}, w.h.p., no \testerg fails.

  Now by Lemma~\ref{lem:alpha-apx}, Algorithm~1 outputs an $c \cdot \widehat{\alpha} (= \alpha)$-approximation of $\opt(G)$.  On the other
  hand, Algorithm~1 invokes $\testerg(*,k)$ for $O(\log \alpha)$ times where the largest $k$ used is
  $\Ot(\max\set{{n \over \alpha^2}, 1}) = \Ot({n \over \alpha^2})$ (recall that $\alpha \le \sqrt{n}$). Therefore, by
  Lemma~\ref{lem:dynamic-size-k}, the space requirement is $\Ot({n^2 \over \alpha^4})$.
\end{proof}

\subsection*{Acknowledgements}
We thank Michael Kapralov for many helpful discussions. We are also thankful to anonymous reviewers of SODA for many valuable comments.
\clearpage
\bibliographystyle{acm}
\bibliography{general}

\clearpage
\appendix
\newcommand{\neighbor}[1]{\ensuremath{N(#1)}}

\section{An $O(\sqrt{n})$-Approximation Algorithm in $\polylog{(n)}$-Space}\label{app:folklore}

For completeness, we sketch the proof of a simple $O(\sqrt{n})$-approximation algorithm for matching size estimation in dynamic streams. We emphasize here that this 
algorithm is already known in the literature (see, e.g.,~\cite{KapralovKS14}) and is provided here for the sake of completeness. 

\begin{theorem}[Folklore]\label{thm:folklore}
	There exists a $\polylog{(n)}$-space algorithm that with high probability, outputs an $O(\sqrt{n})$-approximation of the maximum matching size in dynamic graph streams. 
\end{theorem}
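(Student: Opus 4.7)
The plan is to reduce matching size estimation to standard $\ell_0$-estimation in the turnstile stream model, exploiting a two-sided bound relating $\opt(G)$ to the number of edges $m$ and the number of vertices $n$. The first step is to establish
\[
	\frac{m}{2n} \leq \opt(G) \leq \min\set{m,\, n/2},
\]
where the upper bound is trivial. For the lower bound, note that if $M$ is any maximum matching in $G$, then the vertices not saturated by $M$ must form an independent set (otherwise $M$ could be extended by an edge between two unmatched vertices), so $V(M)$ is a vertex cover of $G$ of size $2\cdot \opt(G)$, which gives $m \leq \card{V(M)} \cdot (n-1) \leq 2n \cdot \opt(G)$.

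The second step is to invoke any standard $\ell_0$-estimation algorithm in the turnstile model on the dynamic stream (viewing each $\pm e$ update as a $\pm 1$ update to coordinate $e$ of the length-$\binom{n}{2}$ signed count vector of edges). Such an algorithm produces a constant-factor approximation $\widehat{m}$ to $m$ with high probability in $\polylog(n)$ space, and correctly reports $0$ when $m=0$. The algorithm then returns
\[
	\topt := \sqrt{\frac{\widehat{m}}{2n} \cdot \min\set{\widehat{m},\, n/2}},
\]
i.e., the geometric mean of the lower and upper bounds on $\opt(G)$ that $\widehat{m}$ induces (and $\topt := 0$ whenever $\widehat{m}=0$).

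For the approximation analysis I would split into two regimes based on $m$. When $m \leq n/2$, the expression simplifies to $\topt = \Theta(m/\sqrt{n})$; combined with $\opt(G) \in [m/(2n),\, m]$, both ratios $\opt(G)/\topt$ and $\topt/\opt(G)$ are $O(\sqrt{n})$. When $m > n/2$, the expression becomes $\topt = \Theta(\sqrt{m})$, and combined with $\opt(G) \in [m/(2n),\, n/2]$ both ratios are bounded by $O(n/\sqrt{m}) = O(\sqrt{n})$ since $m > n/2$. Taking both cases together yields the claimed $O(\sqrt{n})$-approximation with high probability.

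The main obstacle, such as it is, is more conceptual than technical: the inequality $\opt(G) \geq m/(2n)$ is tight up to a constant factor (a star on $n$ vertices has $m = n-1$ while $\opt(G) = 1$), so the $O(\sqrt{n})$ factor is intrinsic to any purely edge-counting approach and cannot be improved by sharper estimation of $m$ alone. Any improvement beyond this barrier requires additional structural sampling, which is precisely what the algorithms behind Theorem~\ref{thm:upper-intro} provide via vertex sampling.
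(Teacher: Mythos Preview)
Your proof is correct and takes a genuinely different route from the paper's own argument. The paper's sketch (restricted to bipartite $G(L,R,E)$) samples a set $S\subseteq L$ of $\sqrt{n}$ vertices, uses $\ell_0$-estimation to approximate $\card{N(S)}$, and then invokes the extended Hall theorem to show that $k:=\min\{\card{N(S)},\sqrt{n}\}$ satisfies $\Omega(k)\le \opt(G)\le O(k\sqrt{n})$ with high probability. Your argument instead estimates the total edge count $m$ via $\ell_0$ on the edge vector, sandwiches $\opt(G)$ by $m/(2n)\le \opt(G)\le \min\{m,n/2\}$, and returns the geometric mean of the two endpoints. Both rely on the same streaming primitive ($\ell_0$-estimation), but the combinatorial content differs: the paper uses a Hall-type witness tied to random vertex neighborhoods, whereas you use the elementary vertex-cover bound $m\le 2(n-1)\opt(G)$. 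Your version is shorter, needs no bipartiteness assumption, and avoids the probabilistic analysis of the random neighborhood; the paper's version, on the other hand, is closer in spirit to the vertex-sampling machinery that drives the stronger upper bounds of Theorem~\ref{thm:upper-intro}, so it serves there as a natural warm-up.
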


\begin{proof}[Proof Sketch]
Recall that $\opt:=\opt(G)$ denotes the cardinality of a maximum matching in the graph $G(L,R,E)$. To achieve an $O(\sqrt{n})$-approximation to $\OPT$, we will establish a simple
connection between the cardinality of a maximum matching and the number of neighbors of a set of $\sqrt{n}$ random vertices chosen from $L$.  Let
$S \subseteq L$ be a random set of size $\sqrt{n}$. We denote by $\neighbor{S}$ the set of neighbors of $S$ in the final graph (at the end
of the stream), and let $k = \min\set{\card{\neighbor{S}}, \sqrt{n}}$.  We show that w.h.p
$\Omega(k) \le \OPT \le O(k \sqrt{n})$. Using the $\ell_0$-estimation algorithm of Kane~\etal~\cite{kaneNW2010}, we can estimate $\card{\neighbor{S}}$ to within a
constant factor using $\polylog{(n)}$ space with success probability $.99$ in dynamic steams.  This, together with the aforementioned result, suffices to achieve an
$O(\sqrt{n})$-approximation of the matching size.  Note that the error can be made one-sided in a straightforward way, and the overall probability
of success can be boosted to $(1-1/n)$ by running $O(\log{n})$ parallel copies and taking the median value.

We now briefly explain how the aforementioned relation between $k$ and $\OPT$ is established.  We show that there exist two constants
$c_1, c_2 > 1$ such that for any $k \in [\sqrt{n}]$, if $\OPT <k/c_1$, $\card{\neighbor{S}}$ is less than $k$, and if
$\OPT > c_2 k \sqrt{n}$, $\card{\neighbor{S}}$ is larger than $k$, each with probability at least $0.99$.  If $\OPT < k/c_1$, then by the
extended Hall's Theorem, there exists a set $S'$ of $(n - k/c_1)$ vertices in $L$ that has at most $k/c_1 (< k)$ neighbors. Since
$k\le \sqrt{n}$, the size of $S'$ is large enough to ensure that with a constant probability (which can be made to $0.99$ by choosing a
large enough constant $c_1$), the $\sqrt{n}$ chosen vertices in $S$ are a subset of $S'$, and hence $\card{N(S)} < k$.

On the other hand, if $\OPT > c_2 \cdot k \sqrt{n}$, then there exists two subset of vertices $A \subseteq L$ and $B\subseteq R$, with $\card{A} = \card{B} \geq c_2 \cdot k \sqrt{n}$ such that there exists a perfect matching
between $A$ and $B$ in $G$. Consequently, with a constant probability (which can be made to $0.99$ by choosing a
large enough constant $c_2$), $S$ contains more than $k$ vertices of $A$, and the perfect matching ensures that the number of neighbors of $S$ is more than $k$.
\end{proof}

\end{document}